\newcolumntype{P}[1]{>{\centering\arraybackslash}p{#1}}
\newcolumntype{M}[1]{>{\centering\arraybackslash}m{#1}}
\newtheorem{proposition}{Proposition}
\Crefname{equation}{Eq.\!}{Eqs.\!}
\Crefname{figure}{Fig.\!}{Figs.\!}
\Crefname{tabular}{Tab.\!}{Tabs.\!}
\Crefname{section}{Section\!}{Sections.\!}
\newcommand*{\addFileDependency}[1]{% argument=file name and extension
  \typeout{(#1)}
  \@addtofilelist{#1}
  \IfFileExists{#1}{}{\typeout{No file #1.}}
}
\def\nb0{{\mathbf{0}}}
\def\nb1{{\mathbf{1}}}
\newtheorem{lemma}{Lemma}
\newtheorem{theorem}{Theorem}
\newtheorem{corollary}{Corollary}
\newtheorem{remark}{Remark}
\newtheorem{assumption}{Assumption}
\begin{document}
%\pagenumbering{gobble}
\graphicspath{{./Figures/}}
	\begin{acronym}

\acro{5G-NR}{5G New Radio}
\acro{3GPP}{3rd Generation Partnership Project}
\acro{ABS}{aerial base station}
\acro{AC}{address coding}
\acro{ACF}{autocorrelation function}
\acro{ACR}{autocorrelation receiver}
\acro{ADC}{analog-to-digital converter}
\acrodef{aic}[AIC]{Analog-to-Information Converter}     
\acro{AIC}[AIC]{Akaike information criterion}
\acro{aric}[ARIC]{asymmetric restricted isometry constant}
\acro{arip}[ARIP]{asymmetric restricted isometry property}

\acro{ARQ}{Automatic Repeat Request}
\acro{AUB}{asymptotic union bound}
\acrodef{awgn}[AWGN]{Additive White Gaussian Noise}     
\acro{AWGN}{additive white Gaussian noise}

\acro{APSK}[PSK]{asymmetric PSK} 

\acro{waric}[AWRICs]{asymmetric weak restricted isometry constants}
\acro{warip}[AWRIP]{asymmetric weak restricted isometry property}
\acro{BCH}{Bose, Chaudhuri, and Hocquenghem}        
\acro{BCHC}[BCHSC]{BCH based source coding}
\acro{BEP}{bit error probability}
\acro{BFC}{block fading channel}
\acro{BG}[BG]{Bernoulli-Gaussian}
\acro{BGG}{Bernoulli-Generalized Gaussian}
\acro{BPAM}{binary pulse amplitude modulation}
\acro{BPDN}{Basis Pursuit Denoising}
\acro{BPPM}{binary pulse position modulation}
\acro{BPSK}{Binary Phase Shift Keying}
\acro{BPZF}{bandpass zonal filter}
\acro{BSC}{binary symmetric channels}              
\acro{BU}[BU]{Bernoulli-uniform}
\acro{BER}{bit error rate}
\acro{BS}{base station}
\acro{BW}{BandWidth}
\acro{BLLL}{ binary log-linear learning }

\acro{CP}{Cyclic Prefix}
\acrodef{cdf}[CDF]{cumulative distribution function}   
\acro{CDF}{Cumulative Distribution Function}
\acrodef{c.d.f.}[CDF]{cumulative distribution function}
\acro{CCDF}{complementary cumulative distribution function}
\acrodef{ccdf}[CCDF]{complementary CDF}               
\acrodef{c.c.d.f.}[CCDF]{complementary cumulative distribution function}
\acro{CD}{cooperative diversity}

\acro{CDMA}{Code Division Multiple Access}
\acro{ch.f.}{characteristic function}
\acro{CIR}{channel impulse response}
\acro{cosamp}[CoSaMP]{compressive sampling matching pursuit}
\acro{CR}{cognitive radio}
\acro{cs}[CS]{compressed sensing}                   
\acrodef{cscapital}[CS]{Compressed sensing} %will not include it in the list
\acrodef{CS}[CS]{compressed sensing}
\acro{CSI}{channel state information}
\acro{CCSDS}{consultative committee for space data systems}
\acro{CC}{convolutional coding}
\acro{Covid19}[COVID-19]{Coronavirus disease}

\acro{DAA}{detect and avoid}
\acro{DAB}{digital audio broadcasting}
\acro{DCT}{discrete cosine transform}
\acro{dft}[DFT]{discrete Fourier transform}
\acro{DR}{distortion-rate}
\acro{DS}{direct sequence}
\acro{DS-SS}{direct-sequence spread-spectrum}
\acro{DTR}{differential transmitted-reference}
\acro{DVB-H}{digital video broadcasting\,--\,handheld}
\acro{DVB-T}{digital video broadcasting\,--\,terrestrial}
\acro{DL}{DownLink}
\acro{DSSS}{Direct Sequence Spread Spectrum}
\acro{DFT-s-OFDM}{Discrete Fourier Transform-spread-Orthogonal Frequency Division Multiplexing}
\acro{DAS}{Distributed Antenna System}
\acro{DNA}{DeoxyriboNucleic Acid}

\acro{EC}{European Commission}
\acro{EED}[EED]{exact eigenvalues distribution}
\acro{EIRP}{Equivalent Isotropically Radiated Power}
\acro{ELP}{equivalent low-pass}
\acro{eMBB}{Enhanced Mobile Broadband}
\acro{EMF}{ElectroMagnetic Field}
\acro{EU}{European union}
\acro{EI}{Exposure Index}
\acro{eICIC}{enhanced Inter-Cell Interference Coordination}

\acro{FC}[FC]{fusion center}
\acro{FCC}{Federal Communications Commission}
\acro{FEC}{forward error correction}
\acro{FFT}{fast Fourier transform}
\acro{FH}{frequency-hopping}
\acro{FH-SS}{frequency-hopping spread-spectrum}
\acrodef{FS}{Frame synchronization}
\acro{FSsmall}[FS]{frame synchronization}  
\acro{FDMA}{Frequency Division Multiple Access}

\acro{GA}{Gaussian approximation}
\acro{GF}{Galois field }
\acro{GG}{Generalized-Gaussian}
\acro{GIC}[GIC]{generalized information criterion}
\acro{GLRT}{generalized likelihood ratio test}
\acro{GPS}{Global Positioning System}
\acro{GMSK}{Gaussian Minimum Shift Keying}
\acro{GSMA}{Global System for Mobile communications Association}
\acro{GS}{ground station}
\acro{GMG}{ Grid-connected MicroGeneration}

\acro{HAP}{high altitude platform}
\acro{HetNet}{Heterogeneous network}

\acro{IDR}{information distortion-rate}
\acro{IFFT}{inverse fast Fourier transform}
\acro{iht}[IHT]{iterative hard thresholding}
\acro{i.i.d.}{independent, identically distributed}
\acro{IoT}{Internet of Things}                      
\acro{IR}{impulse radio}
\acro{lric}[LRIC]{lower restricted isometry constant}
\acro{lrict}[LRICt]{lower restricted isometry constant threshold}
\acro{ISI}{intersymbol interference}
\acro{ITU}{International Telecommunication Union}
\acro{ICNIRP}{International Commission on Non-Ionizing Radiation Protection}
\acro{IEEE}{Institute of Electrical and Electronics Engineers}
\acro{ICES}{IEEE international committee on electromagnetic safety}
\acro{IEC}{International Electrotechnical Commission}
\acro{IARC}{International Agency on Research on Cancer}
\acro{IS-95}{Interim Standard 95}

\acro{KPI}{Key Performance Indicator}

\acro{LEO}{low earth orbit}
\acro{LF}{likelihood function}
\acro{LLF}{log-likelihood function}
\acro{LLR}{log-likelihood ratio}
\acro{LLRT}{log-likelihood ratio test}
\acro{LoS}{Line-of-Sight}
\acro{LRT}{likelihood ratio test}
\acro{wlric}[LWRIC]{lower weak restricted isometry constant}
\acro{wlrict}[LWRICt]{LWRIC threshold}
\acro{LPWAN}{Low Power Wide Area Network}
\acro{LoRaWAN}{Low power long Range Wide Area Network}
\acro{NLoS}{Non-Line-of-Sight}
\acro{LiFi}[Li-Fi]{light-fidelity}
 \acro{LED}{light emitting diode}
 \acro{LABS}{LoS transmission with each ABS}
 \acro{NLABS}{NLoS transmission with each ABS}

\acro{MB}{multiband}
\acro{MC}{macro cell}
\acro{MDS}{mixed distributed source}
\acro{MF}{matched filter}
\acro{m.g.f.}{moment generating function}
\acro{MI}{mutual information}
\acro{MIMO}{Multiple-Input Multiple-Output}
\acro{MISO}{multiple-input single-output}
\acrodef{maxs}[MJSO]{maximum joint support cardinality}                       
\acro{ML}[ML]{maximum likelihood}
\acro{MMSE}{minimum mean-square error}
\acro{MMV}{multiple measurement vectors}
\acrodef{MOS}{model order selection}
\acro{M-PSK}[${M}$-PSK]{$M$-ary phase shift keying}                       
\acro{M-APSK}[${M}$-PSK]{$M$-ary asymmetric PSK} 
\acro{MP}{ multi-period}
\acro{MINLP}{mixed integer non-linear programming}

\acro{M-QAM}[$M$-QAM]{$M$-ary quadrature amplitude modulation}
\acro{MRC}{maximal ratio combiner}                  
\acro{maxs}[MSO]{maximum sparsity order}                                      
\acro{M2M}{Machine-to-Machine}                                                
\acro{MUI}{multi-user interference}
\acro{mMTC}{massive Machine Type Communications}      
\acro{mm-Wave}{millimeter-wave}
\acro{MP}{mobile phone}
\acro{MPE}{maximum permissible exposure}
\acro{MAC}{media access control}
\acro{NB}{narrowband}
\acro{NBI}{narrowband interference}
\acro{NLA}{nonlinear sparse approximation}
\acro{NLOS}{Non-Line of Sight}
\acro{NTIA}{National Telecommunications and Information Administration}
\acro{NTP}{National Toxicology Program}
\acro{NHS}{National Health Service}

\acro{LOS}{Line of Sight}

\acro{OC}{optimum combining}                             
\acro{OC}{optimum combining}
\acro{ODE}{operational distortion-energy}
\acro{ODR}{operational distortion-rate}
\acro{OFDM}{Orthogonal Frequency-Division Multiplexing}
\acro{omp}[OMP]{orthogonal matching pursuit}
\acro{OSMP}[OSMP]{orthogonal subspace matching pursuit}
\acro{OQAM}{offset quadrature amplitude modulation}
\acro{OQPSK}{offset QPSK}
\acro{OFDMA}{Orthogonal Frequency-division Multiple Access}
\acro{OPEX}{Operating Expenditures}
\acro{OQPSK/PM}{OQPSK with phase modulation}

\acro{PAM}{pulse amplitude modulation}
\acro{PAR}{peak-to-average ratio}
\acrodef{pdf}[PDF]{probability density function}                      
\acro{PDF}{probability density function}
\acrodef{p.d.f.}[PDF]{probability distribution function}
\acro{PDP}{power dispersion profile}
\acro{PMF}{probability mass function}                             
\acrodef{p.m.f.}[PMF]{probability mass function}
\acro{PN}{pseudo-noise}
\acro{PPM}{pulse position modulation}
\acro{PRake}{Partial Rake}
\acro{PSD}{power spectral density}
\acro{PSEP}{pairwise synchronization error probability}
\acro{PSK}{phase shift keying}
\acro{PD}{power density}
\acro{8-PSK}[$8$-PSK]{$8$-phase shift keying}
\acro{PPP}{Poisson point process}
\acro{PCP}{Poisson cluster process}
 
\acro{FSK}{Frequency Shift Keying}

\acro{QAM}{Quadrature Amplitude Modulation}
\acro{QPSK}{Quadrature Phase Shift Keying}
\acro{OQPSK/PM}{OQPSK with phase modulator }

\acro{RD}[RD]{raw data}
\acro{RDL}{"random data limit"}
\acro{ric}[RIC]{restricted isometry constant}
\acro{rict}[RICt]{restricted isometry constant threshold}
\acro{rip}[RIP]{restricted isometry property}
\acro{ROC}{receiver operating characteristic}
\acro{rq}[RQ]{Raleigh quotient}
\acro{RS}[RS]{Reed-Solomon}
\acro{RSC}[RSSC]{RS based source coding}
\acro{r.v.}{random variable}                               
\acro{R.V.}{random vector}
\acro{RMS}{root mean square}
\acro{RFR}{radiofrequency radiation}
\acro{RIS}{Reconfigurable Intelligent Surface}
\acro{RNA}{RiboNucleic Acid}
\acro{RRM}{Radio Resource Management}
\acro{RUE}{reference user equipments}
\acro{RAT}{radio access technology}
\acro{RB}{resource block}

\acro{SA}[SA-Music]{subspace-augmented MUSIC with OSMP}
\acro{SC}{small cell}
\acro{SCBSES}[SCBSES]{Source Compression Based Syndrome Encoding Scheme}
\acro{SCM}{sample covariance matrix}
\acro{SEP}{symbol error probability}
\acro{SG}[SG]{sparse-land Gaussian model}
\acro{SIMO}{single-input multiple-output}
\acro{SINR}{signal-to-interference plus noise ratio}
\acro{SIR}{signal-to-interference ratio}
\acro{SISO}{Single-Input Single-Output}
\acro{SMV}{single measurement vector}
\acro{SNR}[\textrm{SNR}]{signal-to-noise ratio} 
\acro{sp}[SP]{subspace pursuit}
\acro{SS}{spread spectrum}
\acro{SW}{sync word}
\acro{SAR}{specific absorption rate}
\acro{SSB}{synchronization signal block}
\acro{SR}{shrink and realign}

\acro{tUAV}{tethered Unmanned Aerial Vehicle}
\acro{TBS}{terrestrial base station}

\acro{uUAV}{untethered Unmanned Aerial Vehicle}
\acro{PDF}{probability density functions}

\acro{PL}{path-loss}

\acro{TH}{time-hopping}
\acro{ToA}{time-of-arrival}
\acro{TR}{transmitted-reference}
\acro{TW}{Tracy-Widom}
\acro{TWDT}{TW Distribution Tail}
\acro{TCM}{trellis coded modulation}
\acro{TDD}{Time-Division Duplexing}
\acro{TDMA}{Time Division Multiple Access}
\acro{Tx}{average transmit}

\acro{UAV}{Unmanned Aerial Vehicle}
\acro{uric}[URIC]{upper restricted isometry constant}
\acro{urict}[URICt]{upper restricted isometry constant threshold}
\acro{UWB}{ultrawide band}
\acro{UWBcap}[UWB]{Ultrawide band}   
\acro{URLLC}{Ultra Reliable Low Latency Communications}
         
\acro{wuric}[UWRIC]{upper weak restricted isometry constant}
\acro{wurict}[UWRICt]{UWRIC threshold}                
\acro{UE}{User Equipment}
\acro{UL}{UpLink}

\acro{WiM}[WiM]{weigh-in-motion}
\acro{WLAN}{wireless local area network}
\acro{wm}[WM]{Wishart matrix}                               
\acroplural{wm}[WM]{Wishart matrices}
\acro{WMAN}{wireless metropolitan area network}
\acro{WPAN}{wireless personal area network}
\acro{wric}[WRIC]{weak restricted isometry constant}
\acro{wrict}[WRICt]{weak restricted isometry constant thresholds}
\acro{wrip}[WRIP]{weak restricted isometry property}
\acro{WSN}{wireless sensor network}                        
\acro{WSS}{Wide-Sense Stationary}
\acro{WHO}{World Health Organization}
\acro{Wi-Fi}{Wireless Fidelity}

\acro{sss}[SpaSoSEnc]{sparse source syndrome encoding}

\acro{VLC}{Visible Light Communication}
\acro{VPN}{Virtual Private Network} 
\acro{RF}{Radio Frequency}
\acro{FSO}{Free Space Optics}
\acro{IoST}{Internet of Space Things}

\acro{GSM}{Global System for Mobile Communications}
\acro{2G}{Second-generation cellular network}
\acro{3G}{Third-generation cellular network}
\acro{4G}{Fourth-generation cellular network}
\acro{5G}{Fifth-generation cellular network}	
\acro{gNB}{next-generation Node-B Base Station}
\acro{NR}{New Radio}
\acro{UMTS}{Universal Mobile Telecommunications Service}
\acro{LTE}{Long Term Evolution}

\acro{QoS}{Quality of Service}
\end{acronym}
	
	%% EMF definitions
\newcommand{\SAR} {\mathrm{SAR}}
\newcommand{\WBSAR} {\mathrm{SAR}_{\mathsf{WB}}}
\newcommand{\gSAR} {\mathrm{SAR}_{10\si{\gram}}}
\newcommand{\Sab} {S_{\mathsf{ab}}}
\newcommand{\Eavg} {E_{\mathsf{avg}}}
\newcommand{\ft}{f_{\textsf{th}}}
\newcommand{\alphatf}{\alpha_{24}}

\title{Fundamental limits via CRB of semi-blind channel estimation in Massive MIMO systems
}
\author{
Xue Zhang, Abla Kammoun, {\em Member, IEEE}, and Mohamed-Slim Alouini, {\em Fellow, IEEE}
\thanks{The authors are with Computer, Electrical and Mathematical Sciences
and Engineering (CEMSE) Division, Department of Electrical and Computer
Engineering, King Abdullah University of Science and Technology (KAUST), Thuwal 23955-6900, Saudi Arabia. (e-mail: xue.zhang@kaust.edu.sa; abla.kammoun@kaust.edu.sa; slim.alouini@kaust.edu.sa). 
}
\vspace{-8mm}
}
\maketitle

\vspace{-0.8cm}

\begin{abstract}
This paper investigates the asymptotic behavior of the deterministic and stochastic Cram\'er-Rao Bounds (CRB) for semi-blind channel estimation in massive multiple-input multiple-output (MIMO) systems. We derive and analyze mathematically tractable expressions for both metrics under various asymptotic regimes, which govern the growth rates of the number of antennas, the number of users, the training sequence length, and the transmission block length. Unlike the existing work, our results show that the CRB can be made arbitrarily small as the transmission block length increases, but only when the training sequence length grows at the same rate and the number of users remains fixed. However, if the number of training sequences remains proportional to the number of users, the channel estimation error is always lower-bounded by a non-vanishing constant. Numerical results are presented to support our findings and demonstrate the advantages of semi-blind channel estimation in reducing the required number of training sequences.
\end{abstract}

\begin{IEEEkeywords}
CRB, semi-blind, channel estimation, random matrix theory. 
\end{IEEEkeywords}

\section{Introduction}
\label{sec_introdution}
In multiple-input multiple-output (MIMO) systems, a crucial factor in realizing high capacity gains is accurate channel state information (CSI), which relies heavily on the precision of channel estimation. Consequently, channel estimation is vital for obtaining reliable CSI in wireless communications. One approach is data-aided estimation, such as blind channel estimation, which relies solely on the received data signal~\cite{noh2014new}. This method fully exploits the statistical properties of the unknown transmitted symbols, enhancing spectral efficiency and reducing overhead. However, blind channel estimation introduces ambiguities in identifying the channel coefficients and increases computational complexity~\cite{tong1994blind}.
Alternatively, channel estimation can be achieved using training sequences, known as pilot symbols, which are inserted into the data frames as prior information~\cite{hassibi2003much}. This method is widely adopted for its robustness and low computational complexity. The primary drawback, however, is the additional bandwidth consumption required for the pilot symbols.

\par 

To address the aforementioned challenges, semi-blind channel estimation utilizes information from unknown data symbols alongside pilot sequences. This approach enables the estimation of channel coefficients with comparable accuracy while requiring fewer pilot symbols~\cite{srinivas2019iterative}. Unlike blind channel estimation, which relies solely on data symbols, semi-blind channel estimation uses a limited number of pilot symbols with blind methods to resolve ambiguities arising from unknown data symbols and enhance the quality of channel coefficient estimates~\cite{aldana2003channel}. Channel estimation in massive MIMO systems presents significant challenges due to the high number of parameters that need to be estimated. To address this, massive MIMO systems predominantly utilize time-division duplex (TDD) protocols, rather than frequency-division duplex (FDD), to reduce the required overhead~\cite{marzetta2010noncooperative,rusek2012scaling}. TDD systems not only reduce overhead but also provide additional advantages over FDD, such as improved uplink detection and enhanced downlink precoding accuracy, owing to the assumption of reciprocity between the uplink and downlink channels~\cite{nayebi2017semi}. 
 In this context, semi-blind channel estimation, which combines information from both training and data symbols, emerges as a promising technique for further improving channel estimation quality without increasing the training sequence overhead.
 
\par

A substantial body of literature investigates semi-blind channel estimation schemes in multi-user MIMO systems. Among these, maximum-likelihood estimation techniques stand out as prominent methods, particularly expectation-maximization (EM) algorithms. In \cite{aldana2003channel}, the authors introduced a frequency-domain EM algorithm for estimating both blind and semi-blind channels for each user in an underdetermined multiple-input single-output (MISO) system. The study in \cite{nayebi2017semi} examined two EM-based schemes for semi-blind channel estimation under commonly accepted assumptions. Additionally, \cite{al2021semi} utilized eigenvalue decomposition to enhance the computational efficiency of the EM algorithm with a Gaussian prior, while also leveraging the actual discrete prior of the data symbols to derive a tractable version of the EM algorithm.
Iterative techniques can also be employed to compute maximum-likelihood channel estimators. In this context, the authors in \cite{abuthinien2008semi} proposed an iterative two-level optimization loop for jointly estimating channel coefficients and training sequences in MIMO systems.
Besides maximum-likelihood estimators, another class of methods relies on blind techniques, which can be viewed as extensions of blind subspace methods that incorporate information from training sequences to address the matrix ambiguity inherent in blind methods. For instance, in \cite{jagannatham2006whitening}, the MIMO channel matrix was decomposed into the product of two matrices: one being a whitening matrix obtained through blind estimation, and the other a rotational unitary matrix estimated with the assistance of training symbols.
Furthermore, the authors in \cite{rekik2024fast} devised a blind and semi-blind subspace-based algorithm aimed at reducing computational costs, accelerating convergence, and ensuring accurate estimates with a smaller sample size. In \cite{lawal2023semi}, a semi-blind structured signal subspace algorithm was developed to reduce both computational complexity and the training sequence size.

%Currently, many works have investigated semi-blind channel estimation schemes. For example, in~\cite{aldana2003channel}, the authors estimated blind and semi-blind channels for each user by applying an expectation-maximization (EM) algorithm in the frequency domain for an underdetermined multiple-input single-output (MISO) system. In~\cite{jagannatham2006whitening}, the MIMO channel matrix was decomposed into the product of two matrices, where one matrix was a whitening matrix obtained through blind estimation, and the other matrix was a rotational unitary matrix estimated with the assistance of training symbols. The authors in~\cite{abuthinien2008semi} proposed an iterative two-level optimization loop to jointly estimate channel coefficients and training sequences for MIMO systems. In~\cite{nayebi2017semi}, two schemes based on the EM algorithm were investigated for semi-blind channel estimation under two sets of commonly used assumptions. In~\cite{al2021semi}, eigenvalue decomposition was employed to reduce the computational complexity of the EM algorithm with a Gaussian prior. Additionally, the actual discrete prior of the data symbols was used to derive a tractable EM algorithm. The authors in \cite{rekik2024fast} devised a blind and semi-blind subspace-based algorithm aimed at reducing computational costs, accelerating convergence, and ensuring accurate estimates with a reduced sample size. In~\cite{lawal2023semi}, a semi-blind structures signal subspace algorithm was developed to reduce the computational complexity and the training sequence size. 

\par

To analytically assess the performance of the aforementioned algorithms, Cram\'er-Rao bounds (CRBs) have been employed for semi-blind channel estimation. We distinguish between two types of CRBs: the deterministic CRB, which assumes that data symbols are deterministic, and the stochastic CRB, which assumes that data is drawn from a specific distribution \cite{stoica1990performance,sandkuhler1987accuracy}. In single-input multiple-output (SIMO) systems, the authors in \cite{de1997cramer} investigated and compared the CRBs for semi-blind, blind, and pilot symbol-based channel estimation. Additionally, the asymptotic performance of semi-blind estimators was examined in \cite{de1997cramer,de1997asymptotic}, focusing on SIMO systems when the lengths of the training sequences and data symbols approach infinity.

In the context of massive MIMO systems, more recent work by \cite{nayebi2017semi} studied the asymptotic behavior of CRBs with an unlimited number of antennas. The results indicate that as the number of antennas increases, the deterministic CRB converges to that of a system where all data symbols are known, while the stochastic CRB approaches the CRB of a system utilizing orthogonal training sequences that span the entire transmission block. However, these findings are not comprehensive. A notable limitation is that this study only considers the scenario where the number of antennas grows infinitely large, neglecting other cases where the transmission block length and the number of users scale proportionally with the number of antennas. This omission is significant, especially given that many practical massive MIMO systems operate in regimes where the channel remains constant over hundreds or even tens of thousands of samples~\cite{bjornson2017massive}. In such cases, the CRBs expressions provided in \cite{nayebi2017semi} are high-dimensional and analytically intractable, thus simplifying these expressions and
establishing their connections to random matrix theory remains a significant challenge.

To address this gap, we propose a more comprehensive approach for analyzing deterministic and stochastic CRBs using random matrix theory. Overall, our contributions are summarized as follows:
\begin{itemize}
    \item We derive mathematically tractable formulas for the normalized traces of the deterministic and stochastic CRBs, referred to as averaged CRBs. Specifically, we show that the average deterministic CRB depends exclusively on the data matrix, which is assumed to be deterministic, whereas the average stochastic CRB depends on the channel matrix. Our derived formulas are exact and are expressed in a significantly simpler form than those previously obtained in \cite{nayebi2017semi}. 
    \item We analyze the asymptotic behaviors of both the average deterministic and stochastic CRBs by using Stieltjes transform across various regimes related to the number of users, the number of antennas, the number of pilot symbols, and the total length of the data transmission block. Our findings indicate that the average deterministic and stochastic CRBs do not decrease indefinitely as the total block length increases unless the number of training symbols grows at a comparable rate and the number of users remains fixed. This suggests that, while incorporating data transmission in semi-blind channel estimation is advantageous, it does not lead to further reductions in channel estimation error without incurring the additional cost of more training symbols. Furthermore, our analysis reveals a power-dependent behavior: when the training symbols are allocated more power than the unknown data symbols, increasing the number of training symbols enhances channel estimation performance. Conversely, when the training symbols have lower power, it becomes more effective to reduce the number of training symbols and rely more on the data symbols for semi-blind channel estimation.
    \item We use numerical results to validate the accuracy of the proposed asymptotic expressions for the average deterministic and stochastic CRBs and demonstrate practical applications of these obtained CRBs.
    % This indicates that when the number of unknown data symbols is sufficiently large, channel estimates rely on the training sequences, which inform the design of pilot symbols.  
\end{itemize}

\textit{Organization}: The remainder of this paper is organized as follows. Section~\ref{sec_model} introduces the system model. In Section~\ref{sec:deterministic}, we derive the closed-form expression of the average deterministic CRB and analyze its asymptotic behavior in different asymptotic regimes. In Section~\ref{sec_SCRB}, we derive mathematically tractable formulas for the average stochastic CRB and analyze its asymptotic behavior in different asymptotic regimes. Simulation results are provided in Section~\ref{sec:sim}. Finally, Section~\ref{sec_con} concludes the article.

\textit{Notations}: The vectors (matrices) are denoted by lower-case (upper-case) boldface characters. $\mathbf{I}_{M}$ denotes the $M\times M$ identity matrix, and $\mathbf{0}$ stands for a zero matrix with appropriate dimension. $\mathbb{E}\{\cdot\}$ represents the expectation operator. The trace of $\mathbf{A}$ are denoted by $\mathrm{tr}(\mathbf{A})$. Superscripts $T$, $H$, and $\ast$ denote the transpose, conjugate transpose, and complex conjugate, respectively. $[\mathbf{A}]_{i,j}$ stands for the $(i,j)$ entry of $\mathbf{A}$, especially, $[\mathbf{A}]_{i}$ represents the $(i,i)$ entry of $\mathbf{A}$. $\mathrm{diag}(\mathbf{a})$ represents a diagonal matrix that employs the elements of $\mathbf{a}$ as its diagonal entries. $|\cdot|$ denotes the absolute value. $\otimes$ represents the Kronecker product. $\|\cdot\|_2$  stand for the $l_2$ norm. $\mathbf{A}\succeq\mathbf{0}$ is equivalent to $\mathbf{A}$ being positive semidefinite. $\mathrm{det}(\mathbf{A})$ denotes the determinant of $\mathbf{A}$. $\overset{a.s.}{\longrightarrow}$ represents almost sure (a.s.) convergence. $\delta(x)$ denotes the Dirac measure at point $x$. $\mathrm{Re}\{\cdot\}$ and $\mathrm{Im}\{\cdot\}$ represent real and imaginary parts, respectively. %$\mathrm{blkdiag}(\cdot)$ means a diagonal matrix and the submatrices in the brackets constitute its diagonal entries. 
\section{System Model}
\label{sec_model}
We consider a single-cell system composed of a base station (BS) equipped with $M$ antennas to serve $K$ single-antenna users where $K<M$. The channel matrix between the BS and users is expressed as $\mathbf{G}=\mathbf{HB}^{1/2}$, where $\mathbf{H}=[\mathbf{h}_1,\ldots,\mathbf{h}_K]$ is a matrix representing small scale fading, and $\mathbf{B}=\mathrm{diag}([\rho_1,\ldots,\rho_k])$ is a diagonal matrix with $\rho_k$ $(k=1,\ldots, K)$ being the large scale fading coefficient between the BS and the $k$th user. We consider a time block fading model, where columns of $\mathbf{H}$ are constant during a block of $N$ symbols, and change to independent values at the next coherence block.  
%We consider a single cell with a BS having $M$ antennas and $K$ single-antenna users, where $K<M$.
%We study uplink transmission in a communication system with TDD protocol. We consider a flat fading channel model for each orthogonal frequency-division multiplexing (OFDM) subcarrier. 

Moreover, we consider uplink transmission where users send $L$ known training sequences followed by $(N-L)$ unknown data symbols. The uplink signal received by BS at time $n$ is given by 
\begin{align}\label{received signal}
    \mathbf{y}(n) = \mathbf{G}\mathbf{s}(n) + \mathbf{v}(n),
\end{align}
where $\mathbf{s}(n)=[s_1(n),\ldots,s_K(n)]^T$ for $n=0,\ldots,L-1$ are known training sequences, $\mathbf{s}(n)=[s_1(n),\ldots,s_K(n)]^T$ for $n=L,\ldots,N-1$ are the unknown data symbols with unit power $\mathbb{E}\left\{\mathbf{s}(n)\mathbf{s}(n)^H\right\}=\mathbf{I}_K$, and $\mathbf{v}(n)\sim\mathcal{CN}(\pmb{0},\sigma_v^2\mathbf{I}_M)$ is additive Gaussian noise vector. Let $\mathbf{S}_p=[\mathbf{s}(0),\ldots,\mathbf{s}(L-1)]$ and $\mathbf{S}_d=[\mathbf{s}(L),\ldots,\mathbf{s}(N-1)]$ represent the known training sequences and unknown data symbols during the channel coherence period, respectively. Hence, the complete transmit symbols are given by $\mathbf{S}=[\mathbf{S}_p,\mathbf{S}_d]$. Correspondingly, the received signal and noise can be given by $\mathbf{Y}=[\mathbf{Y}_p,\mathbf{Y}_d]$ and $\mathbf{V}=[\mathbf{V}_p,\mathbf{V}_d]$, respectively. %After analyzing~(\ref{received signal}), we can provide the following lemma~\ref{lemma1}.

In this work, we adopt a semi-blind channel estimation framework, which leverages both the known training sequences $\mathbf{S}_p$ and the statistical structure of the unknown data symbols $\mathbf{S}_d$ to estimate the channel $\mathbf{G}$ in (\ref{received signal}). Within this framework, we derive mathematically tractable expressions for the CRB for semi-blind channel estimation under two key assumptions. The first assumption models the data signal matrix ${\bf S}_d$ as an unknown deterministic quantity, resulting in the deterministic CRB. The second assumption considers ${\bf S}_d$ as a Gaussian matrix. While a similar study has been carried out in \cite{nayebi2017semi}, the obtained CRB expressions are not readily amenable to analysis in the random matrix regime, due to their high dimensionality, mathematical complexity, and the lack of a direct connection to well-established random objects, in particular the Stieltjes transform of empirical measures of some random matrices. In this work, we show that by simplifying the derivations, we can obtain more practical and tractable formulas that are suitable for study in the random matrix regime.

\par

%\section{Cram\'{e}r-Rao bounds for semi-blind channel estimation}
\section{Analysis of deterministic Cram\'{e}r Rao bound for semi-blind channel estimation}
\label{sec:deterministic}
\subsection{Derivation of deterministic Cram\'{e}r Rao bound}
In computing the deterministic CRB, the data matrix ${\bf S}_d$ is treated as an unknown deterministic matrix. Under this model, we start by modeling the received data symbols as independent complex Gaussian signals with the following parameters:
$$
{\bf y}(n)\sim \mathcal{C}\mathcal{N}({\bf G}{\bf s}(n), \sigma_v^2\mathbf{I}_M), \ \ n=0,\cdots, N-1,
$$
where ${\bf s}(n)$ refers to a training symbol if $n=0,\cdots,L-1$ or ${\bf s}(n)$ refers to an unknown data symbols otherwise. The following theorem provides expressions for the deterministic CRB. Although the real CRB expression has been derived in \cite{nayebi2017semi}, its high-dimensional matrix form makes it difficult to analyze directly using random matrix theory. To address this, we rederive the complex CRB with Kronecker product, which yields an equivalent lower bound to that in \cite{nayebi2017semi} but in a more tractable form. For completeness, we restate the result and provide a detailed proof.
\begin{theorem}[\cite{nayebi2017semi}]
\label{th:det_crb}
Define matrices $\boldsymbol{\Lambda}$ and $(\boldsymbol{\Omega}_n)_{n=L}^{N-1}$ as: 
\begin{align}\label{lambda,omega}
    &\mathbf{\Lambda} = \frac{1}{\sigma_v^2}\sum_{n=0}^{N-1}\mathbf{s}^{\ast}(n)\mathbf{s}(n)^T\otimes\mathbf{I}_M, \\
    &\mathbf{\Omega}_n = \frac{1}{\sigma_v^2}\mathbf{s}(n)^T\otimes\mathbf{G}^H, \,\, n= L,\ldots,N-1.
\end{align}
Then, the deterministic CRB of the  channel matrix is given by:
\begin{align}\label{DCRB_expression}
       \mathbf{CRB} = \left(\mathbf{\Lambda}-\sum_{n=L}^{N-1}\mathbf{\Omega}_n^H\mathbf{A}^{-1}\mathbf{\Omega}_n\right)^{-1},
\end{align}
where $\mathbf{A}=\frac{1}{\sigma_v^2}\mathbf{G}^H\mathbf{G}$.
\end{theorem}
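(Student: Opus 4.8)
The plan is to obtain the stated expression as the $\mathrm{vec}(\mathbf{G})$-block of the inverse Fisher information matrix (FIM), exploiting the fact that under the deterministic model the observations $\{\mathbf{y}(n)\}_{n=0}^{N-1}$ are independent circularly-symmetric complex Gaussian vectors with a \emph{constant} covariance $\sigma_v^2\mathbf{I}_M$ and a mean that is \emph{holomorphic} in the unknown complex parameters. First I would collect the parameters into $\boldsymbol{\theta}=[\mathbf{g}^T,\mathbf{s}(L)^T,\ldots,\mathbf{s}(N-1)^T]^T$, where $\mathbf{g}=\mathrm{vec}(\mathbf{G})$ is the quantity of interest and the unknown data symbols are nuisance parameters, while the training symbols $\mathbf{s}(0),\ldots,\mathbf{s}(L-1)$ are known and contribute no parameters. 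Because the mean depends on $\boldsymbol{\theta}$ but not on $\boldsymbol{\theta}^{\ast}$ and the noise is circular with parameter-independent covariance, the augmented complex FIM is block-diagonal, so the complex CRB reduces to $\mathbf{J}^{-1}$ with $\mathbf{J}=\frac{1}{\sigma_v^2}\mathbf{D}^H\mathbf{D}$, where $\mathbf{D}$ is the Jacobian of the stacked mean with respect to $\boldsymbol{\theta}$.

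The second step is to compute the blocks of $\mathbf{J}$. Writing the per-sample mean as $\mathbf{G}\mathbf{s}(n)=(\mathbf{s}(n)^T\otimes\mathbf{I}_M)\mathbf{g}$, the Jacobian has $\partial\boldsymbol{\mu}(n)/\partial\mathbf{g}=\mathbf{s}(n)^T\otimes\mathbf{I}_M$ for every $n$, and $\partial\boldsymbol{\mu}(n)/\partial\mathbf{s}(m)=\delta_{nm}\mathbf{G}$ for $m,n\geq L$. Forming $\frac{1}{\sigma_v^2}(\cdot)^H(\cdot)$ and summing over $n$, and using $(\mathbf{A}\otimes\mathbf{B})^H=\mathbf{A}^H\otimes\mathbf{B}^H$ together with the mixed-product rule $(\mathbf{A}\otimes\mathbf{B})(\mathbf{C}\otimes\mathbf{D})=(\mathbf{A}\mathbf{C})\otimes(\mathbf{B}\mathbf{D})$, I would identify the $(\mathbf{g},\mathbf{g})$ block as $\boldsymbol{\Lambda}$, each data-symbol diagonal block as $\mathbf{A}=\frac{1}{\sigma_v^2}\mathbf{G}^H\mathbf{G}$, and the off-diagonal data blocks as zero. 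For the coupling block the key identity is $(\mathbf{s}(n)^{\ast}\otimes\mathbf{I}_M)\mathbf{G}=\mathbf{s}(n)^{\ast}\otimes\mathbf{G}=\boldsymbol{\Omega}_n^H$, obtained by writing $\mathbf{G}=1\otimes\mathbf{G}$ (a $1\times1$ scalar factor) and applying the mixed-product rule, so that the $(\mathbf{g},\mathbf{s}(n))$ block equals $\boldsymbol{\Omega}_n^H$.

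Thus $\mathbf{J}$ has an arrowhead form: $\boldsymbol{\Lambda}$ in the top-left corner, a block-diagonal data part $\mathrm{blkdiag}(\mathbf{A},\ldots,\mathbf{A})$, and coupling blocks $\boldsymbol{\Omega}_n^H$. The final step is to read off the $(\mathbf{g},\mathbf{g})$ block of $\mathbf{J}^{-1}$ by a single Schur-complement computation: it equals $\big(\boldsymbol{\Lambda}-\mathbf{B}\,\mathrm{blkdiag}(\mathbf{A},\ldots,\mathbf{A})^{-1}\mathbf{B}^H\big)^{-1}$, where $\mathbf{B}$ stacks the coupling blocks. Because the data part is block-diagonal, its inverse is block-diagonal with blocks $\mathbf{A}^{-1}$, and the Schur correction collapses to the sum $\sum_{n=L}^{N-1}\boldsymbol{\Omega}_n^H\mathbf{A}^{-1}\boldsymbol{\Omega}_n$, which yields exactly (\ref{DCRB_expression}).

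I expect the main obstacle to be conceptual rather than computational: justifying the reduction of the complex CRB to the compact form $\mathbf{J}^{-1}$ with $\mathbf{J}=\frac{1}{\sigma_v^2}\mathbf{D}^H\mathbf{D}$, i.e.\ verifying that holomorphy of the mean together with noise circularity forces the conjugate-parameter (pseudo-Jacobian) blocks of the augmented FIM to vanish, and that the $\frac{1}{\sigma_v^2}$ factor (rather than $\frac{2}{\sigma_v^2}$) is the one consistent with the chosen complex-CRB convention. Once this normalization is settled, the remaining work is the Kronecker bookkeeping and the Schur-complement step, both routine.
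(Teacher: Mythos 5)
Your proposal is correct and follows essentially the same route as the paper's proof: both form the complex Fisher information matrix over the joint parameter vector (channel plus unknown data symbols), observe that the conjugate/pseudo-information blocks vanish because the mean is holomorphic and the noise circular with constant covariance (the paper cites \cite{de1997cramer} for this reduction), identify the same arrowhead block structure with $\mathbf{A}$-blocks, $\boldsymbol{\Omega}_n$ couplings and $\boldsymbol{\Lambda}$, and extract the channel block of the inverse via a Schur complement. The only cosmetic difference is that you invoke the Gaussian mean-Jacobian formula $\mathbf{J}=\frac{1}{\sigma_v^2}\mathbf{D}^H\mathbf{D}$ directly, whereas the paper computes the score functions and their noise expectations explicitly; these are the same calculation.
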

\begin{proof}
See Appendix~\ref{derivation_DCRB}.
\end{proof}
Recalling that the goal of this paper is to study the CRB based on the statistical properties of the channel and data symbols, we note that the expression derived in Theorem \ref{th:det_crb} poses mathematical challenges due to the intricate dependency of the CRB on these quantities. %It is important to mention that the analysis in \cite{nayebi2017semi} was limited to the asymptotic regime, where the number of transmit antennas grows to infinity.
In the following corollary, we demonstrate how certain simplifications can be applied to make the expression more tractable, facilitating asymptotic analysis under various regimes.
\begin{corollary}
Define ${\bf X}={\bf S}{\bf S}^{H}$, ${\bf X}_p={\bf S}_p{\bf S}_p^{H}$ and ${\bf P}={\bf G}({\bf G}^{H}{\bf G})^{-1}{\bf G}^{H}$, the projection matrix on the space spanned by the columns of ${\bf G}$. Assume that ${M}\geq K$, $K\leq L$, ${\bf G}$ is a full-rank matrix and ${\bf X}_p$ is non-singular. Then, the ${\bf CRB}$ defined in \eqref{DCRB_expression} can be simplified to:
\begin{align}
{\bf CRB}=\sigma_v^2((\mathbf{X}^{\ast})^{-1}\otimes\mathbf{P}^{\bot}+(\mathbf{X}_p^{\ast})^{-1}\otimes\mathbf{P}). 
\end{align}
where $\mathbf{P}^{\bot}=\mathbf{I}_M-{\bf P}.$
\label{cor:crb}
\end{corollary}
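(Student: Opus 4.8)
The plan is to substitute the definitions of $\boldsymbol{\Lambda}$ and $\boldsymbol{\Omega}_n$ from Theorem~\ref{th:det_crb} into \eqref{DCRB_expression}, collapse each term into a single Kronecker product via the mixed-product rule $(\mathbf{A}\otimes\mathbf{B})(\mathbf{C}\otimes\mathbf{D})=(\mathbf{A}\mathbf{C})\otimes(\mathbf{B}\mathbf{D})$, and then invert the resulting matrix by exploiting the orthogonality of the projectors $\mathbf{P}$ and $\mathbf{P}^{\bot}$.

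First I would rewrite $\boldsymbol{\Lambda}$. Since $\mathbf{s}(n)$ is the $n$-th column of $\mathbf{S}$, the outer-product sum satisfies $\sum_{n=0}^{N-1}\mathbf{s}^{\ast}(n)\mathbf{s}(n)^T=(\mathbf{S}\mathbf{S}^H)^{\ast}=\mathbf{X}^{\ast}$, so $\boldsymbol{\Lambda}=\frac{1}{\sigma_v^2}\mathbf{X}^{\ast}\otimes\mathbf{I}_M$. Next I would simplify the quadratic correction term. Writing $\mathbf{A}^{-1}=\sigma_v^2(\mathbf{G}^H\mathbf{G})^{-1}$ as the Kronecker product $[1]\otimes\sigma_v^2(\mathbf{G}^H\mathbf{G})^{-1}$ and using $\boldsymbol{\Omega}_n^H=\frac{1}{\sigma_v^2}\mathbf{s}^{\ast}(n)\otimes\mathbf{G}$, the mixed-product rule gives $\boldsymbol{\Omega}_n^H\mathbf{A}^{-1}\boldsymbol{\Omega}_n=\frac{1}{\sigma_v^2}\big(\mathbf{s}^{\ast}(n)\mathbf{s}(n)^T\big)\otimes\mathbf{P}$, where the $M\times M$ factor collapses exactly to $\mathbf{P}=\mathbf{G}(\mathbf{G}^H\mathbf{G})^{-1}\mathbf{G}^H$. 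Summing over $n=L,\ldots,N-1$ and recognizing $\sum_{n=L}^{N-1}\mathbf{s}^{\ast}(n)\mathbf{s}(n)^T=(\mathbf{S}_d\mathbf{S}_d^H)^{\ast}=\mathbf{X}^{\ast}-\mathbf{X}_p^{\ast}$ then yields, with $\mathbf{I}_M-\mathbf{P}=\mathbf{P}^{\bot}$,
\[
\boldsymbol{\Lambda}-\sum_{n=L}^{N-1}\boldsymbol{\Omega}_n^H\mathbf{A}^{-1}\boldsymbol{\Omega}_n=\frac{1}{\sigma_v^2}\big(\mathbf{X}^{\ast}\otimes\mathbf{P}^{\bot}+\mathbf{X}_p^{\ast}\otimes\mathbf{P}\big).
\]

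It remains to invert this sum of two Kronecker terms. The key structural fact is that $\mathbf{P}$ and $\mathbf{P}^{\bot}$ are complementary orthogonal projectors: $\mathbf{P}^2=\mathbf{P}$, $(\mathbf{P}^{\bot})^2=\mathbf{P}^{\bot}$, $\mathbf{P}\mathbf{P}^{\bot}=\mathbf{P}^{\bot}\mathbf{P}=\mathbf{0}$, and $\mathbf{P}+\mathbf{P}^{\bot}=\mathbf{I}_M$. I would verify by direct multiplication that the candidate inverse $(\mathbf{X}^{\ast})^{-1}\otimes\mathbf{P}^{\bot}+(\mathbf{X}_p^{\ast})^{-1}\otimes\mathbf{P}$ works: expanding the product via the mixed-product rule produces four Kronecker terms, the two cross terms carry the factor $\mathbf{P}^{\bot}\mathbf{P}$ or $\mathbf{P}\mathbf{P}^{\bot}$ and vanish, while the two remaining terms reduce to $\mathbf{I}_K\otimes\mathbf{P}^{\bot}$ and $\mathbf{I}_K\otimes\mathbf{P}$, whose sum is $\mathbf{I}_{KM}$. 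Multiplying the resulting inverse by $\sigma_v^2$ gives the claimed formula.

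I expect the main obstacle to be careful bookkeeping rather than any deep step: one must track conjugates and transposes (the appearance of $\mathbf{X}^{\ast}$ and $\mathbf{X}_p^{\ast}$, rather than $\mathbf{X}$ and $\mathbf{X}_p$, stems from $\mathbf{s}^{\ast}(n)\mathbf{s}(n)^T$) and confirm the dimension matching that lets $\mathbf{A}^{-1}$ be absorbed as a Kronecker factor with a trivial $1\times1$ block. The hypotheses are exactly what make the final expression meaningful: full-rank $\mathbf{G}$ guarantees that $\mathbf{P}$ is a genuine projector, $K\leq L$ together with non-singular $\mathbf{X}_p$ ensures $(\mathbf{X}_p^{\ast})^{-1}$ exists, and $\mathbf{X}=\mathbf{X}_p+\mathbf{S}_d\mathbf{S}_d^H\succ\mathbf{0}$ then guarantees $(\mathbf{X}^{\ast})^{-1}$ exists, so every inverse appearing in the statement is well-defined.
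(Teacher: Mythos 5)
Your proposal is correct and follows essentially the same route as the paper's proof: simplifying $\boldsymbol{\Omega}_n^H\mathbf{A}^{-1}\boldsymbol{\Omega}_n$ to $\frac{1}{\sigma_v^2}\mathbf{s}^{\ast}(n)\mathbf{s}(n)^T\otimes\mathbf{P}$ via the mixed-product rule, regrouping the sums to obtain $\frac{1}{\sigma_v^2}(\mathbf{X}^{\ast}\otimes\mathbf{P}^{\bot}+\mathbf{X}_p^{\ast}\otimes\mathbf{P})$, and verifying the claimed inverse by direct multiplication using the complementary-projector identities. Your write-up is in fact slightly more explicit than the paper's (which merely states that the product check holds), but the argument is the same.
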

\begin{proof}
Using Kronecker product property $({\bf A}\otimes {\bf B})({\bf C}\otimes {\bf D})={\bf AC}\otimes {\bf BD}$ which holds for any matrices ${\bf A}$, $ {\bf B}$, ${\bf C}$ and $ {\bf D}$ with sizes allowing to form the matrix products ${\bf AC}$ and ${\bf BD}$, we can write $\boldsymbol{\Omega}_n^{H}{\bf A}^{-1}\boldsymbol{\Omega}_n$ as $\frac{1}{\sigma_v^2}{\bf s}(n)^\ast{\bf s}(n)^{T}\otimes {\bf P}$. To continue, we extend the definition of matrices $\boldsymbol{\Omega}_n$ to $n=0,\cdots, L-1$ by defining $\boldsymbol{\Omega}_n=\frac{1}{\sigma_v^2}{\bf s}(n)^{T}\otimes {\bf G}^{H}$. With this, we may simplify the CRB as:
\begin{align}
    \mathbf{CRB} =& \Big(\frac{1}{\sigma_v^2}\sum_{n=0}^{N-1}({\bf s}(n)^{\ast}{\bf s}(n)^{T}\otimes \mathbf{I}_M-{\bf s}(n)^{\ast}{\bf s}(n)^{T}\otimes{\bf P}) \notag\\
    &+\frac{1}{\sigma_v^2}\sum_{n=0}^{L-1}{\bf s}(n)^{\ast}{\bf s}(n)^{T}\otimes{\bf P}\Big)^{-1} \notag \\
    =& \sigma_v^2\left(\mathbf{S}^{\ast}\mathbf{S}^T\otimes(\mathbf{I}_M-\mathbf{P}) + \mathbf{S}_p^{\ast}\mathbf{S}_p^T\otimes\mathbf{P}\right)^{-1} \notag \\
    =& \sigma_v^2((\mathbf{X}^{\ast})\otimes\mathbf{P}^{\bot}+(\mathbf{X}_p^{\ast})\otimes\mathbf{P})^{-1} \notag \\
    =&\sigma_v^2((\mathbf{X}^{\ast})^{-1}\otimes\mathbf{P}^{\bot}+(\mathbf{X}_p^{\ast})^{-1}\otimes\mathbf{P}),
\end{align}
where the last equality follows by checking that:
$$
((\mathbf{X}^{\ast})\otimes\mathbf{P}^{\bot}+(\mathbf{X}_p^{\ast})\otimes\mathbf{P})((\mathbf{X}^{\ast})^{-1}\otimes\mathbf{P}^{\bot}+(\mathbf{X}_p^{\ast})^{-1}\otimes\mathbf{P})=\mathbf{I}_M.
$$
\end{proof}
\noindent{\bf On the case $L\leq K$}
It is worth noting that for the CRB to exist, the matrix ${\bf X}_p$ must be non-singular. If ${\bf X}_p$ is singular, the Fisher-information matrix becomes singular as well, making the CRB undefined. A necessary condition for ${\bf X}_p$ to be non-singular is that $K\leq L$. This requirement is the same as that imposed by training-based schemes, where the number of training symbols must be at least equal to the number of users. However, in semi-blind channel estimation, one may think that there is potential to reduce the required number of training symbols by initially applying a blind channel estimation technique in combination with a training-based technique. From the CRB calculations, such a procedure could not reduce the number of required training symbols to less than $K$. To investigate further this scenario, consider a subspace blind channel estimate. It is well-known in this case in the noiseless case, the channel is estimated up to a scalar ambiguity~\cite{gao2007blind,shin2007blind,ghavami2017blind}; that is there exists a matrix ${\bf F}$ of size $K\times K$ such that the blind channel estimate $\hat{\bf G}_{\mathrm{blind}}={\bf G}{\bf F}$. In the following Lemma, we show that to resolve the matrix ambiguity, at least $K$  training symbols are required. 
\begin{lemma}\label{lemma1}
    For solving the matrix ambiguity in a semi-blind subspace technique, the number of training symbols $L$ should be greater than $K$.
\begin{proof}
    Consider the subspace blind channel scenario, the estimate of $\mathbf{G}$ can be obtained within an ambiguity matrix, namely, $\hat{\mathbf{G}}_{\mathrm{blind}}=\mathbf{GF}$, where $\mathbf{F}\in\mathbb{C}^{K\times K}$ is an unknown invertible matrix. Therefore, in the noiseless case, the received signal can be rewritten as $\mathbf{y}(n)=\hat{\mathbf{G}}_{\mathrm{blind}}\mathbf{F}^{-1}\mathbf{s}(n)$. Performing singular value decomposition (SVD) of $\hat{\mathbf{G}}_{\mathrm{blind}}$, we have $\hat{\mathbf{G}}_{\mathrm{blind}}=\mathbf{U}\mathbf{D}\mathbf{V}^H$, where $\mathbf{D}\in\mathbb{C}^{K\times K}$ is a diagonal matrix whose diagonal entries correspond to the singular values of $\hat{\mathbf{G}}$, $\mathbf{U}\in\mathbb{C}^{M\times K}$ and $\mathbf{V}\in\mathbb{C}^{K\times K}$ are the left and right singular vector matrices, respectively. For each \(n = 0, \dots, L-1\), the equation \(\mathbf{U}^H\mathbf{y}(n) = \mathbf{D}\mathbf{V}^H\mathbf{F}^{-1}\mathbf{s}(n)\) provides at most \(K\) independent linear equations, since \(\mathbf{U}^H\) is a \(K \times M\) matrix. Given that \(\mathbf{F}^{-1}\) contains \(K^2\) unknowns, it is necessary to have at least \(L = K\) observations in order to form \(K^2\) independent equations, thereby matching the number of unknowns in \(\mathbf{F}\).
    %Then we can obtain $\mathbf{U}^H\mathbf{y}(n)=\mathbf{D}\mathbf{V}^H\mathbf{F}^{-1}\mathbf{s}(n)$. To solve the channel ambiguity matrix $\mathbf{F}$, the known pilot sequences $\mathbf{S}_p$ are required to formulate an overdetermined system of linear equations, i.e., $\mathbf{U}^H\mathbf{Y}_p=\mathbf{D}\mathbf{V}^H\mathbf{F}^{-1}\mathbf{S}_p+\mathbf{U}^H\mathbf{V}_p$, thus the condition $L \ge K$ must be satisfied. 
\end{proof}
\end{lemma}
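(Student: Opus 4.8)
The plan is to read the lemma as a pure identifiability question. In the noiseless regime the blind estimate satisfies $\hat{\mathbf{G}}_{\mathrm{blind}}=\mathbf{GF}$ with $\mathbf{F}\in\mathbb{C}^{K\times K}$ invertible, so recovering $\mathbf{G}$ is equivalent to recovering the $K^2$ complex entries of $\mathbf{F}^{-1}$ from the training observations $\mathbf{y}(n)=\hat{\mathbf{G}}_{\mathrm{blind}}\mathbf{F}^{-1}\mathbf{s}(n)$, $n=0,\ldots,L-1$. The whole question therefore becomes: how many training symbols $L$ are needed so that these equations pin $\mathbf{F}^{-1}$ down uniquely? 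I would frame everything as the invertibility of a single linear map acting on $\mathbf{F}^{-1}$.

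First I would strip out the $M-K$ directions of each observation that carry no information. Taking the thin SVD $\hat{\mathbf{G}}_{\mathrm{blind}}=\mathbf{U}\mathbf{D}\mathbf{V}^H$ with $\mathbf{U}\in\mathbb{C}^{M\times K}$ having orthonormal columns and $\mathbf{D},\mathbf{V}\in\mathbb{C}^{K\times K}$ invertible, and noting that each $\mathbf{y}(n)$ lies in the column space of $\mathbf{U}$, left-multiplication by $\mathbf{U}^H$ loses nothing and gives $\mathbf{U}^H\mathbf{y}(n)=\mathbf{D}\mathbf{V}^H\mathbf{F}^{-1}\mathbf{s}(n)$, i.e. at most $K$ independent scalar constraints per training symbol. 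Stacking all $L$ symbols yields the single matrix identity $\mathbf{U}^H\mathbf{Y}_p=\mathbf{D}\mathbf{V}^H\mathbf{F}^{-1}\mathbf{S}_p$, so the problem reduces to inverting the linear map $\boldsymbol{\Psi}:\mathbf{F}^{-1}\mapsto\mathbf{D}\mathbf{V}^H\mathbf{F}^{-1}\mathbf{S}_p$ from $\mathbb{C}^{K\times K}$ to $\mathbb{C}^{K\times L}$. The immediate necessity argument is a count: $\boldsymbol{\Psi}$ produces at most $KL$ independent constraints on $K^2$ unknowns, so $KL\geq K^2$, that is $L\geq K$, is forced.

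To make this sharp rather than heuristic I would analyze the kernel of $\boldsymbol{\Psi}$ directly. Because $\mathbf{D}\mathbf{V}^H$ is invertible, $\mathbf{D}\mathbf{V}^H\mathbf{F}^{-1}\mathbf{S}_p=\mathbf{0}$ holds if and only if $\mathbf{F}^{-1}\mathbf{S}_p=\mathbf{0}$, which forces $\mathbf{F}^{-1}=\mathbf{0}$ exactly when $\mathbf{S}_p$ has full row rank $K$. Full row rank of the $K\times L$ matrix $\mathbf{S}_p$ requires $L\geq K$; conversely, if $L<K$ then $\mathrm{rank}(\mathbf{S}_p)\leq L<K$, the kernel of $\boldsymbol{\Psi}$ has dimension at least $K(K-L)>0$, and a whole affine family of candidate matrices is consistent with the training data. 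Hence no choice of fewer than $K$ training symbols can resolve the ambiguity, which is the impossibility the lemma asserts, and the threshold $L\geq K$ coincides with the full-rank requirement on $\mathbf{S}_p$ already used elsewhere in the paper.

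The hard part will be the justification that each symbol really contributes a full block of $K$ \emph{independent} constraints and that no hidden degeneracy in the stacked (Kronecker-structured) system inflates the effective rank above $\min(KL,K^2)$; a naive count alone does not rule this out. This is precisely what the invertibility of $\mathbf{D}\mathbf{V}^H$ buys: it collapses the injectivity of $\boldsymbol{\Psi}$ onto the single rank condition on $\mathbf{S}_p$, turning the dimension count into a clean necessary-and-sufficient statement about $L$ and avoiding any delicate case analysis of the Kronecker structure.
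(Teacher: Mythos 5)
Your proposal is correct, and its skeleton is the same as the paper's: model the blind estimate as $\hat{\mathbf{G}}_{\mathrm{blind}}=\mathbf{GF}$, take the thin SVD, project the training observations by $\mathbf{U}^H$ to get $\mathbf{U}^H\mathbf{y}(n)=\mathbf{D}\mathbf{V}^H\mathbf{F}^{-1}\mathbf{s}(n)$, and compare the number of scalar constraints against the $K^2$ unknowns in $\mathbf{F}^{-1}$. Where you genuinely go beyond the paper is in making the count rigorous: the paper stops at the heuristic ``at most $K$ independent equations per symbol, $K^2$ unknowns, hence $L\geq K$,'' whereas you stack the system as $\mathbf{U}^H\mathbf{Y}_p=\mathbf{D}\mathbf{V}^H\mathbf{F}^{-1}\mathbf{S}_p$, exploit invertibility of $\mathbf{D}\mathbf{V}^H$ to reduce injectivity of the map $\mathbf{X}\mapsto\mathbf{D}\mathbf{V}^H\mathbf{X}\mathbf{S}_p$ to the single condition $\mathrm{rank}(\mathbf{S}_p)=K$, and then use rank--nullity to exhibit a kernel of dimension at least $K(K-L)>0$ whenever $L<K$. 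This upgrades the lemma from a dimension count to a necessary-and-sufficient identifiability statement tied to the full row rank of $\mathbf{S}_p$, which also dovetails with the non-singularity assumption on $\mathbf{X}_p=\mathbf{S}_p\mathbf{S}_p^H$ used elsewhere in the paper. One small point you leave implicit: when $L<K$ the solution set is an affine subspace $\mathbf{F}^{-1}_0+\ker\boldsymbol{\Psi}$, and to conclude a \emph{genuine} ambiguity you should note that this subspace contains invertible matrices other than the true one; this is immediate because the determinant is a polynomial that is nonzero at $\mathbf{F}^{-1}_0$, so invertible solutions are dense in the affine family, but a one-line remark to that effect would close the argument completely.
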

\noindent{\bf Average CRB. } We define the average CRB as:
$$
{\rm CRB}^{\rm avg}:=\frac{1}{K}{\rm tr}({\bf CRB}).
$$
It represents a lower-bound on the per-user mean square error (MSE) of any unbiased estimator of the channel.
More specifically, let $\hat{\bf G}$ be an unbiased estimation of ${\bf G}$. Then, the average deterministic CRB provides a lower-bound on the MSE achieved by $\hat{\bf G}$, in the sense that $\frac{1}{K}\mathbb{E}[\|\hat{\bf G}-{\bf G}\|_{F}^2]\geq {\rm CRB}^{\rm avg}$, where the expectation is taken over the distribution of the noise. 
Using the fact that ${\rm tr}({\bf P})=K$ and ${\rm tr}({\bf P}^{\perp})=M-K$, the average CRB simplifies to:
\begin{align}\label{eq:deterministic_crb}
{\rm CRB}^{\rm avg}=\frac{(M-K)\sigma_v^2}{KN}{\rm tr}((\tilde{\mathbf{X}}^\ast)^{-1})+\frac{\sigma_v^2}{L}{\rm tr}((\tilde{\bf X}_p^\ast)^{-1}),
\end{align}
where $\tilde{\mathbf{X}}=\frac{1}{N}{\bf X}$ and $\tilde{\mathbf{X}}_p=\frac{1}{L}{\bf X}_p$
It is worth noting that the average CRB does not depend on the channel but only on the training and data matrices ${\bf S}_p$ and ${\bf S}_d$. This particularly indicates that as long as the channel is full-rank, the channel estimation error depends mainly on the number of data and training symbols used to perform channel estimation, as well as the matrix ${\bf S}_p$ containing training symbols.

As will be shown next, the average CRB lends itself to theoretical analysis by leveraging the statistical properties of the data matrix. 

\noindent{\bf Comparison with training-based schemes.} In schemes using only training data, the symbol vectors received during data transmissions are not used for channel estimation purposes. As a result, the CRB reduced to ${\bf CRB}_{\rm training}=\sigma_v^2({\bf X}_p^{\ast})^{-1}\otimes \mathbf{I}_M$, which serves as a lower bound on the mean squared error of training-based channel estimation methods, such as the ML estimator given by $\hat{\mathbf{G}}_{\mathrm{training}}=\mathbf{Y}_p\mathbf{S}_p^H(\mathbf{S}_p\mathbf{S}_p^H)^{-1}$. Accordingly, the average CRB writes as:
\begin{align}
    {\rm CRB}_{\rm training}^{{\rm avg}}=\sigma_v^2\frac{M}{KL}{\rm tr} (\tilde{\bf X}_p^\ast)^{-1}.
\end{align}
To compare with the CRB for semi-blind channel estimation, we decompose ${\rm  CRB}^{\rm avg}$ as:
\begin{align}
{\rm CRB}^{\rm avg}=&\frac{\sigma_v^2M}{LK}{\rm tr}((\tilde{\bf X}_p^\ast)^{-1})+ \frac{(M-K)\sigma_v^2}{KN}{\rm tr}((\tilde{\mathbf{X}}^{\ast}))^{-1} \notag\\
&-\frac{(M-K)\sigma_v^2}{KL}{\rm tr}(\tilde{\mathbf{X}}_p^{\ast})^{-1} \notag\\
=&{\rm CRB}_{\rm training}-{\rm CRB_{\rm gain}},
\end{align}
where ${\rm CRB}_{\rm gain}$ represents the gain obtained from using semi-blind channel estimation and is given by:
\begin{equation}
{\rm CRB}_{\rm gain}=\frac{(M-K)\sigma_v^2}{KL}{\rm tr}(\tilde{\mathbf{X}}_p^{\ast})^{-1}-
\frac{(M-K)\sigma_v^2}{KN}{\rm tr}(\tilde{\mathbf{X}}^{\ast})^{-1}.
\label{eq:average_CRB}\end{equation}
One can easily check that ${\rm CRB}_{\rm gain}$ is positive since $L\leq N$ and $\tilde{\bf X}^\ast\succeq \tilde{\bf X}_p^\ast$. Moreover, the gain in terms of CRB becomes all the more important as $M-K$ is large.

\par

\subsection{Asymptotic analysis of the deterministic CRB}
In this section, we consider analyzing the average deterministic CRB under two distinct growth regimes by establishing a direct connection to the Stieltjes transform of the empirical measure of $\tilde{\mathbf{X}}$, a fundamental tool in random matrix theory. The first growth regime is described in the following assumptions and corresponds to the situation in which the number of training symbols and data symbols, as well as the number of users and transmit antennas grow large with the same pace.  

\begin{assumption}\label{assumption_dcrb_1}
    We assume that the number of users $K$, the number of antennas $M$, the number of pilot symbols $L$, and the transmission block length $N$ go to infinity at the same rate, that is $K\rightarrow+\infty$, $M\rightarrow+\infty$, $L\rightarrow+\infty$, and $N\rightarrow+\infty$ with $K/M\rightarrow c$, $M/N\rightarrow\alpha$, and $L/N\rightarrow\beta$ for some constant $0<c<1$, $0<\alpha<\infty$, and $0<\beta<1$.
\end{assumption}

\begin{assumption}
We assume that the unknown data symbol vectors ${\bf s}_d(n), n=L+1,\cdots,N$ are independent and have independent entries with covariance $P_s \mathbf{I}_K$.
\label{ass:data_model}
\end{assumption}
For a square $K\times K$  matrix ${\bf A}$, we denote by $\mu_{{\bf A}}$ its empirical distribution defined as:
$$
\mu_{{\bf A}}(dx)=\frac{1}{K}\sum_{i=1}^K \delta_{\lambda_i({\bf A})}(dx),
$$
where $\lambda_1,\cdots,\lambda_K$ denote the eigenvalues of ${\bf A}$ and $\delta_x$ is the dirac measure. Associated with the empirical measure, we define its Stieltjes transform as:
$$
m_{{\bf A}}(z)=\int\frac{1}{\lambda-z}\mu_{{\bf A}}(d\lambda)=\frac{1}{K}{\rm tr}({\bf A}-z\mathbf{I}_K)^{-1},
$$
With this definition at hand, we can easily see from \eqref{eq:average_CRB} that the average CRB is a function of the Stietljes transform of matrix $\tilde{\bf X}^{\ast}$. Noting that $\tilde{\bf X}^\ast$ can be written as:
$$
\tilde{\bf X}=\frac{L}{N}\tilde{\bf X}_p+\frac{{\bf S}_d}{\sqrt{K}}\frac{K}{N}\mathbf{I}_{N-L} \frac{{\bf S}_d^{H}}{\sqrt{K}},
$$
we may use the result of \cite{silverstein1995empirical} to show that under the asymptotic regime specified in Assumption \ref{assumption_dcrb_1} and Assumption \ref{ass:data_model}, the Stieltjes transform of the empirical measure of $\tilde{\bf X}(z)$ converges almost surely to:
$$
m_{\tilde{\bf X}}(z)-\overline{m}_{\tilde{\bf X}}(z)\xrightarrow[\substack{K,L, N\to\infty,\frac{K}{N}\to c\alpha,\frac{L}{N}\to \beta} ]{ a.s.} 0,
$$
where  $\overline{m}_{\tilde{\bf X}}(z)$ is the unique solution to the following equation \cite{silverstein1995empirical}:
$$
\overline{m}_{\tilde{\bf X}}(z)=\frac{1}{K}\sum_{k=1}^K \frac{1}{\beta\lambda_k(\tilde{\bf X}_p)-z+\frac{(1-\beta)P_s}{1+c\alpha P_s\overline{m}_{\tilde{\bf X}}(z)}}.
$$
Plugging the above expression into the expression of the average CRB, we thus prove the following theorem:
\begin{theorem}
\label{th:average_crb}
Under Assumption \ref{assumption_dcrb_1} and \ref{ass:data_model}, the average CRB satisfies the following convergence result:
$$
{\rm CRB}^{\rm avg}-\overline{\rm CRB}^{\rm avg}\xrightarrow[ ]{ a.s.} 0,
$$
where
\begin{align}
\overline{\rm CRB}^{\rm avg}=(1-c)\alpha{\sigma_v^2}\overline{m}_{\tilde{\bf X}}(0)+\frac{\sigma_v^2}{L}{\rm tr}((\tilde{\bf X}_p^\ast)^{-1}),
\end{align}
with $\overline{m}_{\tilde{\bf X}}(0)$ being the unique solution to the following fixed-point equation:
\begin{equation}
\overline{m}_{\tilde{\bf X}}(0)=\frac{1}{K}\sum_{k=1}^K \frac{1}{\beta\lambda_k(\tilde{\bf X}_p)+\frac{(1-\beta)P_s}{1+c\alpha P_s \overline{m}_{\tilde{\bf X}}(0)}}. \label{eq:msolution}
\end{equation}
\end{theorem}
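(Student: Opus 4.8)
The plan is to begin from the exact finite-dimensional identity \eqref{eq:deterministic_crb} and observe that the only stochastic quantity it contains is ${\rm tr}((\tilde{\bf X}^{\ast})^{-1})$. Since $\tilde{\bf X}=\frac{1}{N}{\bf S}{\bf S}^{H}$ is Hermitian positive definite, its complex conjugate $\tilde{\bf X}^{\ast}$ has the same real positive eigenvalues, so by the very definition of the Stieltjes transform $\frac{1}{K}{\rm tr}((\tilde{\bf X}^{\ast})^{-1})=m_{\tilde{\bf X}}(0)$. Substituting this into \eqref{eq:deterministic_crb} rewrites the average CRB as
$$
{\rm CRB}^{\rm avg}=\frac{(M-K)\sigma_v^2}{N}\,m_{\tilde{\bf X}}(0)+\frac{\sigma_v^2}{L}{\rm tr}((\tilde{\bf X}_p^{\ast})^{-1}),
$$
in which the pilot term is already deterministic. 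The problem therefore reduces to establishing the almost-sure limit of $m_{\tilde{\bf X}}(0)$, combined with the purely deterministic limit $\frac{M-K}{N}=\frac{M}{N}\big(1-\frac{K}{M}\big)\to(1-c)\alpha$.

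Next I would invoke the random matrix machinery. Writing $\tilde{\bf X}=\frac{L}{N}\tilde{\bf X}_p+\frac{{\bf S}_d}{\sqrt{K}}\frac{K}{N}\mathbf{I}_{N-L}\frac{{\bf S}_d^{H}}{\sqrt{K}}$ exhibits $\tilde{\bf X}$ as the deterministic Hermitian matrix $\frac{L}{N}\tilde{\bf X}_p$ additively perturbed by a sample-covariance-type matrix formed from the data block ${\bf S}_d$, whose columns are independent with independent entries of variance $P_s$ by Assumption \ref{ass:data_model}. This is exactly the model analyzed in \cite{silverstein1995empirical}: under Assumption \ref{assumption_dcrb_1} (so that $K/N\to c\alpha$ and $L/N\to\beta$), for every $z\in\mathbb{C}^+$ the transform $m_{\tilde{\bf X}}(z)$ converges almost surely to the deterministic $\overline{m}_{\tilde{\bf X}}(z)$ solving the fixed-point equation stated just before the theorem, whose solution is unique in the class of Stieltjes transforms. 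Evaluating that equation at $z=0$ gives \eqref{eq:msolution}, and its unique solvability among positive reals follows from the standard monotonicity/contraction argument for Marchenko--Pastur-type equations.

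The delicate point, and the step I expect to be the main obstacle, is that the quantity actually needed is the transform at the real point $z=0$, whereas \cite{silverstein1995empirical} supplies convergence only on the open upper half-plane. To transfer the limit to $z=0$ I would show the spectrum of $\tilde{\bf X}$ stays bounded away from the origin: since $\frac{1}{N}{\bf S}_d{\bf S}_d^{H}\succeq\mathbf{0}$ we have $\tilde{\bf X}\succeq\frac{L}{N}\tilde{\bf X}_p$, hence $\lambda_{\min}(\tilde{\bf X})\ge\frac{L}{N}\lambda_{\min}(\tilde{\bf X}_p)$, which is uniformly positive because ${\bf X}_p$ is non-singular with smallest eigenvalue bounded away from $0$. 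Thus $0$ lies strictly outside the support of the limiting measure, both $m_{\tilde{\bf X}}$ and $\overline{m}_{\tilde{\bf X}}$ are analytic and uniformly bounded on a fixed disc around the origin, and a Vitali/normal-families argument upgrades the pointwise almost-sure convergence on $\mathbb{C}^+$ to convergence at $z=0$.

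Finally I would assemble the pieces: combining $m_{\tilde{\bf X}}(0)\xrightarrow{a.s.}\overline{m}_{\tilde{\bf X}}(0)$ with the deterministic convergence $\frac{M-K}{N}\to(1-c)\alpha$ in the rewritten expression above, while leaving the deterministic pilot term $\frac{\sigma_v^2}{L}{\rm tr}((\tilde{\bf X}_p^{\ast})^{-1})$ unchanged, yields ${\rm CRB}^{\rm avg}-\overline{\rm CRB}^{\rm avg}\xrightarrow{a.s.}0$ with $\overline{\rm CRB}^{\rm avg}=(1-c)\alpha\sigma_v^2\overline{m}_{\tilde{\bf X}}(0)+\frac{\sigma_v^2}{L}{\rm tr}((\tilde{\bf X}_p^{\ast})^{-1})$, which is the claimed statement.
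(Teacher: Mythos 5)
Your proposal is correct and follows essentially the same route as the paper: rewrite the average CRB in terms of $m_{\tilde{\bf X}}(0)$ using \eqref{eq:deterministic_crb}, decompose $\tilde{\bf X}$ into the pilot part $\frac{L}{N}\tilde{\bf X}_p$ plus the sample-covariance data part, and invoke the result of \cite{silverstein1995empirical} to obtain the fixed-point limit, then evaluate at $z=0$. The one place you go beyond the paper is in explicitly justifying the passage from convergence on $\mathbb{C}^{+}$ to convergence at the real point $z=0$ (via the spectral bound $\lambda_{\min}(\tilde{\bf X})\geq\frac{L}{N}\lambda_{\min}(\tilde{\bf X}_p)$, which keeps the origin away from the support, together with a normal-families/Vitali argument); the paper plugs in $z=0$ without comment, so this is added rigor within the same method rather than a different approach.
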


\noindent{\bf On the numerical evaluation of $\overline{m}_{\tilde{\bf X}}(0)$.} The solution of \eqref{eq:msolution} in $\overline{m}_{\tilde{\bf X}}(0)$ does not in general have explicit formulation except some particular cases like the one in which all eigenvalues of $\lambda_k(\tilde{\bf X}_p), k=1,\cdots, K$ are equal. However, numerical evaluation of the solution $\overline{m}_{\tilde{\bf X}}(0)$ can be obtained through iterations using the algorithm in \cite{silverstein1995empirical}, which we provide in Algorithm \ref{alg:1} for completeness. To ensure the convergence of this algorithm to the positive solution $\overline{m}_{\tilde{\bf X}}(0)$, we define a function $T(x)$ as
\begin{align}
    T(x) = \frac{1}{K}\sum_{k=1}^K \frac{1}{\beta\lambda_k(\tilde{\bf X}_p)+\frac{(1-\beta)P_s}{1+c\alpha P_s x}}.  \notag 
\end{align}
Since for all $x\geq 0$, the following properties are satisfied: 
\begin{align}
    \left\{\begin{array}{ll}
        T(x) \geq 0,   \\
        T(x) \geq  T(x'),  \textrm{when } x<x',\\
        T(\vartheta x)<\vartheta T(x), \textrm{for all } \vartheta>1.
    \end{array}\right.
\end{align}
 $T(x)$ is a standard interference function~\cite{feyzmahdavian2012contractive}. It can be easily check that the equation $1=\frac{T(x)}{x}$ has a unique positive solution, Hence, using Theorem in \cite{yates1995framework}, the iterations $x^{(l+1)}=T(x^{(l)})$ converges to the unique fixed point, where  $l$ denotes the iteration index. Therefore, the numerical evaluation of $\overline{m}_{\tilde{\bf X}}(0)$ via Algorithm~\ref{alg:1} is guaranteed to converge.
\begin{algorithm}[t]
\caption{Fixed point algorithm.} 
\label{fixed point}
\begin{algorithmic}[1]
\REQUIRE $\tilde{\mathbf{X}}_p$ and $\epsilon>0$. 
\STATE Initialize $m_{\tilde{\mathbf{X}}}^{(0)}(0)=0$ and $l=0$.
\STATE Compute $m_{\tilde{\mathbf{X}}}^{(1)}(0)=\frac{1}{K}\sum\limits_{k=1}^K\frac{1}{\beta\lambda_k(\tilde{\mathbf{X}})+(1-\beta)}$ and set $l=1$.
\WHILE{$|m_{\tilde{\mathbf{X}}}^{(l)}(0)-m_{\tilde{\mathbf{X}}}^{(l-1)}(0)|>\epsilon$}
\STATE Compute $m_{\tilde{\mathbf{X}}}^{(l+1)}(0)=\frac{1}{K}\sum\limits_{k=1}^K\frac{1}{\beta\lambda_k(\tilde{\mathbf{X}})+\frac{(1-\beta)P_s}{1+c\alpha P_sm_{\tilde{\bf X}}^{(l)}(0)}}$. 
\STATE Set $l=l+1$. 
\ENDWHILE
\ENSURE $m_{\tilde{\mathbf{X}}}^{\ast}(0)$.
\end{algorithmic}
\label{alg:1}
\end{algorithm}

\noindent{\bf Optimization of the training symbol matrix.} The tractability of the approximation provided by Theorem \ref{th:average_crb}
 offers an opportunity to refine the design of training symbol sequences to minimize the average CRB. Specifically, we aim to solve the following problem:
\begin{align}\label{optimization problem1}
&\min_{\mathbf{S}_p} \ \   \overline{\rm CRB}^{\rm avg} \notag \\
&\mathrm{s.t.} \,\,\,\, \frac{1}{K}\mathrm{tr}(\tilde{\mathbf{X}}_p) = P,
\end{align}
Using the expression of the average CRB, the above optimization problem can be equivalently rewritten as 
\begin{align}\label{op2}
&\min_{\mathbf{S}_p}\quad (1-c)\alpha{\sigma_v^2}\overline{m}_{\tilde{\bf X}}(0)+\frac{\sigma_v^2}{L}{\rm tr}((\tilde{\bf X}_p^\ast)^{-1}) \notag \\
&\mathrm{s.t.}\quad\,\,\, \overline{m}_{\tilde{\bf X}}(0)=\frac{1}{K}\sum_{k=1}^K \frac{1}{\beta\lambda_k(\tilde{\bf X}_p)+\frac{(1-\beta)P_s}{1+c\alpha P_s\overline{m}_{\tilde{\bf X}}(0)}}, \notag \\
&\qquad\,\,\,\,\, \frac{1}{K}\mathrm{tr}\left(\tilde{\mathbf{X}}_p\right) = P.
\end{align}
The optimal design of matrix ${\bf S}_p$ solving \eqref{op2} is determined in the following theorem:
\begin{theorem}
The optimal matrix $\tilde{\bf X}_p$ that solves \eqref{optimization problem1} is $\tilde{\bf X}_p=P\mathbf{I}_K$.
\end{theorem}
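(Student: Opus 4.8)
The plan is to recast problem \eqref{optimization problem1}--\eqref{op2} as a finite-dimensional problem in the eigenvalues of $\tilde{\bf X}_p$ and then apply first-order optimality conditions while carefully tracking the implicit dependence of $\overline{m}_{\tilde{\bf X}}(0)$ on those eigenvalues. Since $\tilde{\bf X}_p$ is Hermitian positive definite, $\tilde{\bf X}_p^{\ast}$ shares its real positive eigenvalues $\lambda_1,\dots,\lambda_K$, so both the objective and the constraint in \eqref{op2} are functions of $\{\lambda_k\}$ alone: writing $m:=\overline{m}_{\tilde{\bf X}}(0)$, one has ${\rm tr}((\tilde{\bf X}_p^{\ast})^{-1})=\sum_k 1/\lambda_k$, the constraint is $\frac{1}{K}\sum_k\lambda_k=P$, and $m$ solves the fixed-point relation \eqref{eq:msolution}, i.e. $m=T(m)$, with $T$ depending on $\{\lambda_k\}$. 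Because any positive spectrum with prescribed mean $P$ is realizable by some $\mathbf{S}_p$ (recall $L\geq K$), minimizing over $\mathbf{S}_p$ reduces to minimizing $f(\lambda_1,\dots,\lambda_K)=(1-c)\alpha\sigma_v^2\,m+\frac{\sigma_v^2}{L}\sum_k 1/\lambda_k$ over the open simplex $\{\lambda_k>0,\ \sum_k\lambda_k=KP\}$.

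First I would obtain $\partial m/\partial\lambda_j$ by implicitly differentiating $m=\frac{1}{K}\sum_k(\beta\lambda_k+\gamma(m))^{-1}$, where $\gamma(m)=\frac{(1-\beta)P_s}{1+c\alpha P_s m}$. Setting $D_k=\beta\lambda_k+\gamma(m)$ and $S_2=\frac{1}{K}\sum_k D_k^{-2}$, and using $T'(x)=-\gamma'(x)S_2(x)$, this gives $\partial m/\partial\lambda_j=-\beta/\big(K D_j^2(1-T'(m))\big)$. Next I would form the Lagrangian for the equality constraint and write the stationarity condition $\partial f/\partial\lambda_j=\nu$ for all $j$, which after substitution reads
\[
\frac{A}{(\beta\lambda_j+\gamma)^2}+\frac{1}{L\lambda_j^2}=\mu,\qquad A=\frac{(1-c)\alpha\beta}{K\,(1-T'(m))},\quad \mu=-\frac{\nu}{\sigma_v^2},
\]
where $A$, $\gamma$, $m$ and $\mu$ do not depend on $j$.

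The crux is to show this per-index equation pins down a single admissible $\lambda_j$. I would first verify $A>0$: the scalability property $T(\vartheta x)<\vartheta T(x)$ for $\vartheta>1$, which is the third defining property of the standard interference function established just above, makes $T(x)/x$ strictly decreasing, whence $T'(m)<T(m)/m=1$ at the fixed point, so $1-T'(m)>0$ and $A>0$. With $A>0$ and $\gamma>0$, the map $\lambda\mapsto A(\beta\lambda+\gamma)^{-2}+(L\lambda^2)^{-1}$ is strictly decreasing on $(0,\infty)$, hence injective; therefore the common value $\mu$ forces $\lambda_1=\dots=\lambda_K$, and the constraint $\frac{1}{K}\sum_k\lambda_k=P$ yields $\lambda_k=P$ for all $k$, i.e. $\tilde{\bf X}_p=P\mathbf{I}_K$.

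Finally I would confirm this stationary point is the global minimizer: $f$ is continuous on the open simplex and diverges to $+\infty$ whenever some $\lambda_j\to 0^{+}$ (the term $\sum_k 1/\lambda_k$ blows up while $m$ stays bounded since $D_j\to\gamma>0$), so the infimum is attained in the interior at a critical point, and uniqueness of the critical point identifies it as the minimizer. I expect the main obstacle to be the sign control $1-T'(m)>0$, i.e. establishing $A>0$ so that the reduced stationarity map is monotone; this is precisely where the standard-interference-function structure of $T$ already exhibited in the excerpt becomes indispensable.
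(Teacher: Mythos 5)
Your proposal is correct in substance, but it takes a genuinely different route from the paper's proof. The paper never forms a Lagrangian: it splits the objective into its two terms and shows each is \emph{separately} minimized at $\tilde{\bf X}_p=P\mathbf{I}_K$ --- the trace term by Schur-convexity of $(x_1,\cdots,x_K)\mapsto\sum_{k}1/x_k$, and the $\overline{m}_{\tilde{\bf X}}(0)$ term by a Jensen-type bound applied \emph{inside} the fixed-point equation: convexity of $x\mapsto\big(\beta x+\tfrac{(1-\beta)P_s}{1+c\alpha P_s\overline{m}}\big)^{-1}$ gives $g\big(\overline{m}_{\tilde{\bf X}}(0;\tilde{\bf X}_p)\big)\leq 1$, where $g(x)=\big(\beta Px+\tfrac{(1-\beta)P_sx}{1+c\alpha P_sx}\big)^{-1}$ is non-increasing and equals $1$ exactly at $\overline{m}_{\tilde{\bf X}}(0;P\mathbf{I}_K)$, so monotonicity of $g$ forces $\overline{m}_{\tilde{\bf X}}(0;\tilde{\bf X}_p)\geq\overline{m}_{\tilde{\bf X}}(0;P\mathbf{I}_K)$. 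That argument is shorter, requires no differentiability of the fixed point in the eigenvalues, and delivers global optimality at once, but it relies on the fortunate fact that both terms share the same minimizer. Your variational route --- eigenvalue reduction, implicit differentiation of $m=T(m)$, injectivity of the strictly decreasing stationarity map, and coercivity --- is more mechanical and would still work if the two terms were in tension, and it additionally shows the optimum is the unique interior critical point. Two small repairs are needed. First, strict monotonicity of $T(x)/x$ only yields $T'(m)\leq 1$ (a strictly decreasing differentiable function can have vanishing derivative at a point), whereas you need the strict inequality $T'(m)<1$ both for $A>0$ and for the implicit function theorem to apply; strictness does hold, but get it by direct computation: $T'(x)\,x=\tfrac{c\alpha P_sx}{1+c\alpha P_sx}\,\gamma(x)\,\tfrac{1}{K}\sum_k D_k^{-2}\leq\tfrac{c\alpha P_sx}{1+c\alpha P_sx}\,T(x)<T(x)$, using $\gamma(x)\leq D_k$. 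Second, in the coercivity step your claim that $m$ stays bounded as some $\lambda_j\to 0^{+}$ is not justified as stated (the self-consistent term $\gamma(m)$ can shrink along such a sequence), but it is also unnecessary: since $m>0$ always, $f\geq\tfrac{\sigma_v^2}{L}\sum_k 1/\lambda_k\to+\infty$ near the boundary regardless, which is all the coercivity argument needs.
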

\begin{proof}
By invoking schur-convexity of the function $(x_1,\cdots,x_K)\mapsto \sum_{k=1}^{K}\frac{1}{x_k}$, it can be readily seen that matrix $\tilde{\mathbf{X}}_p=P\mathbf{I}_K$ is the one that minimizes ${\rm tr}((\tilde{\bf X}_p)^{-1})$ under the constraint of non-negative matrices satisfying the constraint $\frac{1}{K}\mathrm{tr} (\tilde{\bf X}_p)=\mathbf{I}_K$. To prove the desired result, it suffices thus to check that $\overline{m}_{\tilde{\bf X}}(0)$ is also minimized when $\tilde{\bf X}_p=P\mathbf{I}_K$. To see this, and in order to emphasize the dependence of $\overline{m}_{\tilde{\bf X}}(0)$, denote by $\overline{m}_{\tilde{\bf X}}(0; \tilde{\bf X}_p)$ the unique solution to the following equation:
$$
\overline{m}_{\tilde{\bf X}}(0;\tilde{\bf X}_p)=\frac{1}{K}\sum_{k=1}^K \frac{1}{\beta\lambda_k(\tilde{\bf X}_p)+\frac{(1-\beta)P_s}{1+c\alpha P_s\overline{m}_{\tilde{\bf X}}(0;\tilde{\bf X}_p)}},
$$
Denote by $f:x\mapsto \frac{1}{\beta x+\frac{1-\beta}{1+c\alpha P_s\overline{m}_{\tilde{\bf X}}(0;\tilde{\bf X}_p)}}$. Using the convexity of function $f$, we thus obtain $
\overline{m}_{\tilde{\bf X}}(0,\tilde{\bf X}_p)\geq \frac{1}{\beta P+\frac{(1-\beta)P_s}{1+c\alpha P_s\overline{m}_{\tilde{\bf X}}(0;\tilde{\bf X}_p)}}$, or equivalently: $1\geq\frac{1}{\beta P\overline{m}_{\tilde{\bf X}}(0;\tilde{\bf X}_p)+\frac{(1-\beta)P_s\overline{m}_{\tilde{\bf X}}(0;\tilde{\bf X}_p)}{1+c\alpha P_s \overline{m}_{\tilde{\bf X}}(0;\tilde{\bf X}_p)}}$.
Recall that $\overline{m}_{\tilde{\bf X}}(0;P\mathbf{I})$ satisfies the following equation:
\begin{equation}
1= \frac{1}{\beta P\overline{m}_{\tilde{\bf X}}(0;P\mathbf{I}_K)+\frac{(1-\beta)P_s\overline{m}_{\tilde{\bf X}}(0;P\mathbf{I}_K)}{1+c\alpha P_s\overline{m}_{\tilde{\bf X}}(0;P\mathbf{I}_K)}}. \notag \label{eq:m_tilde}
\end{equation}
Let ${g}:x\mapsto \frac{1}{\beta Px+\frac{(1-\beta)P_sx}{1+c\alpha P_sx}}$. Clearly, $g$ is non-increasing in the positive axis, we thus deduce that for all $\tilde{\bf X}_p$ positive matrix, with $\frac{1}{K}{\rm tr}(\tilde{\bf X}_p)=1$, $\overline{m}_{\tilde{\bf X}}(0;\tilde{\bf X}_p)\geq \overline{m}_{\tilde{\bf X}}(0;P\mathbf{I}_K)$. 
\end{proof}
From now on, unless otherwise specified, we will work under the assumption that ${\bf X}_p=PL\mathbf{I}_K$ and focus on deriving insights from the asymptotic expression of the CRB.

\begin{figure*}[htbp]
\begin{equation}
\overline{m}_{\tilde{\bf X}}(0)\!=\!\frac{\sqrt{(\beta P-c\alpha P_s+P_s(1-\beta))^2+4c\alpha P_s\beta P}-(\beta P-c\alpha P_s+(1-\beta)P_s)}{2c\alpha\beta{P}P_s}. \label{eq:m_tilde_exact}
\end{equation}
\hrule
\end{figure*}

\vspace{0.2cm}
\noindent{\bf Simplification of  $\overline{\rm{CRB}}^{\rm avg}$ when $\tilde{\bf X}_p$ is optimized. } Replacing $\tilde{\bf X}_p$ by $P\mathbf{I}_K$ into the expression of $\overline{m}_{\tilde{\bf X}}(0)$, we obtain that $\overline{m}_{\tilde{\bf X}}(0)$ is given by \eqref{eq:m_tilde_exact}. The average ${\rm CRB}$ simplifies thus to:
\begin{equation}
\overline{\rm CRB}^{\rm avg}=(1-c)\alpha \sigma_v^2 \overline{m}_{\tilde{\bf X}}(0)+\sigma_v^2\frac{c\alpha}{\beta P}. \label{eq:avg_crb}
\end{equation}

Using \eqref{eq:m_tilde_exact}, the above expression in \eqref{eq:avg_crb} can help serve to determine the $\beta$ for which $\overline{\rm CRB}^{\rm avg}$ is set to a given desired level ${\rm MSE}$. Indeed, assume that the desired level ${\rm MSE}$ is attainable then, the $\beta$ that allows to achieve this level solves the following equation:
\begin{align}
    a\beta^2+b\beta+d=0,    
\end{align}
where 
\begin{align}
a=&\frac{{\rm MSE}P}{cP_s}({\rm MSE}cPP_s+(1-c)\sigma_v^2(P-P_s)), \notag\\
b=&-\frac{\sigma_v^2}{cP_s}((1-c)(\sigma_v^2\alpha(P-cP_s)-{\rm MSE}PP_s)\nonumber \notag\\
&+{\rm MSE}PP_sc\alpha(c+1)), \notag\\
d=&\sigma_v^4\alpha(\alpha c+c-1), \notag
\end{align}
and should satisfy:
$$
{\rm MSE}\beta P-\sigma_v^2c\alpha +\beta P-c\alpha P_s+(1-\beta )P_s\geq 0.
$$
Similarly, for $\beta$ fixed, the expression in \eqref{eq:avg_crb} can be used to determine the power of training symbol $P$ necessary to reach an attainable level of CRB equal to ${\rm MSE}$. Particularly, the $P$ that allows to achieve this level solves the following equation:
\begin{align}
    a_1P^2+b_1P+d_1=0,
\end{align}
where 
\begin{align}
a_1=&\frac{{\rm MSE}\beta^2}{cP_s}({\rm MSE}cP_s+\sigma_v^2(1-c)), \notag\\
b_1=&-\frac{\sigma_v^2\beta}{cP_s}((1-c)(\alpha\sigma_v^2-{\rm MSE}P_s+P_s\beta {\rm MSE})\nonumber \notag\\
&+c\alpha P_s{\rm MSE}(1+c)), \notag\\
d_1=&\sigma_v^4((1-c)(\beta\alpha-\alpha)+\alpha^2 c),\notag
\end{align}
and should also satisfy
$$
{\rm MSE}\beta P-\sigma_v^2c\alpha +\beta P-c\alpha P_s+(1-\beta )P_s\geq 0.
$$
To get more insights into the behavior of the average $\overline{\rm CRB}^{\rm avg}$, we will study its asymptotic value in two different limiting cases. The first case corresponds to the situation in which $\beta$ approaches $1$. It models the scenario where we use almost all symbols for training. The second case models the situation in which $\beta$ approaches  zero and $\frac{K}{N}=c\alpha=\theta \beta$ where $\theta\in(0,1)$.

\noindent{\underline{Case 1: $\beta\to 1$.} } It follows from \eqref{eq:m_tilde} that:
\begin{equation}
{\overline{m}}_{\tilde{\bf X}}(0)=\frac{1}{P}+\overline{f}(\beta), \label{eq:m_bar}
\end{equation}
where $\overline{f}=O(1-\beta)$. To further refine the calculations, we need to find an approximation of $\overline{f}(\beta)$ up to the order $(1-\beta)^2$. For that, plugging \eqref{eq:m_bar} into \eqref{eq:m_tilde}, we obtain:
$$
\beta P \overline{f}(\beta)+\frac{(1-\beta)P_s(\frac{1}{P}+\overline{f}(\beta))}{1+c\alpha P_s (\frac{1}{P}+\overline{f}(\beta))}=1-\beta,
$$
from which we obtain:
$$
\overline{f}(\beta)=\frac{1-\beta}{P}-\frac{(1-\beta)P_s}{P(P+c\alpha P_s)}+O((1-\beta)^2),
$$
and thus:
\begin{align}
\overline{\rm CRB}^{\rm avg}=&\frac{\sigma_v^2\alpha}{P}+(1-\beta)\frac{\sigma_v^2\alpha}{P}-\frac{(1-\beta)(1-c)\alpha\sigma_v^2P_s}{P(P+c\alpha P_s)}\nonumber \\
&+O((1-\beta)^2) \notag\\
=&\frac{\sigma_v^2\alpha}{P}+(1-\beta)\frac{\alpha\sigma_v^2}{P}(1-\frac{(1-c)P_s}{P+c\alpha P_s})+O(1-\beta)^2.\label{eq:func_P}
\end{align}
It follows from \eqref{eq:func_P} that the CRB is not always a decreasing function of $\beta$. Indeed, one can easily see that if $P\leq P_s(1-c-c\alpha)$, increasing $\beta$ and thus the number of training symbols will not reduce the CRB but would result in an increase of it. The reason lies in that when the power allocated to training symbols is relatively small compared to that allocated to unknown data symbols, the unknown data symbols carry more useful information for channel estimation than the low-power training symbols, making them more informative in this context. 

\noindent{\underline{Case 2: $\beta\to 0$ and $\frac{K}{N}=c\alpha=\theta\beta$}.} It follows from \eqref{eq:m_tilde} that in this case, 
\begin{equation}
\overline{m}_{\tilde{\bf X}}(0)=\frac{1}{P_s}+\overline{g}(\beta), \label{eq:m}
\end{equation}
where $\overline{g}(\beta)=O(\beta). $ To further refine the calculations, we need to find an approximation of $\overline{g}(\beta)$ up to the order $O(\beta^2)$. For that plugging \eqref{eq:m} into \eqref{eq:m_tilde}, we obtain:
$$
(1+P_s\overline{g}(\beta))(1-\theta\beta)-\beta+\frac{\beta P}{P_s}=1+O(\beta^2),
$$
from which we obtain:
$$
\overline{g}(\beta)=\frac{\beta(\theta+1)}{P_s}-\frac{\beta P}{P_s^2}+O(\beta^2),
$$
and thus
\begin{align}
\overline{\rm CRB}^{\rm avg}&=\sigma_v^2\frac{\theta}{P}+\alpha \sigma_v^2\beta\left(\frac{\theta+1}{P_s}-\frac{P}{P_s^2}\right)+O(\beta^2).
%\frac{\alpha \sigma_v^2}{P_s}-\frac{\sigma_v^2\beta \theta}{P_s}+\alpha \sigma_v^2\beta \frac{P}{P_s}(1-\frac{P}{P_s})+O(\beta^2)\\
%&\textcolor{red}{(1-c)\alpha \frac{\sigma_v^2}{P_s}+\frac{\sigma_v^2\theta}{P}+ (1-c)\alpha \frac{\sigma_v^2}{P_s}\beta(\theta+1-\frac{P}{P_s})+O(\beta^2)}
\end{align}

Similar to the previous case, we note that when $P\leq P_s(\theta+1)$, the asymptotic average CRB increases as more training symbols are used. This is because these are not of sufficient power, and it is better here to rely on unknown symbols with more power than low-power training symbols. 

Next, we move on studying the average CRB under the following asymptotic regime:
\begin{assumption}~\label{assumption_DCRB_2}
We assume that the number of users $K$ is fixed, while the transmission block length $N$ grows to infinity, i.e., $N \rightarrow \infty$. The number of transmit antennas $M$ also grows to infinity such that the ratio $\frac{M}{N} \rightarrow \alpha$ for some constant $\alpha > 0$. Under this asymptotic regime, we consider two cases:
\begin{itemize}
    \item \textbf{Case 1}: The number of training symbols $L$ is fixed.
    \item \textbf{Case 2}: The number of training symbols $L$ grows to infinity such that $\frac{L}{N} \rightarrow \beta\in (0,1)$.
\end{itemize}
\end{assumption}
\begin{theorem}\label{assumption 2} \label{th:crb_simple}
    Under Assumption~\ref{assumption_DCRB_2}, the average CRB satisfies the following convergence:
    \begin{align}
    {\rm CRB}^{\rm avg}-\overline{\rm CRB}^{\rm avg}\xrightarrow[]{a.s.} 0, \notag
    \end{align}
    while $\overline{\rm CRB}^{\rm avg}$ takes the following expression in Case 1
    \begin{equation}
    \overline{\rm CRB}^{\rm avg}=\frac{\sigma_v^2K}{LP}+\frac{\alpha \sigma_v^2}{P_s}, \label{eq:case1}
    \end{equation}
    and the following expression 
    \begin{equation}
    \overline{\rm CRB}^{\rm avg}=\frac{\alpha \sigma_v^2}{P_s(1-\beta)+\beta P},
    \label{eq:case2}
    \end{equation}
    in Case 2.
    \begin{proof}
The result follows from applying the strong law of large numbers to show that under Assumption \ref{assumption_DCRB_2}, we have:
    \begin{align}
    \left\|\frac{1}{N}{\bf X}-P_s\frac{(N-L)}{N}\mathbf{I}_K-P\frac{L}{N}\mathbf{I}_K\right\|\xrightarrow[]{a.s.}0.
    \end{align}
    By replacing $\frac{1}{N}{\bf X}$ by $P_s\frac{(N-L)}{N}\mathbf{I}_K+P\frac{L}{N}\mathbf{I}_K$ in the CRB, we get thus the desired. 
        %See Appendix~\ref{app: assumption2}. 
    \end{proof}
\end{theorem}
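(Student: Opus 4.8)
The plan is to reduce the statement to elementary concentration of a fixed-size ($K\times K$) Gram matrix, since in this regime $K$ stays fixed while only $M$, $N$ (and possibly $L$) grow. I would start from the closed-form average CRB in \eqref{eq:deterministic_crb},
\[
{\rm CRB}^{\rm avg}=\frac{(M-K)\sigma_v^2}{KN}{\rm tr}((\tilde{\mathbf{X}}^\ast)^{-1})+\frac{\sigma_v^2}{L}{\rm tr}((\tilde{\bf X}_p^\ast)^{-1}),
\]
and immediately dispose of the pilot term: under the working assumption ${\bf X}_p=PL\mathbf{I}_K$ one has $\tilde{\bf X}_p=P\mathbf{I}_K$, so ${\rm tr}((\tilde{\bf X}_p^\ast)^{-1})=K/P$ and the second summand equals $\sigma_v^2K/(LP)$ \emph{exactly}. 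This contribution is $\sigma_v^2K/(LP)$ in Case~1 (fixed $L$) and vanishes in Case~2 (where $L\to\infty$ with $K$ fixed).

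The crux is the first summand, which I would handle by showing that the fixed-size matrix $\tilde{\bf X}=\frac1N{\bf X}$ concentrates. Writing ${\bf X}={\bf X}_p+{\bf S}_d{\bf S}_d^H=PL\mathbf{I}_K+\sum_{n=L}^{N-1}{\bf s}(n){\bf s}(n)^H$, the deterministic pilot part contributes $\frac{PL}{N}\mathbf{I}_K$, while the data part is an average of $N-L$ independent rank-one matrices. Under Assumption~\ref{ass:data_model} these have common mean $P_s\mathbf{I}_K$ and integrable entries $s_i(n)s_j(n)^\ast$, so the entrywise strong law of large numbers yields $\frac{1}{N-L}{\bf S}_d{\bf S}_d^H\xrightarrow{a.s.}P_s\mathbf{I}_K$; since $K$ is fixed, entrywise convergence is equivalent to convergence in operator norm. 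Combining the two pieces gives the norm convergence already announced in the statement,
\[
\Big\|\tilde{\bf X}-P_s\tfrac{N-L}{N}\mathbf{I}_K-P\tfrac{L}{N}\mathbf{I}_K\Big\|\xrightarrow{a.s.}0.
\]

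With this in hand I would invoke continuity of $\mathbf{A}\mapsto{\rm tr}(\mathbf{A}^{-1})$ on the positive-definite cone — legitimate because the deterministic limit is a strictly positive multiple of the identity, hence uniformly invertible for large $N$ — to conclude
\[
{\rm tr}((\tilde{\bf X}^\ast)^{-1})-\frac{K}{P_s\frac{N-L}{N}+P\frac{L}{N}}\xrightarrow{a.s.}0,
\]
where conjugation is immaterial as the limit is real. Multiplying by the prefactor $\frac{(M-K)\sigma_v^2}{KN}$ and using $\frac{M-K}{N}\to\alpha$, the factor $K$ cancels and one obtains, in Case~1 (where $\frac{L}{N}\to0$, $\frac{N-L}{N}\to1$) the limit $\alpha\sigma_v^2/P_s$, and in Case~2 (where $\frac{L}{N}\to\beta$) the limit $\alpha\sigma_v^2/(P_s(1-\beta)+\beta P)$. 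Adding back the pilot term then settles \eqref{eq:case1} and \eqref{eq:case2} respectively.

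The only genuinely delicate point is the passage from the entrywise strong law to convergence of ${\rm tr}((\tilde{\bf X}^\ast)^{-1})$: one must ensure the limiting matrix stays invertible (guaranteed since $P_s,P>0$ keep it bounded away from $\mathbf{0}$) and that no moment pathology obstructs the SLLN (controlled by the finite second moments in Assumption~\ref{ass:data_model}). Because $K$ remains fixed throughout, no random-matrix machinery — no Stieltjes transform, no fixed-point equation — is needed; the whole argument is finite-dimensional concentration plus continuity of the matrix inverse, which is exactly what makes this regime markedly simpler than the proportional regime of Theorem~\ref{th:average_crb}.
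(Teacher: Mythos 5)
Your proposal is correct and follows essentially the same route as the paper's own proof: establish $\bigl\|\tfrac{1}{N}{\bf X}-P_s\tfrac{N-L}{N}\mathbf{I}_K-P\tfrac{L}{N}\mathbf{I}_K\bigr\|\xrightarrow{a.s.}0$ via the strong law of large numbers (valid here because $K$ is fixed, so entrywise and operator-norm convergence coincide), then substitute the limit into the closed-form average CRB. The only difference is that you make explicit the steps the paper leaves implicit — the decomposition into pilot and data Gram contributions, and the continuity of ${\bf A}\mapsto{\rm tr}({\bf A}^{-1})$ at a uniformly invertible limit — which is exactly the right justification for the paper's ``replacing'' step.
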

%\noindent{\bf Impact of the power on the CRB. }The result of the above theorem provides the following insight: when the number of pilot symbols is fixed relative to the total number of symbols, improvements in the CRB can be achieved by increasing the power of either the pilot symbols or the information symbols (see asymptotic CRB in Case 1). However, when the training period becomes too long, increasing the power of the pilot symbols results in only a marginal reduction in the CRB compared to increasing the power of the information symbols (see asymptotic CRB in Case 2).

\noindent{\bf Comparison with the CRB for training-based schemes. }
Replacing $\tilde{\bf X}$ by $P\mathbf{I}_K$, the average CRB in the training-based scheme simplifies to:
\begin{align}
{\rm CRB}_{\rm training}^{\rm avg}=\frac{M\sigma_v^2}{LP}.
\end{align}
Considering the asymptotic regime in Case 1, we can note that ${\rm CRB}_{\rm training}^{\rm avg}$ grows to infinity linearly with $M$ as $M$ goes to infinity when $L$ is fixed. This must be compared with the $\overline{\rm CRB}^{\rm avg}$ which remains constant as $M$ and $N$ growing large to infinity with the same pace. The interest of semi-blind channel estimation appears clearly when $M$ is large as envisioned in massive MIMO systems. Furthermore, if the number of training symbols $L$ is supposed to be as large as $M$ as stated by Case 2, the average CRB in training based-schemes writes as:
\begin{align}
{\rm CRB}_{\rm training}^{\rm avg}=\frac{\alpha\sigma_v^2}{\beta P}.
\end{align}
By comparing the above expression with \eqref{eq:case2}, we can see that in semi-blind channel estimation adding more training symbols which translates into increasing $\beta$ is only beneficial if $P\geq P_s$. However, if $P_s$ is greater than $P$, it is better that we use only a small fraction of the available symbols for training and refining the channel estimation by exploiting the received signals from data transmissions. 

\noindent{\bf About the impact of the length $N$ of the whole transmission block.}
In~\cite{nayebi2017semi}, the authors concluded that as the number of antennas increases, the CRB for semi-blind estimation approaches the case where all $N$ data symbols are fully known. However, their conclusion was based on an inaccurate treatment of the Fisher information matrix in the asymptotic regime, where only element-wise convergence of each entry was considered. From Theorem \ref{th:crb_simple}, it follows that this result holds only when $L$ also grows with $N$. In this case, assuming $P = P_s$, $\overline{\rm CRB}^{\rm avg}$ derived under Case 2 simplifies to:
\begin{align}
\overline{\rm CRB}^{\rm avg} = \frac{\alpha \sigma_v^2}{P},
\end{align}
which effectively coincides with the scenario where all $N$ symbols are used for training. However, if $L$ is fixed and on the same order as the number of users, $\overline{\rm CRB}^{\rm avg}$ will always remain greater than $\frac{\sigma_v^2 K}{L P}$ and will not decrease indefinitely as $N$ increases.

\section{Analysis of stochastic Cram\'{e}r Rao Bound for semi-blind channel estimation}\label{sec_SCRB}
\subsection{Derivation of the stochastic Cram\'{e}r Rao bound} 
The derivation of the stochastic CRB is carried out assuming that the unknown data symbols are drawn from a probability distribution. Following \cite{nayebi2017semi}, we consider the case where the data symbols are modeled as independent random variables with mean zero and variance ${P_s}$. Under this scenario, the received symbols are modeled during training and data transmissions as:
%In this section, the mathematically tractable formulas of stochastic CRB are derived by using Kronecker products, based on the assumption that the unknown data symbols are modeled as random sequences. For the stochastic model, we use the knowledge of large random matrix theory to further analyze its asymptotic behavior under different asymptotic regimes.
\begin{equation}
\begin{split}
    &\mathbf{y}(n)\sim\mathcal{CN}(\mathbf{Gs}(n),\sigma_v^2\mathbf{I}_M),\,\,n = 0,\ldots,L-1, \\
    &\mathbf{y}(n)\sim\mathcal{CN}(\pmb{0},\mathbf{R}),\qquad\qquad n = L,\ldots,N-1,
\end{split}
\label{eq:data_model}
\end{equation}
where $\mathbf{R}=P_s\mathbf{GG}^H+\sigma_v^2\mathbf{I}_M$. The following theorem provides expressions for the stochastic CRB. While a similar result was established in \cite{nayebi2017semi}, our formulation is simpler due to the use of the Kronecker product. 
\begin{theorem}[\cite{nayebi2017semi}]
\label{th:stochastic_crb_derivation}
Assuming the received symbols are modeled as \eqref{eq:data_model}, the stochastic CRB for channel estimation is given by:
\begin{align}
    \mathbf{CRB}_s =& \left(\frac{2PL}{\sigma_v^2}\mathbf{I}_{2KM}+2P_s^2(N-L)(\mathcal{R}_1(\mathbf{T})+\mathcal{R}_2(\mathbf{C}))\right)^{-1},
\end{align}
where 
\begin{equation}\label{R1(T),R2(C)}
\begin{split}
    \mathcal{R}_1(\mathbf{T}) :=& \left[\begin{array}{cc}
        \mathrm{Re}\{\mathbf{T}\} & \mathrm{Im}\{\mathbf{T}\} \\
        -\mathrm{Im}\{\mathbf{T}\} & \mathrm{Re}\{\mathbf{T}\}
    \end{array}\right], \\
    \mathcal{R}_2(\mathbf{C}) :=& \left[\begin{array}{cc}
        \mathrm{Re}\{\mathbf{C}\} & -\mathrm{Im}\{\mathbf{C}\} \\
        -\mathrm{Im}\{\mathbf{C}\} & -\mathrm{Re}\{\mathbf{C}\}
    \end{array}\right],
\end{split}
\end{equation}
with $\mathbf{T}=(\mathbf{G}^H\mathbf{R}^{-1}\mathbf{G})\otimes(\mathbf{R}^T)^{-1}$, and 
\begin{align}
    \mathbf{C} = \left[\begin{array}{ccc}
        \mathbf{C}_{11} & \cdots & \mathbf{C}_{1K} \\
        \vdots & \ddots & \vdots \\
        \mathbf{C}_{K1} & \cdots & \mathbf{C}_{KK} \\
    \end{array}\right].
\end{align}
Here, $\mathbf{C}_{ij} = (\mathbf{g}_j^H\mathbf{R}^{-1})^T\mathbf{g}_i^H\mathbf{R}^{-1}$, the symbols $\mathcal{R}_1(\cdot)$ and $\mathcal{R}_2(\cdot)$ refer to the two block structures defined in \eqref{R1(T),R2(C)}. 
\end{theorem}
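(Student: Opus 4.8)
The plan is to compute the Fisher information matrix (FIM) of the real parameter vector $\boldsymbol{\theta} = [\mathrm{Re}\{\mathrm{vec}(\mathbf{G})\}^T, \mathrm{Im}\{\mathrm{vec}(\mathbf{G})\}^T]^T \in \mathbb{R}^{2KM}$ and then invert it, since the stochastic CRB is exactly $\mathbf{J}^{-1}$. Because the received vectors $\mathbf{y}(n)$ are independent across $n$, the log-likelihood is a sum of per-sample terms, so $\mathbf{J}$ decomposes additively into a contribution from the $L$ training samples and one from the $N-L$ data samples. I would handle each with the Slepian--Bangs formula for a complex Gaussian $\mathcal{CN}(\boldsymbol{\mu}(\boldsymbol{\theta}), \mathbf{R}(\boldsymbol{\theta}))$, whose $(i,j)$ entry is $2\,\mathrm{Re}\{(\partial_{\theta_i}\boldsymbol{\mu})^H\mathbf{R}^{-1}(\partial_{\theta_j}\boldsymbol{\mu})\} + \mathrm{tr}\{\mathbf{R}^{-1}(\partial_{\theta_i}\mathbf{R})\mathbf{R}^{-1}(\partial_{\theta_j}\mathbf{R})\}$. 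The key structural observation is that the two models in \eqref{eq:data_model} activate exactly one of these terms each: for the training samples only the mean $\mathbf{Gs}(n)$ carries $\mathbf{G}$ while the covariance is the constant $\sigma_v^2\mathbf{I}_M$, and for the data samples the mean is zero while $\mathbf{G}$ enters only through $\mathbf{R} = P_s\mathbf{GG}^H + \sigma_v^2\mathbf{I}_M$.

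For the training block I would vectorize $\mathbf{Gs}(n) = (\mathbf{s}(n)^T\otimes\mathbf{I}_M)\mathrm{vec}(\mathbf{G})$, so the mean Jacobian is $\mathbf{s}(n)^T\otimes\mathbf{I}_M$. Summing the mean term over $n=0,\dots,L-1$ produces the complex information $\frac{1}{\sigma_v^2}(\mathbf{S}_p^*\mathbf{S}_p^T)\otimes\mathbf{I}_M$; the passage to real coordinates, together with the factor of two, wraps this into $\frac{2}{\sigma_v^2}\mathcal{R}_1((\mathbf{S}_p^*\mathbf{S}_p^T)\otimes\mathbf{I}_M)$, and under the optimized orthogonal-training assumption $\mathbf{S}_p\mathbf{S}_p^H = PL\mathbf{I}_K$ adopted earlier this collapses to the isotropic term $\frac{2PL}{\sigma_v^2}\mathbf{I}_{2KM}$, accounting for the first summand in the stated inverse.

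The data block is the substantive part. Here I would work with Wirtinger derivatives, computing $\partial\mathbf{R}/\partial G_{pq} = P_s\,\mathbf{e}_p\mathbf{g}_q^H$ and $\partial\mathbf{R}/\partial G_{pq}^* = P_s\,\mathbf{g}_q\mathbf{e}_p^H$, and then assembling the real FIM from the complex FIM $\mathbf{J}_c$ and the \emph{pseudo}-FIM $\mathbf{P}_c$, whose entries are built from $\mathrm{tr}\{\mathbf{R}^{-1}(\partial_{G^*}\mathbf{R})\mathbf{R}^{-1}(\partial_{G}\mathbf{R})\}$ and $\mathrm{tr}\{\mathbf{R}^{-1}(\partial_{G^*}\mathbf{R})\mathbf{R}^{-1}(\partial_{G^*}\mathbf{R})\}$ respectively. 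A rank-one trace identity of the form $\mathrm{tr}\{\mathbf{R}^{-1}\mathbf{g}_q\mathbf{e}_p^H\mathbf{R}^{-1}\mathbf{e}_r\mathbf{g}_s^H\} = [\mathbf{R}^{-1}]_{pr}[\mathbf{G}^H\mathbf{R}^{-1}\mathbf{G}]_{sq}$ collapses the first family of traces into the Kronecker matrix $\mathbf{T} = (\mathbf{G}^H\mathbf{R}^{-1}\mathbf{G})\otimes(\mathbf{R}^T)^{-1}$, while the analogous identity for the pseudo-term produces the blocks $\mathbf{C}_{ij} = (\mathbf{g}_j^H\mathbf{R}^{-1})^T\mathbf{g}_i^H\mathbf{R}^{-1}$. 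Converting the Wirtinger pair $(\mathbf{J}_c,\mathbf{P}_c)$ back to the real $(\mathrm{Re},\mathrm{Im})$ coordinates is exactly the linear map producing the signed block patterns $\mathcal{R}_1(\mathbf{T})$ and $\mathcal{R}_2(\mathbf{C})$ of \eqref{R1(T),R2(C)}, with the overall scale $2P_s^2(N-L)$ coming from the $P_s^2$ in each derivative, the factor of two from the real-part convention, and the $N-L$ i.i.d.\ data samples.

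Adding the two blocks and inverting then yields the claimed $\mathbf{CRB}_s$. I expect the main obstacle to be the bookkeeping of the data block: because $\mathbf{R}$ depends on $\mathbf{G}$ through $\mathbf{GG}^H$, both $\mathbf{G}$ and $\mathbf{G}^*$ enter, so the pseudo-FIM $\mathbf{P}_c$ is genuinely nonzero and no circular-parameter shortcut applies. Carefully tracking the index orderings in the two trace identities and verifying that the real/imaginary recombination reproduces precisely the sign pattern of $\mathcal{R}_2(\mathbf{C})$ (in particular the $-\mathrm{Re}\{\mathbf{C}\}$ in the lower-right block) is where the delicacy lies; the training block and the final inversion are comparatively routine.
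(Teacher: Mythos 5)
Your proposal is correct and takes essentially the same route as the paper's proof: the same split of the log-likelihood into a training part (where $\mathbf{G}$ enters only through the mean) and a data part (where it enters only through $\mathbf{R}=P_s\mathbf{G}\mathbf{G}^H+\sigma_v^2\mathbf{I}_M$), the same real parameterization of the channel, the same collapse of the training block to $\frac{2PL}{\sigma_v^2}\mathbf{I}_{2KM}$ under the orthogonal-pilot assumption $\sum_{n=0}^{L-1}s_k^{\ast}(n)s_l(n)=PL\delta(k-l)$, and the same rank-one trace reductions that produce the blocks $\mathbf{T}$ and $\mathbf{C}_{ij}$ and hence $\mathcal{R}_1(\mathbf{T})+\mathcal{R}_2(\mathbf{C})$. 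The only difference is bookkeeping: the paper computes the score products $\mathbb{E}\{\partial\mathcal{L}\cdot\partial\mathcal{L}\}$ directly with respect to the real and imaginary parts of each column of $\mathbf{G}$ (in effect re-deriving the Slepian--Bangs terms by hand), whereas you invoke Slepian--Bangs as a known formula and route the data block through Wirtinger derivatives and the FIM/pseudo-FIM pair before converting to real coordinates --- an equivalent computation yielding the identical block and sign structure.
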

\begin{proof}
See Appendix~\ref{app:stochastic_crb_derivation}. 
\end{proof}
\noindent{\bf Stochastic average CRB. } Similar to the deterministic CRB, we define the stochastic average CRB as:
$$
{\rm CRB}_s^{\rm avg}:=\frac{1}{K}{\rm tr}({\bf CRB}_s).
$$
The stochastic average CRB has a similar form to that in \cite{nayebi2017semi}, but it remains mathematically complex. At first glance, it appears mathematically intractable for asymptotic performance analysis due to its implicit dependence on the singular vectors and singular values of the channel matrix ${\bf G}$. Contrary to this initial impression, we demonstrate in the following that the stochastic CRB actually depends exclusively on the singular values of the matrix ${\bf G}$, which enables further simplification and facilitates asymptotic analysis. Letting $\mathbf{G}=\mathbf{U}_{G}\mathbf{D}_{G}\mathbf{V}_{G}^H$, a singular value decomposition of ${\bf G}$ where $\mathbf{U}_{G}\in\mathbb{C}^{M\times M}$ and $\mathbf{V}_{G}\in\mathbb{C}^{K\times K}$ are the left and right singular vector matrices, and $\mathbf{D}_{G}=[\pmb{\Sigma}^T,\mathbf{0}^T]^T$ with $\pmb{\Sigma}=\mathrm{diag}([\sigma_1,\ldots,\sigma_K])\in\mathbb{C}^{K\times K}$ is a diagonal matrix whose diagonal elements are the singular values of matrix $\mathbf{G}$. By expressing ${\bf R}$ as ${\bf R}={\bf U}_G{\bf R}_G{\bf U}_G^{H}$ where ${\bf R}_G=P_s\mathbf{D}_G\mathbf{D}_G^H+\sigma_v^2\mathbf{I}_M$, we obtain the following proposition:

%Performing SVD of $\mathbf{G}$, we can obtain $\mathbf{G}=\mathbf{U}_{G}\mathbf{D}_{G}\mathbf{V}_{G}^H$, where $\mathbf{D}_{G}=[\pmb{\Sigma}^T,\mathbf{0}^T]^T=[\mathbf{d}_1,\ldots,\mathbf{d}_K]\in\mathbb{C}^{M\times K}$, $\mathbf{U}_{G}\in\mathbb{C}^{M\times M}$ and $\mathbf{V}_{G}\in\mathbb{C}^{K\times K}$ are the left and right singular vector matrices, respectively. Here, $\pmb{\Sigma}=\mathrm{diag}([\sigma_1,\ldots,\sigma_K])\in\mathbb{C}^{K\times K}$ is a diagonal matrix employs the singular values of $\mathbf{G}$ as its diagonal entries, and $\mathbf{O}$ is a $(M-K)\times K$ zero matrix. Then we can obtain $\mathbf{R}=\mathbf{U}_G\mathbf{R}_G\mathbf{U}_G^H$, where $\mathbf{R}_G=P_s\mathbf{D}_G\mathbf{D}_G^H+\sigma_v^2\mathbf{I}_M$. Consequently, we have the following Theorem:

\begin{proposition}\label{tr(CRB)=tr(A1)+tr(A2)}
    ${\rm CRB}_s^{\rm avg}=\frac{1}{K}\mathrm{tr}(\mathbf{A}_1^{-1})+\frac{1}{K}\mathrm{tr}(\mathbf{A}_2^{-1})$, where 
    \begin{align}
    \mathbf{A}_1 =& \frac{2PL}{\sigma_v^2}\mathbf{I}_{KM}+2P_s^2(N-L)(\mathbf{T}_G+\mathbf{C_G}), \\
    \mathbf{A}_2 =& \frac{2PL}{\sigma_v^2}\mathbf{I}_{KM}+2P_s^2(N-L)(\mathbf{T}_G-\mathbf{C_G}),
    \end{align}
    with 
    \begin{align}
    \mathbf{T}_{G} =&(\mathbf{D}_{G}^H\mathbf{R}_{G}^{-1}\mathbf{D}_{G})\otimes(\mathbf{R}_{\mathbf{G}})^{-1}, \label{eq:TG}\\
    \mathbf{C}_G =& \left[\begin{array}{ccc}
    \mathbf{C}_{G_{11}} & \cdots & \mathbf{C}_{G_{1K}} \\
    \vdots & \ddots & \vdots \\
    \mathbf{C}_{G_{K1}} & \cdots & \mathbf{C}_{G_{KK}} \\
    \end{array}\right], \label{eq:CG}
    \end{align}
    with $\mathbf{C}_{G_{ij}}=\sigma_j\sigma_i{\bf R}_G^{-1}{\bf e}_j{\bf e}_i^{T}{\bf R}_G^{-1}$ being a all-zero matrix except $c_{ij}=\frac{\sigma_j\sigma_i}{(P_s\sigma_j^2+\sigma_v^2)(P_s\sigma_i^2+\sigma_v^2)}$ at the $j$th row and $i$th column, and ${\bf e}_1,\cdots,{\bf e}_K$  the canonical basis vectors in $\mathbb{R}^{M}$.
    \label{prop:stochastic_Crb}
    \end{proposition}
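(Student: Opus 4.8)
The plan is to turn the $2KM\times 2KM$ real-structured inverse trace into the sum of two $KM\times KM$ complex inverse traces, by a chain of trace-preserving unitary congruences in which all dependence on the singular vectors of $\mathbf{G}$ cancels. Write $c=\tfrac{2PL}{\sigma_v^2}$, $d=2P_s^2(N-L)$, and $\mathbf{W}_1:=\mathbf{V}_{G}\otimes\overline{\mathbf{U}_{G}}$, which is unitary as a Kronecker product of unitaries, so that $\mathbf{CRB}_s=\bigl(c\mathbf{I}_{2KM}+d(\mathcal{R}_1(\mathbf{T})+\mathcal{R}_2(\mathbf{C}))\bigr)^{-1}$. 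The first step is to establish two congruence laws. For the Hermitian block, substituting the SVD into $\mathbf{G}^H\mathbf{R}^{-1}\mathbf{G}=\mathbf{V}_{G}(\mathbf{D}_{G}^H\mathbf{R}_{G}^{-1}\mathbf{D}_{G})\mathbf{V}_{G}^H$ and $(\mathbf{R}^T)^{-1}=\overline{\mathbf{U}_{G}}\mathbf{R}_{G}^{-1}\mathbf{U}_{G}^T$ and applying $(\mathbf{A}\otimes\mathbf{B})(\mathbf{C}\otimes\mathbf{D})=\mathbf{AC}\otimes\mathbf{BD}$ gives $\mathbf{T}=\mathbf{W}_1\mathbf{T}_{G}\mathbf{W}_1^H$ with $\mathbf{T}_{G}$ as in \eqref{eq:TG}. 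For the complementary block I would use the vectorization identity: writing $\mathbf{A}:=\mathbf{R}^{-1}\mathbf{G}=[\mathbf{a}_1,\dots,\mathbf{a}_K]$, the definition $\mathbf{C}_{ij}=\overline{\mathbf{a}_j}\mathbf{a}_i^H$ yields $\mathbf{C}\,\mathrm{vec}(\mathbf{X})=\mathrm{vec}(\overline{\mathbf{A}}\,\mathbf{X}^T\overline{\mathbf{A}})$ for all $\mathbf{X}\in\mathbb{C}^{M\times K}$; feeding $\overline{\mathbf{A}}=\overline{\mathbf{U}_{G}}\mathbf{R}_{G}^{-1}\mathbf{D}_{G}\mathbf{V}_{G}^T$ into this operator and matching the analogous operator for $\mathbf{C}_{G}$ produces the symmetric congruence $\mathbf{C}=\mathbf{W}_1\mathbf{C}_{G}\mathbf{W}_1^T$. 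Two structural facts are recorded for later use: $\mathbf{T}$ is Hermitian while $\mathbf{C}$ is complex symmetric, and $\mathbf{T}_{G},\mathbf{C}_{G}$ are real because $\mathbf{D}_{G},\mathbf{R}_{G}$ are real.

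Next I would strip the singular vectors at the level of the real-structured matrix. Introduce the unitary $\mathbf{Q}=\tfrac{1}{\sqrt2}\left[\begin{smallmatrix}\mathbf{I}&\mathbf{I}\\ -j\mathbf{I}&j\mathbf{I}\end{smallmatrix}\right]$ (with $j$ the imaginary unit), which intertwines the $[\,\mathrm{Re};\mathrm{Im}\,]$ representation with the augmented $[\,\cdot\,;\overline{\,\cdot\,}\,]$ representation. A short block computation gives $\mathbf{Q}^H\mathcal{R}_1(\mathbf{T})\mathbf{Q}=\mathrm{diag}(\overline{\mathbf{T}},\mathbf{T})$ and $\mathbf{Q}^H\mathcal{R}_2(\mathbf{C})\mathbf{Q}=\left[\begin{smallmatrix}\mathbf{0}&\overline{\mathbf{C}}\\ \mathbf{C}&\mathbf{0}\end{smallmatrix}\right]$, hence $\mathbf{Q}^H(\mathcal{R}_1(\mathbf{T})+\mathcal{R}_2(\mathbf{C}))\mathbf{Q}=\left[\begin{smallmatrix}\overline{\mathbf{T}}&\overline{\mathbf{C}}\\ \mathbf{C}&\mathbf{T}\end{smallmatrix}\right]$. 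I would then apply the block-diagonal unitary $\mathbf{Z}=\mathrm{diag}(\overline{\mathbf{W}_1},\mathbf{W}_1)$; invoking the two congruence laws together with $\mathbf{W}_1^H\mathbf{W}_1=\mathbf{I}$ and $\mathbf{W}_1^T\overline{\mathbf{W}_1}=\overline{\mathbf{W}_1^H\mathbf{W}_1}=\mathbf{I}$, every block collapses and one obtains $\mathbf{Z}^H\mathbf{Q}^H(\mathcal{R}_1(\mathbf{T})+\mathcal{R}_2(\mathbf{C}))\mathbf{Q}\mathbf{Z}=\left[\begin{smallmatrix}\mathbf{T}_{G}&\mathbf{C}_{G}\\ \mathbf{C}_{G}&\mathbf{T}_{G}\end{smallmatrix}\right]$.

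Finally I would block-diagonalize this real matrix by the Hadamard-type orthogonal rotation $\mathbf{H}=\tfrac{1}{\sqrt2}\left[\begin{smallmatrix}\mathbf{I}&\mathbf{I}\\ \mathbf{I}&-\mathbf{I}\end{smallmatrix}\right]$, which sends $\left[\begin{smallmatrix}\mathbf{T}_{G}&\mathbf{C}_{G}\\ \mathbf{C}_{G}&\mathbf{T}_{G}\end{smallmatrix}\right]$ to $\mathrm{diag}(\mathbf{T}_{G}+\mathbf{C}_{G},\,\mathbf{T}_{G}-\mathbf{C}_{G})$. Since $c\mathbf{I}_{2KM}$ is invariant under every congruence above, re-inserting $c\mathbf{I}+d(\cdot)$ and inverting block by block yields $\mathrm{diag}(\mathbf{A}_1^{-1},\mathbf{A}_2^{-1})$ with $\mathbf{A}_1,\mathbf{A}_2$ exactly as stated. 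Because $\mathbf{Q},\mathbf{Z},\mathbf{H}$ are all unitary, the trace of the inverse is preserved throughout, so $\mathrm{tr}(\mathbf{CRB}_s)=\mathrm{tr}(\mathbf{A}_1^{-1})+\mathrm{tr}(\mathbf{A}_2^{-1})$; dividing by $K$ gives the claim, and in particular shows that $\mathbf{CRB}_s^{\rm avg}$ depends on $\mathbf{G}$ only through its singular values.

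The \emph{main obstacle} is arranging a single transformation to remove the singular vectors from $\mathcal{R}_1(\mathbf{T})$ and $\mathcal{R}_2(\mathbf{C})$ simultaneously. The difficulty is that the Hermitian block $\mathbf{T}$ transforms by the congruence $\mathbf{W}_1(\cdot)\mathbf{W}_1^H$, whereas the symmetric block $\mathbf{C}$ transforms by $\mathbf{W}_1(\cdot)\mathbf{W}_1^T$, so no common similarity acts correctly on both before the $\mathbf{Q}$-rotation. What resolves it is that after applying $\mathbf{Q}$ the conjugated block $\overline{\mathbf{T}}$ sits opposite $\mathbf{T}$ and $\overline{\mathbf{C}}$ opposite $\mathbf{C}$; pairing $\overline{\mathbf{W}_1}$ with $\mathbf{W}_1$ in $\mathbf{Z}$ then makes the Hermitian and the transpose adjoints appear as $\mathbf{W}_1^H\mathbf{W}_1$ and $\mathbf{W}_1^T\overline{\mathbf{W}_1}$, both equal to $\mathbf{I}$. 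Verifying the symmetric law $\mathbf{C}=\mathbf{W}_1\mathbf{C}_{G}\mathbf{W}_1^T$ through the operator identity $\mathbf{X}\mapsto\overline{\mathbf{A}}\,\mathbf{X}^T\overline{\mathbf{A}}$ is the other delicate point, since the transpose acting on $\mathbf{X}$ is precisely what forces a $\mathbf{W}_1^T$ (rather than $\mathbf{W}_1^H$) on the right.
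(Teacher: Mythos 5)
Your proof is correct, and it rests on the same two preparatory identities as the paper's proof --- $\mathbf{T}=\mathbf{W}_1\mathbf{T}_G\mathbf{W}_1^H$ and $\mathbf{C}=\mathbf{W}_1\mathbf{C}_G\mathbf{W}_1^T$ with $\mathbf{W}_1=\mathbf{V}_G\otimes\mathbf{U}_G^{\ast}$ --- but the way you exploit them afterwards is genuinely different. The paper stays inside the real-structured representation and invokes the composition rules for the maps $\mathcal{R}_1(\cdot)$ and $\mathcal{R}_2(\cdot)$ (citing Bhatia), which give $\mathcal{R}_1(\mathbf{T})+\mathcal{R}_2(\mathbf{C})=\mathcal{R}_1(\mathbf{W}_1)\bigl(\mathcal{R}_1(\mathbf{T}_G)+\mathcal{R}_2(\mathbf{C}_G)\bigr)\mathcal{R}_1(\mathbf{W}_1)^{-1}$; since $\mathbf{T}_G$ and $\mathbf{C}_G$ are real, $\mathcal{R}_1(\mathbf{T}_G)+\mathcal{R}_2(\mathbf{C}_G)$ is \emph{already} $\mathrm{diag}(\mathbf{T}_G+\mathbf{C}_G,\,\mathbf{T}_G-\mathbf{C}_G)$, so no analogue of your Hadamard step is needed there. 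You instead build an explicit chain of unitary conjugations: $\mathbf{Q}$ takes the $[\mathrm{Re};\mathrm{Im}]$ representation to the augmented $[\,\cdot\,;\overline{\,\cdot\,}\,]$ representation, $\mathbf{Z}=\mathrm{diag}(\overline{\mathbf{W}_1},\mathbf{W}_1)$ strips the singular vectors --- and your pairing of $\overline{\mathbf{W}_1}$ with $\mathbf{W}_1$ is precisely what reconciles the Hermitian congruence of $\mathbf{T}$ with the transpose congruence of $\mathbf{C}$, via $\mathbf{W}_1^T\overline{\mathbf{W}_1}=\mathbf{I}$ --- and finally $\mathbf{H}$ block-diagonalizes the resulting real matrix into $\mathbf{T}_G\pm\mathbf{C}_G$. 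Your derivation of the $\mathbf{C}$-congruence through the operator identity $\mathbf{C}\,\mathrm{vec}(\mathbf{X})=\mathrm{vec}(\overline{\mathbf{A}}\,\mathbf{X}^T\overline{\mathbf{A}})$ also differs from the paper's entrywise block computation. I checked the block computations for $\mathbf{Q}^H\mathcal{R}_1(\mathbf{T})\mathbf{Q}$, $\mathbf{Q}^H\mathcal{R}_2(\mathbf{C})\mathbf{Q}$, and the $\mathbf{Z}$-conjugation (which does require the realness of $\mathbf{T}_G$ and $\mathbf{C}_G$ that you record); they are all correct, and unitarity of $\mathbf{Q},\mathbf{Z},\mathbf{H}$ preserves $\mathrm{tr}\bigl((c\mathbf{I}+d(\cdot))^{-1}\bigr)$, which closes the argument. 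What each route buys: yours is fully self-contained (no external matrix-analysis citation) and makes the spectral splitting into $\mathbf{T}_G\pm\mathbf{C}_G$ transparent, at the cost of three explicit conjugations; the paper's is shorter once the $\mathcal{R}_1/\mathcal{R}_2$ calculus is granted, though its $\mathcal{R}_2$ manipulation (the step inserting $\mathcal{R}_2(\mathbf{I}_{MK})$) is considerably more opaque than your direct computation.
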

    \begin{proof}
    See Appendix~\ref{app:simplified_crb}.
    \end{proof}
As can be seen from Proposition \ref{prop:stochastic_Crb}, the stochastic CRB depends only on the singular values of matrix ${\bf G}$. It involves computing the inverse of matrices ${\bf A}_1$ and ${\bf A}_2$. Both of these matrices are not diagonal due to the presence of matrix ${\bf C}_G$. However, since ${\bf C}_G$ possesses only a few non-zero elements, closed-form expressions for the trace of inverse of matrices ${\bf A}_1$ and ${\bf A}_2$ can be computed in closed-form, to obtain finally the following expression of the average stochastic CRB. 
\begin{theorem}
For $x>0$, define the empirical Stieltjes transform of the matrix ${\bf G}^{H}{\bf G}$ as:
$$
m_{{\bf G}^{H}{\bf G}}(x)=\frac{1}{K}\sum_{i=1}^{K}\frac{1}{\sigma_i^2-x},
$$
where $\sigma_1^2\leq \sigma_2^2\leq \cdots\leq \sigma_M^2$ being the eigenvalues of ${\bf G}^{H}{\bf G}$. 
Then, the average stochastic CRB is written as in \eqref{eq:stochastic_crb}.
\label{th:stochastic_crb}
\end{theorem}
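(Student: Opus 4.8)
The plan is to start from Proposition~\ref{prop:stochastic_Crb}, which already reduces the task to evaluating $\frac{1}{K}\operatorname{tr}(\mathbf{A}_1^{-1})+\frac{1}{K}\operatorname{tr}(\mathbf{A}_2^{-1})$, and to exploit the fact that, once everything is expressed in the singular basis of $\mathbf{G}$, each matrix involved is either diagonal or extremely sparse. Writing $r_i:=P_s\sigma_i^2+\sigma_v^2$, the matrix $\mathbf{R}_G$ is diagonal with entries $r_1,\dots,r_K,\sigma_v^2,\dots,\sigma_v^2$, so $\mathbf{T}_G=(\mathbf{D}_G^H\mathbf{R}_G^{-1}\mathbf{D}_G)\otimes\mathbf{R}_G^{-1}$ is diagonal; in the Kronecker pair index $(i,m)$ (block $i\le K$, inner index $m\le M$) its diagonal entry is $\frac{\sigma_i^2}{r_i}[\mathbf{R}_G^{-1}]_{mm}$. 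By contrast $\mathbf{C}_G$ has in block $(i,j)$ a single nonzero entry $c_{ij}$, placed so that it couples only the pair $(i,j)$ with the pair $(j,i)$, and only when both indices are $\le K$. Two algebraic facts drive the computation: first, $c_{ii}=\frac{\sigma_i^2}{r_i^2}$ coincides exactly with the $(i,i)$ diagonal entry of $\mathbf{T}_G$; second, since $\mathbf{C}_G$ merely flips sign between $\mathbf{A}_1$ and $\mathbf{A}_2$, it enters the relevant determinants through $c_{ij}^2$, which is sign-blind.

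Abbreviating $a:=\frac{2PL}{\sigma_v^2}$ and $b:=2P_s^2(N-L)$, I would partition the $KM$ pair indices into three groups and block-diagonalize both matrices accordingly: (i) the $K(M-K)$ pairs $(i,m)$ with $m>K$, on which both $\mathbf{A}_1$ and $\mathbf{A}_2$ are purely diagonal with entry $a+\frac{b\sigma_i^2}{\sigma_v^2 r_i}$; (ii) the $K$ diagonal pairs $(i,i)$, where the identity $c_{ii}=[\mathbf{T}_G]_{(i,i)}$ makes the $\mathbf{A}_2$-entry collapse to exactly $a$ while the $\mathbf{A}_1$-entry becomes $a+\frac{2b\sigma_i^2}{r_i^2}$; and (iii) the $\binom{K}{2}$ couples $\{(i,j),(j,i)\}$ with $i\ne j$, each forming a symmetric $2\times2$ block. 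For such a block I would use $\operatorname{tr}(\cdot^{-1})=(\text{block-trace})/\det$; a short calculation gives $\det=a\bigl(a+\tfrac{b(\sigma_i^2+\sigma_j^2)}{r_i r_j}\bigr)$ and block-trace $2a+\tfrac{b(\sigma_i^2+\sigma_j^2)}{r_i r_j}$, both independent of the off-diagonal sign, so group~(iii) contributes identically to $\operatorname{tr}(\mathbf{A}_1^{-1})$ and $\operatorname{tr}(\mathbf{A}_2^{-1})$ and simplifies to $\frac{1}{a}+\frac{1}{a+b(\sigma_i^2+\sigma_j^2)/(r_i r_j)}$. Equivalently, one can process both matrices at once via $\mathbf{A}_1^{-1}+\mathbf{A}_2^{-1}=2(\mathbf{I}-\mathbf{D}^{-1}\mathbf{C}_G\mathbf{D}^{-1}\mathbf{C}_G b^2)^{-1}\mathbf{D}^{-1}$ with $\mathbf{D}:=a\mathbf{I}+b\mathbf{T}_G$, noting that $\mathbf{C}_G\mathbf{D}^{-1}\mathbf{C}_G$ is diagonal because the double coupling $(i,j)\to(j,i)\to(i,j)$ returns to its origin; this makes the whole expression diagonal and collapses the two traces into a single sum over pairs.

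Having reduced everything to explicit scalar sums over the eigenvalues $\sigma_i^2$ of $\mathbf{G}^H\mathbf{G}$, the remaining step is to recognize each single-index sum as an evaluation of the empirical Stieltjes transform $m_{\mathbf{G}^H\mathbf{G}}$. Every summand arising from groups~(i) and~(ii), such as $\frac{1}{a+\frac{b\sigma_i^2}{\sigma_v^2 r_i}}$ or $\frac{1}{a+\frac{2b\sigma_i^2}{r_i^2}}$, is a rational function of $\sigma_i^2$; by polynomial division and partial fractions it splits into a constant plus terms $\frac{\text{const}}{\sigma_i^2-u_\ell}$, whose normalized sum is exactly $m_{\mathbf{G}^H\mathbf{G}}(u_\ell)$ at the negative poles $u_\ell$ determined by the denominators. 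Collecting these evaluations together with the accumulated constant and $\frac{1}{a}$ terms produces the single-sum part of~\eqref{eq:stochastic_crb}.

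The main obstacle is group~(iii): the term $\frac{1}{a+b(\sigma_i^2+\sigma_j^2)/(r_i r_j)}$ depends jointly on $\sigma_i^2$ and $\sigma_j^2$ and, after clearing denominators, carries a genuine cross term $\sigma_i^2\sigma_j^2$, so it does not factor into single-variable sums and cannot be reduced to finitely many evaluations of the one-dimensional transform $m_{\mathbf{G}^H\mathbf{G}}$. My approach is to treat the double sum iteratively: fixing $i$, the summand is rational in $\sigma_j^2$ with affine numerator and denominator, so it decomposes as a constant plus $\frac{\text{const}_i}{\sigma_j^2-u_i}$ with the $i$-dependent pole $u_i=-\frac{a\sigma_v^2 r_i+b\sigma_i^2}{aP_s r_i+b}<0$; the inner sum over $j$ then equals $K\,m_{\mathbf{G}^H\mathbf{G}}(u_i)$ up to an explicit prefactor, and the outer sum over $i$ assembles the remaining part of~\eqref{eq:stochastic_crb}. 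I expect the bookkeeping of these nested partial-fraction coefficients — and the consistency check that the diagonal $i=j$ contributions isolated in group~(ii) are correctly absorbed once the sum is written over all ordered pairs — to be the most error-prone step, whereas the structural block-diagonalization is routine once the sparsity of $\mathbf{C}_G$ is made explicit.
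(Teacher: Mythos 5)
Your proposal is correct and follows essentially the same route as the paper's own proof: both start from Proposition~\ref{prop:stochastic_Crb}, exploit the sparsity of $\mathbf{C}_G$ (diagonal entries plus the $(i,j)\leftrightarrow(j,i)$ couplings) to reduce $\mathrm{tr}(\mathbf{A}_1^{-1})+\mathrm{tr}(\mathbf{A}_2^{-1})$ to scalar inversions and $2\times 2$ block inversions, and then identify the resulting rational sums in the $\sigma_i^2$ — including the cross-term double sum, handled exactly as you describe by fixing $i$ and recognizing the inner sum over $j$ as $m_{\mathbf{G}^H\mathbf{G}}$ evaluated at an $i$-dependent negative pole (your $u_i$ matches the argument appearing in \eqref{eq:stochastic_crb}). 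Your three-group partition, the trace/determinant shortcut for the $2\times2$ blocks, and the identity $\mathbf{A}_1^{-1}+\mathbf{A}_2^{-1}=2(\mathbf{I}-b^2\mathbf{D}^{-1}\mathbf{C}_G\mathbf{D}^{-1}\mathbf{C}_G)^{-1}\mathbf{D}^{-1}$ are merely a cleaner packaging of the paper's element-wise computation of $a_{ij}^{-1}$ and $b_{ij}^{-1}$ followed by its simplification step \eqref{eq:simp}.
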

\begin{figure*}[t]
\begin{align}
{\rm CRB}_s^{\rm avg}=&\frac{\sigma_v^2K}{PL}+\frac{(M-K)\sigma_v^2}{L(P-P_s)+P_sN}+\frac{(M-K)(N-L)\sigma_v^4}{(L(P-P_s)+P_sN)^2}m_{{\bf G}^{H}{\bf G}}(-\frac{\sigma_v^2PL}{P_s(NP_s+L(P-P_s))})\notag\\
&-\sum_{j=1}^K \frac{\sigma_v^4P_s(N-L)\sigma_i^2}{PL(PLP_s\sigma_i^2+L\sigma_v^2(P-P_s)+P_sN\sigma_v^2)}m_{{\bf G}^{H}{\bf G}}(-\frac{\sigma_v^2(PL\sigma_v^2+P_s^2N\sigma_i^2+P_s\sigma_i^2L(P-P_s))}{PLP_s^2\sigma_i^2+P_s\sigma_v^2(NP_s+(P-P_s)L)}), \label{eq:stochastic_crb} \\
{\rm CRB}_{\rm gain}^{\rm avg}=&\frac{\sigma_v^2(M-K)(N-L)}{(L(P-P_s)+P_sN)}\Big(\frac{P_s}{LP}-\frac{\sigma_v^2}{L(P-P_s)+P_sN}m_{{\bf G}^{H}{\bf G}}(-\frac{\sigma_v^2PL}{P_s(NP_s+L(P-P_s))})\Big)\nonumber\\
&+\sum_{j=1}^K \frac{\sigma_v^4P_s(N-L)\sigma_i^2}{PL(PLP_s\sigma_i^2+L\sigma_v^2(P-P_s)+P_sN\sigma_v^2)}m_{{\bf G}^{H}{\bf G}}(-\frac{\sigma_v^2(PL\sigma_v^2+P_s^2N\sigma_i^2+P_s\sigma_i^2L(P-P_s))}{PLP_s^2\sigma_i^2+P_s\sigma_v^2(NP_s+(P-P_s)L)})\label{eq:crb_gain}.
\end{align}
\hrule
\end{figure*}

\begin{proof}
See Appendix~\ref{app:stochastic_crb}. 
\end{proof}
\noindent{\bf Comparison with training-based schemes.} Recall that the average CRB of training-based systems is given by:
\begin{align}
{\rm CRB}_{\rm training}^{\rm avg}=\frac{\sigma_v^2M}{LP}.
\end{align}
Hence, we can decompose the stochastic average CRB as:
\begin{align}
{\rm CRB}_{s}^{\rm avg}={\rm CRB}_{\rm training}^{\rm avg}-{\rm CRB}_{\rm gain}^{\rm avg},
\end{align}
where ${\rm CRB}_{\rm gain}^{\rm avg}$ represents the gain obtained from using semi-blind channel estimation and is given by \eqref{eq:crb_gain}. Using the fact that $m_{{\bf G}^{H}{\bf G}}(-x)\leq \frac{1}{x}$ for $x\geq 0$, we conclude that ${\rm CRB}_{\rm gain}^{\rm avg}$ is positive and hence the average stochastic CRB for semi-blind channel estimation is always lower than the average CRB of training-based schemes. This fact is consistent with our findings related to deterministic CRB. 
\subsection{Asymptotic analysis of the stochastic CRB}
In this section, we analyze the average stochastic CRB by establishing a direct connection to the Stieltjes transform of $\mathbf{G}^H\mathbf{G}$ under different asymptotic growth regimes. The first regime we examine corresponds to the one outlined in Assumption \ref{assumption_dcrb_1}. However, unlike the deterministic CRB, the stochastic CRB is influenced by the singular values of the channel matrix. As a result, its behavior is directly related to the statistical properties of the channel. To carry out our performance analysis, we adopt the following channel distribution model:
\begin{assumption}[Channel model]
Matrix ${\bf G}$ is composed of independent and identically distributed entries with mean zero and variance $\frac{1}{M}$ and finite fourth order moment. \label{ass:channel_model}
\end{assumption}

It is well-known from~\cite{marchenko1967distribution} that as $M\rightarrow+\infty$, $K\rightarrow+\infty$ with $K/M\rightarrow c\in(0,1)$, under Assumption \ref{ass:channel_model}, the empirical Stieltjes transform of matrix ${\bf G}^{H}{\bf G}$ converges to the Stieltjes transform of the marchenko Pastur law which is given by:
\begin{align}
    m_{F}(z) = \frac{1-c-z-\sqrt{(1-c-z)^2-4cz}}{2cz}. 
\end{align}
in the sense that:
\begin{align}\label{as convergence}
    m_{{\mathbf{G}}^H{\mathbf{G}}}(z) \xrightarrow[M\rightarrow\infty]{\mathrm{a.s.}} m_{F}(z),
\end{align}
Moreover, the Marchenko-Pastur law denoted by $\mu(x)$ is given by:
\begin{align}
\mu(x) = \frac{\sqrt{(x-a)(b-x)}}{2c\pi x}\mathbf{1}_{[a,b]}(x),
\end{align}
where $\mathbf{1}_{[a,b]}(x)$ is the indicator function of the interval $[a,b]$, $a=(1-\sqrt{c})^2$ and $b=(1+\sqrt{c})^2$.  
Based on this result, we can derive the asymptotic expression for the average stochastic CRB as follows. 
\begin{theorem}\label{sto_assumption1}
    Under Assumption~\ref{assumption_dcrb_1} and Assumption \ref{ass:channel_model}, the stochastic average CRB satisfies:
    $$
    {\rm CRB}_{s}^{\rm avg}-\overline{\rm CRB}_s^{\rm avg}\xrightarrow[]{a.s.} 0,
    $$
    where 
    \begin{align}
    &\overline{\rm CRB}_s^{\rm avg}= \frac{\sigma_v^2c\alpha}{P\beta} +\frac{\sigma_v^2(1-c)\alpha}{(P-P_s)\beta+P_s} \notag \\
    &\qquad+\frac{\sigma_v^4(1-\beta)(1-c)\alpha}{((P-P_s)\beta+P_s)^2}m_F\left(-\frac{\sigma_v^2P\beta}{P_s((P-P_s)\beta+P_s)}\right) \notag \\
    &\qquad-\int_a^b\frac{c\alpha(1-\beta)P_s x\sigma_v^4}{P\beta(P\beta P_s x+((P-P_s)\beta+P_s)\sigma_v^2)} \notag\\
   &\qquad\cdot m_F\left(-\frac{\sigma_v^2(P_s((P-P_s)\beta+P_s)x+P\beta\sigma_v^2)}{P\beta P_s^2x+P_s((P-P_s)\beta+P_s)\sigma_v^2}\right)\mu(x)dx .
    \end{align}
    \begin{proof}
        It is well-established from standard results in random matrix theory that \( m_{{\bf G}^{H}{\bf G}}(z) \) converges uniformly to \( m_F(z) \) over any compact interval of \( \mathbb{C} \setminus \mathbb{R}_{+} \) \cite{couillet2011random}. Furthermore, for any bounded and continuous function \( f \),
\begin{equation}
\frac{1}{K}\sum_{i=1}^{K} f(\lambda_i) \xrightarrow[]{a.s.} \int_{a}^{b} f(x)\mu(dx),
\end{equation}
almost surely. The proof proceeds by leveraging both of these properties in the analysis of the CRB expression in \eqref{eq:stochastic_crb}.

    \end{proof}
\end{theorem}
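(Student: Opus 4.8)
The plan is to start from the exact finite-size expression \eqref{eq:stochastic_crb} and pass to the limit term by term, invoking the two random matrix facts recalled just before the theorem: the almost sure convergence of $m_{\mathbf{G}^H\mathbf{G}}(z)$ to the Marchenko--Pastur Stieltjes transform $m_F(z)$, uniform over compact subsets of $\mathbb{C}\setminus\mathbb{R}_+$, and the almost sure weak convergence of the empirical spectral measure $\frac{1}{K}\sum_{i}\delta_{\sigma_i^2}$ to $\mu$. Every coefficient in \eqref{eq:stochastic_crb} is a ratio of polynomials in $K,M,L,N$ of matching degree, so under Assumption~\ref{assumption_dcrb_1} each coefficient converges once the ratios $K/M\to c$, $M/N\to\alpha$, $L/N\to\beta$ are substituted (yielding e.g. $K/L\to c\alpha/\beta$, $(M-K)/N\to(1-c)\alpha$, and $(N-L)/N\to 1-\beta$).

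For the first two terms this is immediate after dividing numerator and denominator by $N$: $\frac{\sigma_v^2K}{PL}\to\frac{\sigma_v^2 c\alpha}{P\beta}$ and $\frac{(M-K)\sigma_v^2}{L(P-P_s)+P_sN}\to\frac{(1-c)\alpha\sigma_v^2}{(P-P_s)\beta+P_s}$. The third term carries a single evaluation of the Stieltjes transform at the argument $-\frac{\sigma_v^2PL}{P_s(NP_s+L(P-P_s))}$, which converges to a strictly negative limit; since this limit lies in a neighbourhood on which $m_{\mathbf{G}^H\mathbf{G}}\to m_F$ uniformly, I replace the random transform by $m_F$ at the limiting argument, and the product of the two convergent factors reproduces the third term of $\overline{\rm CRB}_s^{\rm avg}$.

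The genuinely delicate contribution is the last (summation) term. Writing it as $\frac{1}{K}\sum_{i=1}^{K}\psi_N(\sigma_i^2)$ with $\psi_N(x)=\big[K\,\mathrm{coeff}_N(x)\big]\,m_{\mathbf{G}^H\mathbf{G}}(h_N(x))$, where $\mathrm{coeff}_N(x)$ is the rational prefactor and $h_N(x)$ the eigenvalue-dependent argument of \eqref{eq:stochastic_crb}, the obstacle is that the Stieltjes transform sits \emph{inside} the sum with an argument that varies with the summation index; the summand is therefore not a fixed bounded continuous function of $\sigma_i^2$, and the empirical-measure convergence cannot be applied directly. My strategy is a two-stage decoupling. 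First, I confine all eigenvalues $\sigma_i^2$ to a fixed compact set: the finite fourth moment in Assumption~\ref{ass:channel_model} is precisely the Bai--Yin hypothesis forbidding eigenvalues from escaping the bulk $[a,b]$ asymptotically, so for $N$ large the arguments $h_N(\sigma_i^2)$ all lie in a common compact subset $\mathcal{K}\subset(-\infty,0)$ bounded away from $\mathbb{R}_+$. On $\mathcal{K}$ the convergence $m_{\mathbf{G}^H\mathbf{G}}\to m_F$ is uniform, which lets me replace $m_{\mathbf{G}^H\mathbf{G}}(h_N(\sigma_i^2))$ by $m_F(h_N(\sigma_i^2))$ simultaneously for all $i$ with an $o(1)$ almost sure error. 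Second, with the transform frozen to its deterministic limit, the modified summand $\tilde\psi_N(x)=[K\,\mathrm{coeff}_N(x)]\,m_F(h_N(x))$ is a genuine bounded continuous function of $x$ on $[a,b]$ converging uniformly, as the ratios converge, to $\tilde\psi_\infty(x)=\frac{c\alpha(1-\beta)P_sx\sigma_v^4}{P\beta(P\beta P_sx+((P-P_s)\beta+P_s)\sigma_v^2)}\,m_F(h_\infty(x))$. Splitting $\frac{1}{K}\sum_i\tilde\psi_N(\sigma_i^2)$ into $\frac{1}{K}\sum_i\tilde\psi_\infty(\sigma_i^2)$ plus a remainder bounded by $\sup_{[a,b]}|\tilde\psi_N-\tilde\psi_\infty|\to0$, the first piece converges almost surely to $\int_a^b\tilde\psi_\infty(x)\mu(dx)$, giving exactly the integral term of $\overline{\rm CRB}_s^{\rm avg}$.

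Collecting the four limits and using that almost sure convergence is preserved under finite sums completes the argument. The crux, and the only step requiring care beyond routine bookkeeping, is this decoupling of the eigenvalue-dependent argument from the Stieltjes transform in the summation term; it hinges on uniform (rather than merely pointwise) spectral convergence together with the no-outlier guarantee supplied by the fourth-moment assumption, which is exactly why Assumption~\ref{ass:channel_model} imposes it.
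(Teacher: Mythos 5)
Your proposal is correct and takes essentially the same route as the paper, whose proof is only a sketch citing the same two ingredients (uniform convergence of $m_{\mathbf{G}^{H}\mathbf{G}}$ to $m_F$ on compact subsets of $\mathbb{C}\setminus\mathbb{R}_{+}$, and almost sure weak convergence of the empirical spectral measure to the Marchenko--Pastur law); your term-by-term limits and the decoupling of the eigenvalue-dependent Stieltjes argument in the summation term are precisely the elaboration of how these two facts are ``leveraged.'' The only inessential ingredient is the Bai--Yin step: it is not actually needed, since the argument map $h_N$ is minus a ratio of affine functions of $\sigma_i^2$ with positive coefficients and therefore sends all of $[0,\infty)$ into a fixed compact subset of $(-\infty,0)$, so the uniform replacement of $m_{\mathbf{G}^{H}\mathbf{G}}$ by $m_F$ is valid regardless of where the eigenvalues fall.
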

%Similar to the deterministic CRB, to gain insights into the behavior of the average stochastic CRB, we consider studying its behavior into two different limiting cases. The first case corresponds to the case where $\beta$ approaches $1$, modeling the situation of almost use of training symbols for channel estimation. 

Similar to the deterministic CRB, to gain insights into the behavior of the average stochastic CRB, we consider studying its behavior corresponds to the case where $\beta$ approaches $1$, modeling the situation of almost use of training symbols for channel estimation. By Taylor-expanding the limiting average stochastic CRB around $\beta$ approaching one, we can obtain after easy calculations that:
\begin{align}
\overline{\rm CRB}_s^{\rm avg}=&\frac{\sigma_v^2\alpha}{P}+\frac{\sigma_v^2\alpha}{P}(1-\beta)(1-\frac{P_s}{P}) + \frac{c\alpha\sigma_v^2}{P^2}P_s(1-\beta)\notag\\
&-\frac{(1-\beta)}{P^2}\sigma_v^4c\alpha m_F(-\frac{\sigma_v^2}{P_s})(1-\frac{\sigma_v^2}{P_s}m_F(-\frac{\sigma_v^2}{P_s})) \notag\\
&+O(1-\beta)^2.
\end{align}
Using the fact that $xm_F(-x)<1$, and that $m_F(-x)\leq \frac{1}{x}$, we can easily see that when $P\geq P_s$, increasing the number of training symbols always results in a decrease in the average stochastic CRB. 

Next, we go on studying the average stochastic CRB under the regime specified in Assumption \ref{assumption_DCRB_2} and the following channel model. 
\begin{assumption}(Channel model.) Matrix ${\bf G}$ is modeled as ${\bf G}={\bf H}{\bf B}^{1/2}$ where ${\bf B}$ models the large-scale fading coefficients between the BS and the users and ${\bf H}$ is composed of independent and identically distributed entries with mean zero and variance $\frac{1}{M}$ and finite fourth order moment. Moreover, the spectral norm of matrix ${\bf B}$ is assumed to be uniformly bounded below and above in $M$, that is $\limsup_{M} \lambda_M^2<\infty$ and $\liminf_{M}\lambda_1^2>0$ with $\lambda_1^2\leq\cdots\leq \lambda_M^2$ being the eigenvalues of ${\bf B}$.  \label{ass:channel_model_cl}
\end{assumption}

\begin{theorem}
Under Assumption \ref{assumption_DCRB_2} and Assumption \ref{ass:channel_model_cl}, the stochastic average CRB satisfies the following convergence:
$$
{\rm CRB}_s^{\rm avg}-\overline{\rm CRB}_s^{\rm avg}\xrightarrow[]{a.s.}0,
$$
where 
\begin{equation}
\overline{\rm CRB}_s^{\rm avg}=\frac{\sigma_v^2K}{PL}+\frac{\sigma_v^2\alpha}{P_s}+\frac{\alpha \sigma_v^4}{P_s^2}m_{{\bf B}}(0)-\sum_{i=1}^K \frac{\lambda_i^2\sigma_v^2}{PL}m_{{\bf B}}(-\lambda_i^2),\label{eq:case1s}
\end{equation}
in Case 1 and
\begin{align}
\overline{\rm CRB}_s^{\rm avg}=&\frac{\alpha \sigma_v^2}{\beta(P-P_s)+P_s}+\frac{\alpha (1-\beta)\sigma_v^4}{(\beta(P-P_s)+P_s)^2}\nonumber\\
&\cdot m_{{\bf B}}(-\frac{\sigma_v^2P\beta}{P_s(P_s+\beta(P-P_s))}),\label{eq:case2s}
\end{align}
in Case 2, 
where $m_{\bf B}(x)$ is the Stieltjes transform of the empirical distribution of the eigenvalues of ${\bf B}$ given by:
$$
m_{\bf B}(x)=\frac{1}{K}\sum_{i=1}^{K}\frac{1}{\lambda_i^2-x},
$$
with $\lambda_1^2,\cdots,\lambda_M^2$ being the eigenvalues of matrix ${\bf B}$.
\begin{proof}
Applying the strong law of large numbers yields
    \begin{align}
    \|\mathbf{G}^H\mathbf{G}-\mathbf{B}\|\xrightarrow[]{a.s.}0.
    \end{align}
    By replacing $\mathbf{G}^H\mathbf{G}$ by $\mathbf{B}$ in \eqref{eq:stochastic_crb}, we get thus the desired.
\end{proof}
\end{theorem}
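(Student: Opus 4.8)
The plan is to start from the exact, finite-dimensional expression \eqref{eq:stochastic_crb} established in Theorem~\ref{th:stochastic_crb}, which expresses ${\rm CRB}_s^{\rm avg}$ solely through the $K$ eigenvalues $\sigma_1^2,\ldots,\sigma_K^2$ of $\mathbf{G}^H\mathbf{G}$ --- both explicitly, via the $\sigma_i^2$ appearing in the summation, and implicitly, through the Stieltjes transform $m_{\mathbf{G}^H\mathbf{G}}$ evaluated at several ($N$-, $L$-, $M$-dependent) negative arguments. Since $K$ is held fixed under Assumption~\ref{assumption_DCRB_2}, the whole right-hand side is a fixed, finite rational function of these eigenvalues and of the ratios $M/N$ and $L/N$. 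The strategy is therefore to first pin down the almost-sure limit of the $\sigma_i^2$, and then pass to the limit termwise.

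For the first step, I would use the block-fading structure $\mathbf{G}=\mathbf{H}\mathbf{B}^{1/2}$ of Assumption~\ref{ass:channel_model_cl} to write $\mathbf{G}^H\mathbf{G}=\mathbf{B}^{1/2}\mathbf{H}^H\mathbf{H}\mathbf{B}^{1/2}$. Because $K$ is fixed while $M\to\infty$, the strong law of large numbers applied entrywise to $\mathbf{H}^H\mathbf{H}$ (after the $1/M$ normalization, the diagonal entries are averages of $M$ unit-mean terms and the off-diagonal entries are averages of $M$ zero-mean terms built from independent columns) yields $\mathbf{H}^H\mathbf{H}\xrightarrow[]{a.s.}\mathbf{I}_K$, and hence $\|\mathbf{G}^H\mathbf{G}-\mathbf{B}\|\xrightarrow[]{a.s.}0$. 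As eigenvalues are continuous functions of a Hermitian matrix of fixed size, this gives $\sigma_i^2\xrightarrow[]{a.s.}\lambda_i^2$ for each $i$.

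For the second step, I would substitute this limit and take termwise limits in \eqref{eq:stochastic_crb}, factoring the appropriate powers of $N$ from numerators and denominators and using $M/N\to\alpha$ together with, in Case~1, $L$ fixed (so $M-K\sim M$ and $N-L\sim N$), and in Case~2, $L/N\to\beta$. The one point requiring care is the Stieltjes-transform factors: their arguments are themselves moving with $N$ but converge to fixed limits --- to $0$ for the third term in Case~1 and to a strictly negative value otherwise. Since the limiting spectrum $\{\lambda_i^2\}$ is bounded away from zero by the hypothesis $\liminf_M\lambda_1^2>0$, every evaluation point stays uniformly away from the positive spectrum, so the finite sum $x\mapsto\frac{1}{K}\sum_i\frac{1}{\lambda_i^2-x}$ is jointly continuous in $(\sigma_i^2,x)$ there; this legitimizes replacing $m_{\mathbf{G}^H\mathbf{G}}(\cdot)$ by $m_{\mathbf{B}}(\cdot)$ evaluated at the limiting arguments. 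Collecting the surviving terms reproduces \eqref{eq:case1s} in Case~1. In Case~2 one additionally checks that the prefactor of the summation scales like $1/N$ (numerator $\sim N$, denominator $\sim N^2$), so the entire sum vanishes, leaving \eqref{eq:case2s}.

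The main obstacle is precisely this simultaneous passage to the limit inside the Stieltjes-transform terms, rather than any deep random-matrix input: one must verify that the moving evaluation points remain bounded away from the spectrum so that no near-singular denominators arise, confirm that $m_{\mathbf{B}}(0)$ is finite in Case~1 through the $\liminf_M\lambda_1^2>0$ hypothesis, and track the exact $N$-scaling of each of the four terms --- in particular the cancellation that makes the summation negligible in Case~2 but order-one in Case~1.
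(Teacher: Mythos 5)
Your proposal is correct and takes essentially the same approach as the paper's proof: apply the strong law of large numbers to obtain $\|\mathbf{G}^H\mathbf{G}-\mathbf{B}\|\xrightarrow[]{a.s.}0$, then replace $\mathbf{G}^H\mathbf{G}$ by $\mathbf{B}$ in \eqref{eq:stochastic_crb} and pass to the limit termwise in each case of Assumption~\ref{assumption_DCRB_2}. Your write-up simply supplies the details the paper leaves implicit---the entrywise SLLN on $\mathbf{H}^H\mathbf{H}$ using the finite fourth moment, eigenvalue continuity at fixed $K$, the moving Stieltjes-transform arguments staying away from the limiting spectrum thanks to $\liminf_{M}\lambda_1^2>0$, and the $1/N$ scaling that annihilates the summation term in Case~2.
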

\begin{remark}
In practice, it is important for channel models to incorporate spatial correlation for more accurate and realistic performance analysis. To this end, we adopt the widely used Kronecker model~\cite{forenza2007simplified}, where the channel matrix is modeled as ${\bf G}={\bf \Phi}_{r}^{1/2}{\bf H}{\bf \Phi}_{t}^{1/2}$, where ${\bf \Phi}_{t}$ and ${\bf \Phi}_{r}$ denote the spatial correlation matrices at the transmitter and receiver, respectively, capturing the spatial dependencies among the array elements. Under this model, it can be shown that
\begin{align}
    \left\|\mathbf{G}^H\mathbf{G}-\pmb{\Phi}_t\frac{1}{K}\mathrm{tr}(\pmb{\Phi}_r)\right\|\xrightarrow[M\rightarrow\infty]{a.s.} 0.
\end{align}
Based on this result, the asymptotic CRB expressions for spatially correlated MIMO channels can be derived by following a procedure similar to that used in Theorem 8.
\end{remark}
\noindent{\bf Comparison with the deterministic CRB.} Under Assumption 3, we compare the asymptotic expressions for the average stochastic CRB in \eqref{eq:case1s} and \eqref{eq:case2s} with the deterministic CRB in \eqref{eq:case1} and \eqref{eq:case2}. We observe that the stochastic CRB is numerically larger, indicating a tighter lower bound on the MSE of any unbiased channel estimator within the same estimation framework.

\noindent{\bf Comparison with training-based schemes.} Consistent with our findings on the deterministic CRB, the stochastic CRB approaches zero as $N$ increases to infinity at a faster rate than $M$, but only if $L$ grows at the same pace (Case 2 in Assumption \ref{assumption_DCRB_2}). However, when the number of training symbols remains on the order of $K$, the average stochastic CRB is bounded below by $\frac{\sigma_v^2K}{PL}$, preventing it from decreasing indefinitely as $N$ increases.

\par 

\begin{figure}[t]
\centering
\includegraphics[width=2.7 in]{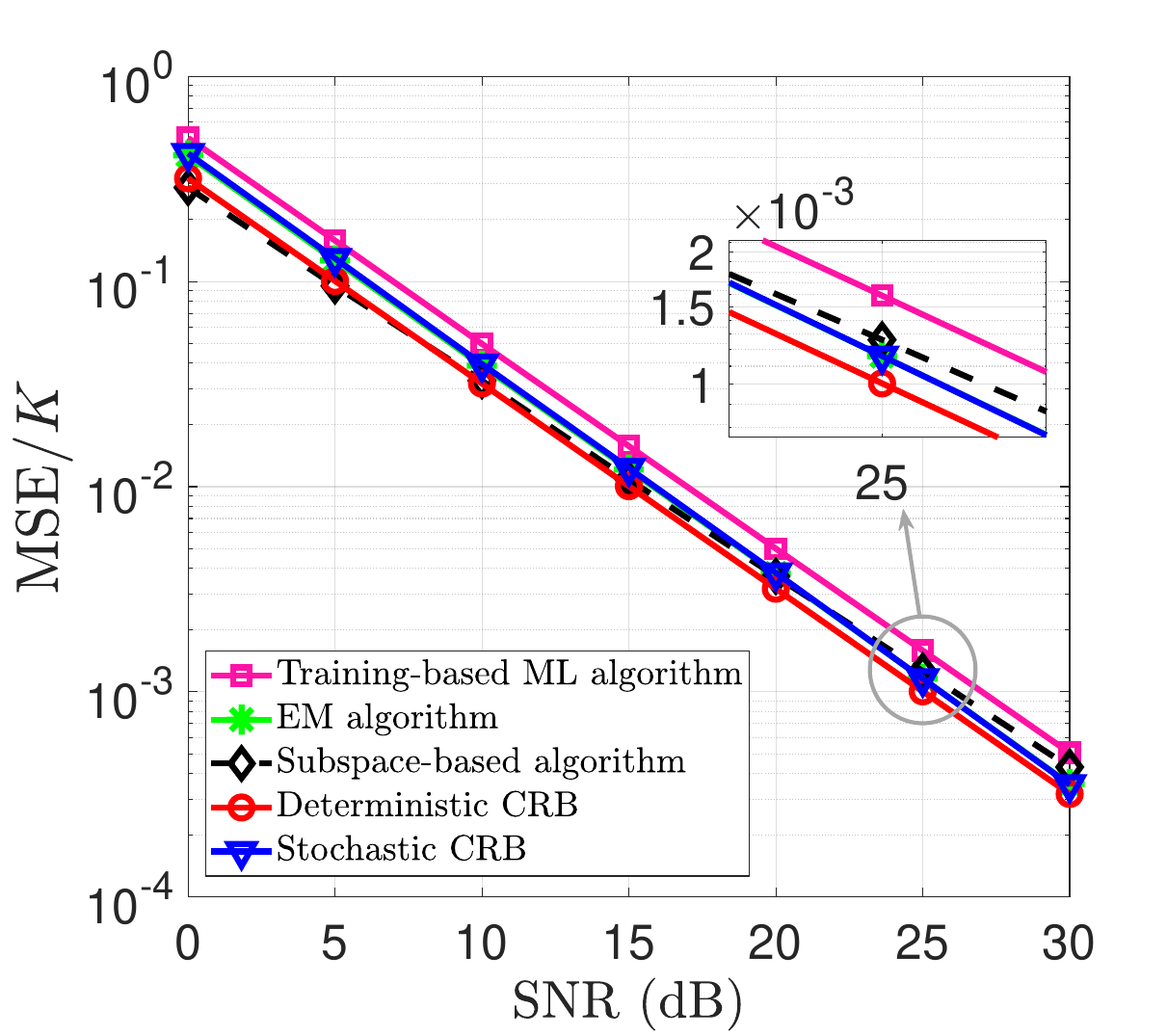}
\caption{$\mathrm{MSE}/K$ versus $\mathrm{SNR}$ with $K=32$, $L=64$, $M=512$, and $N=1024$.}
\label{SNR_MSE}
\end{figure}

\begin{figure}[t]
\centering
\subfloat[Deterministic CRB under Assumption~\ref{assumption_dcrb_1}]{
\label{DCRB_Accuracy}
\includegraphics[width=2.7 in]{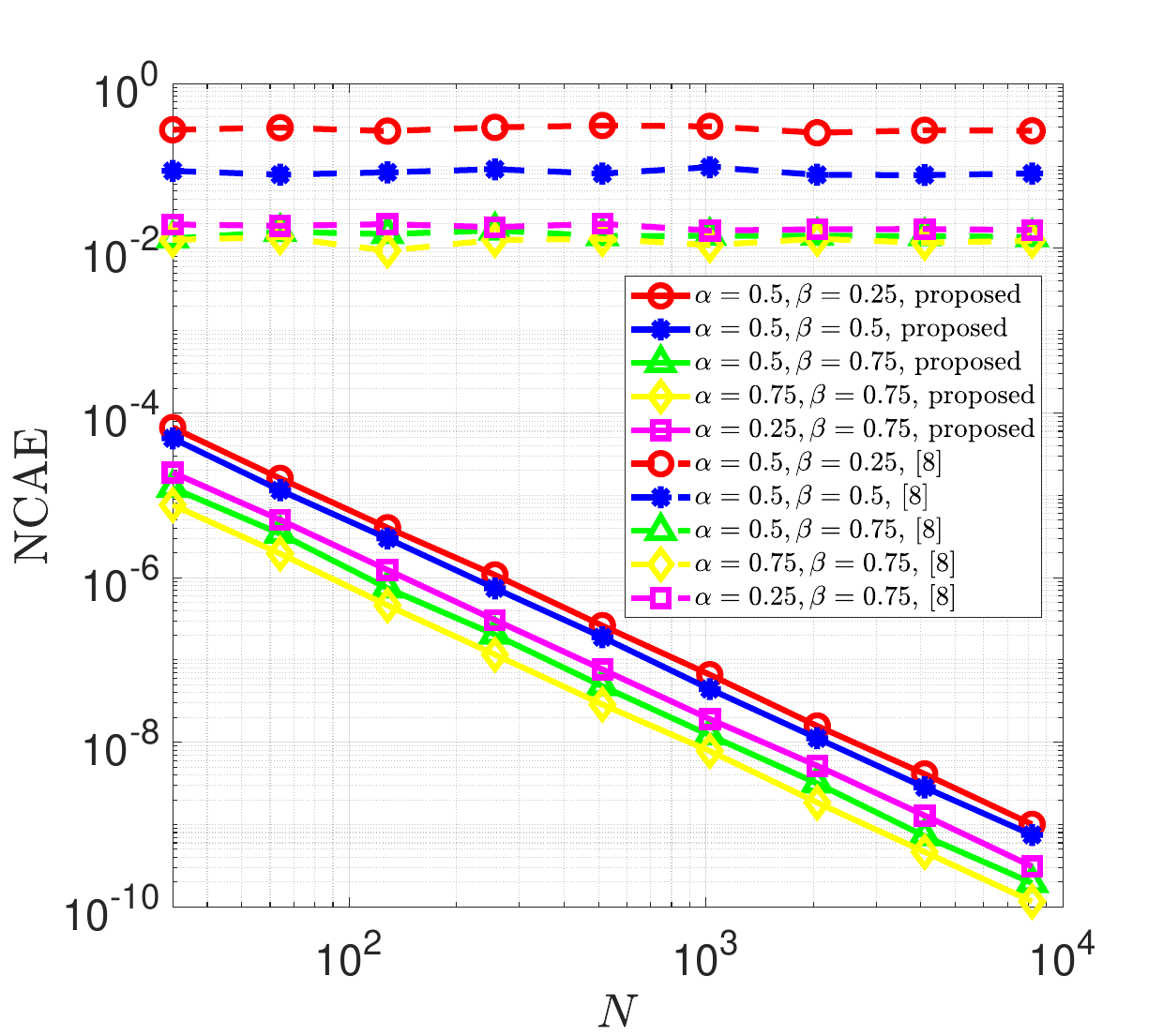}}
\\
\subfloat[Stochastic CRB under Assumption~\ref{assumption_dcrb_1}]{
\label{SCRB_Accuracy}
\includegraphics[width=2.7 in]{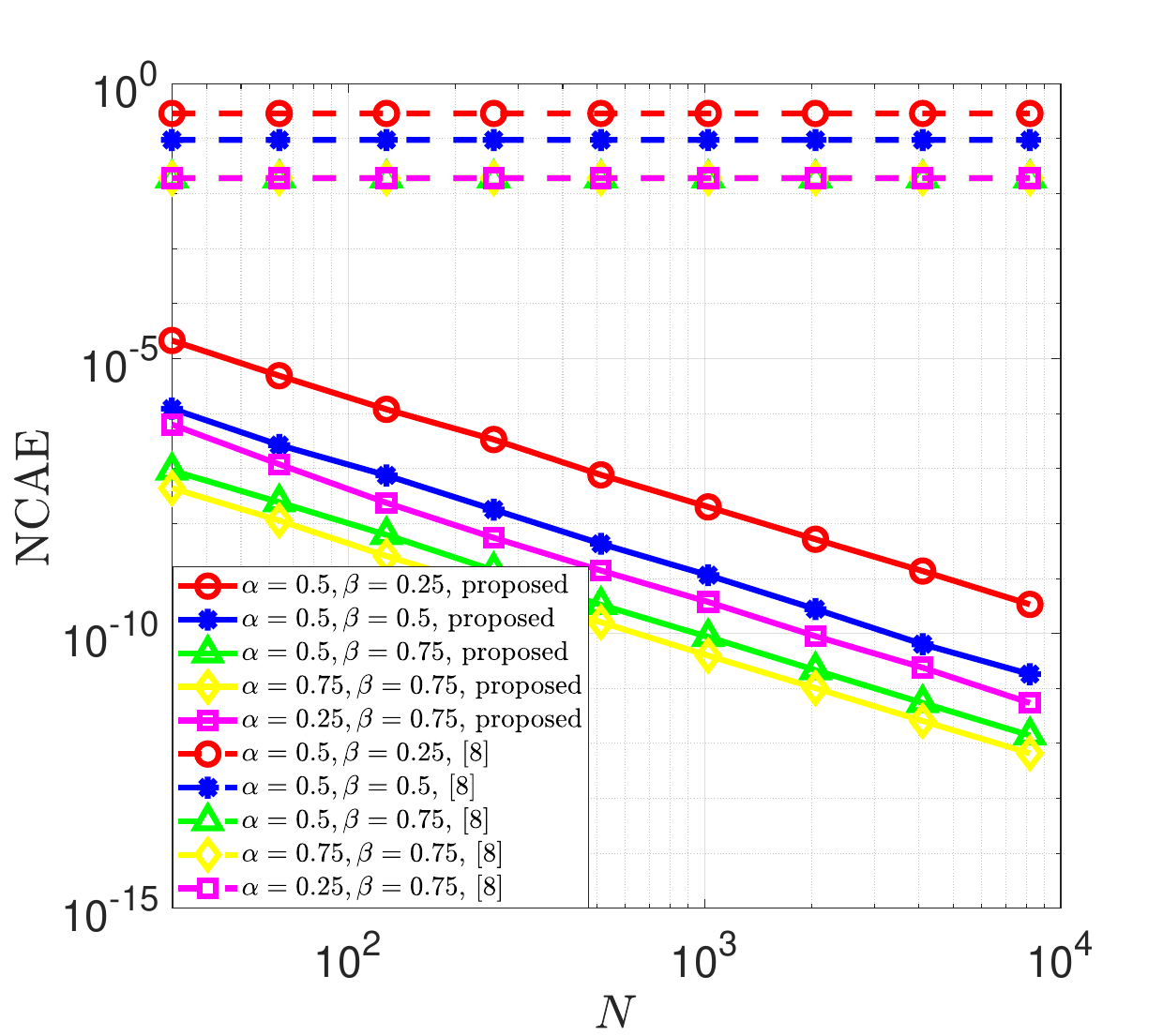}}
\caption{$\mathrm{NCAE}$ versus $N$ with $K=M/2$, $\mathrm{SNR}=10$\,dB.}
\label{Accuracy_CRB}
\end{figure}

\section{Numerical results} 
\label{sec:sim}
This section provides numerical simulations illustrating the asymptotic behaviors of deterministic CRB and stochastic CRB given in the previous sections. In the simulation experiments, two types of users are uniformly distributed within the single cell with radius $100$\,m and $500$\,m, respectively. For large-scale fading coefficients, we consider the normalized three-slope path loss model~\cite{tang2001mobile} as follows
\begin{align}
    \rho_k=\frac{\bar{\beta}_k}{\mathbb{E}\{\bar{\beta}_k\}},\,\,\, \bar{\beta}_k = \left\{\begin{array}{ll}
       c_0  & d_k \le d_0 \\
       \frac{c1}{d_k^2}  & d_0<d_k\le d_1 \\
       \frac{c_2z_k}{d_k^{3.5}} & d_k >d_1
    \end{array}\right. ,
\end{align}
where $d_k$ is the distance between the BS and $k$th user, $z_k$ is the log-normal shadow fading, i,e., $10\mathrm{log}_{10}z_k\sim\mathcal{N}(0,\sigma_{\mathrm{shad}}^2)$ with $\sigma_{\mathrm{shad}}=8$\,dB, and 
\begin{align}
    10\mathrm{log}_{10}c_2 =& -46.3-33.9\mathrm{log}_{10}f + 13.82\mathrm{log}_{10}h_B \notag \\
    &+ (1.1\mathrm{log}_{10}f-0.7)h_R - (1.56\mathrm{log}_{10}f-0.8), \notag\\
    10\mathrm{log}_{10}c_1 =& 10\mathrm{log}_{10}c2 - 15\mathrm{log}_{10}(d_1), \notag \\
    10\mathrm{log}_{10}c_0 =& 10\mathrm{log}_{10}c1 - 20\mathrm{log}_{10}(d_0),\notag
\end{align}
with $d_0 = 10$\,m, $d_1=50$\,m, $f=1900$\,MHz being the carrier frequency, $h_B=15$\,m being the BS antenna height, $h_R$ being the user antenna height. Additionally, the signal-to-noise ratio (SNR) is calculated by $\mathrm{SNR}=\sigma_v^{-2}$. We assume that matrix ${\bf S}_p$ satisfies ${\bf S}_p{\bf S}_p^{H}=\mathbf{I}_K$. Furthermore, for the deterministic CRB, we assume that ${\bf S}_d$ is composed of symbols drawn from a QPSK constellation with $P_s=1$. We assess the accuracy of the asymptotic approximation to the CRB using the normalized CRB approximation error (NCAE), defined as
\begin{align}
    \mathrm{NCAE} = \frac{\mathbb{E}\big[|{\rm CRB}_{\rm true}-\overline{\rm CRB}_{\rm asy}|^2\big]}{(\mathbb{E}\big[{\rm CRB}_{\rm true}^2\big])},
\end{align}
where ${\rm CRB}_{\mathrm{true}}$ denotes either the deterministic or stochastic CRB computed using (\ref{eq:deterministic_crb}) or (\ref{eq:stochastic_crb}), respectively, and ${\overline{\rm CRB}}_{\mathrm{asy}}$ represents the corresponding asymptotic expression.

\begin{figure}[t]
\centering
\includegraphics[width=2.7 in]{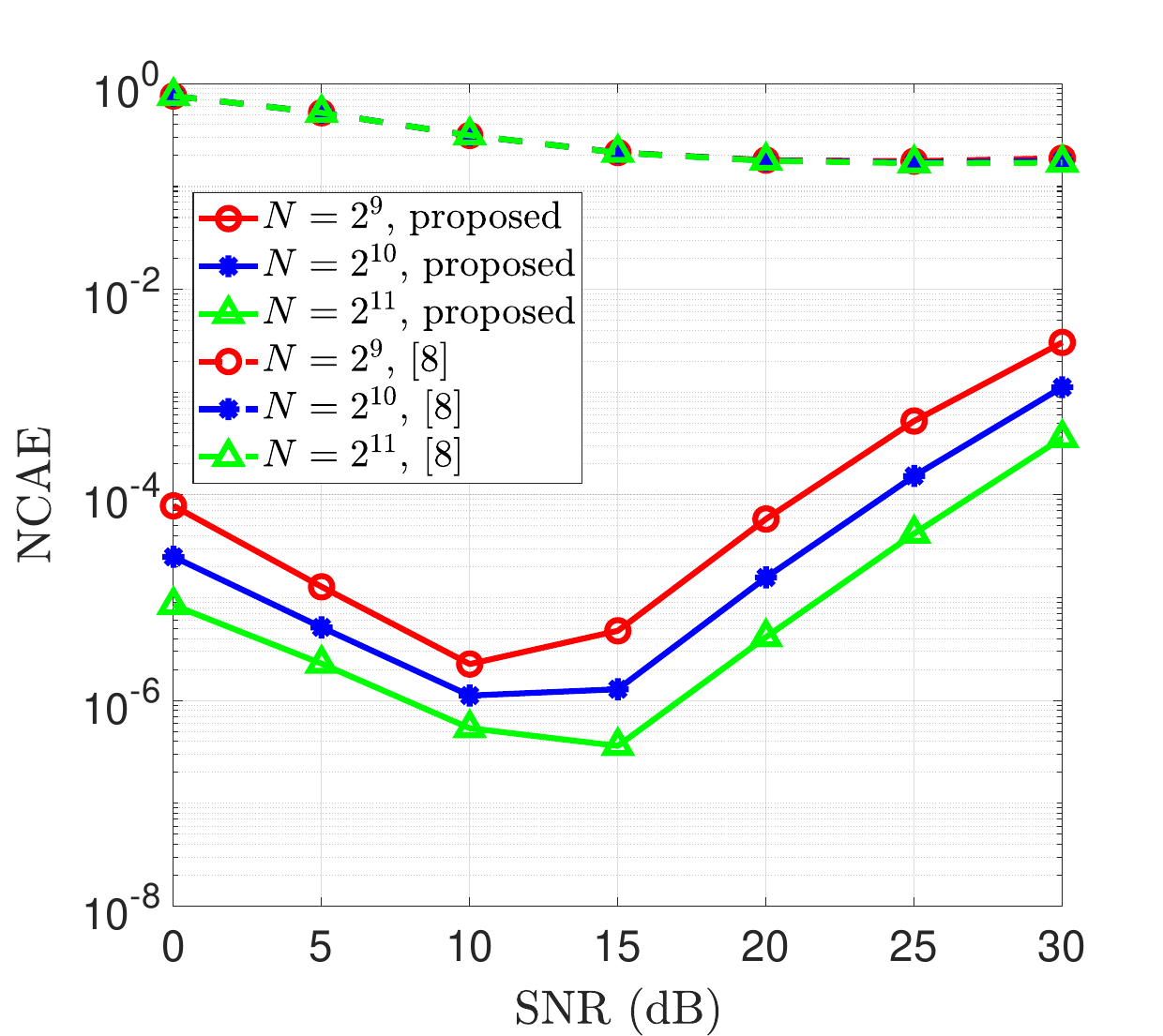}
\caption{$\mathrm{NCAE}$ versus $\mathrm{SNR}$ under Assumption~\ref{assumption_DCRB_2}-Case 1 with $\alpha=0.5$, $\beta=0.25$, and $K=8$.}
\label{SCRB_SNR}
\end{figure}

\begin{figure}[t]
\centering
\includegraphics[width=2.7 in]{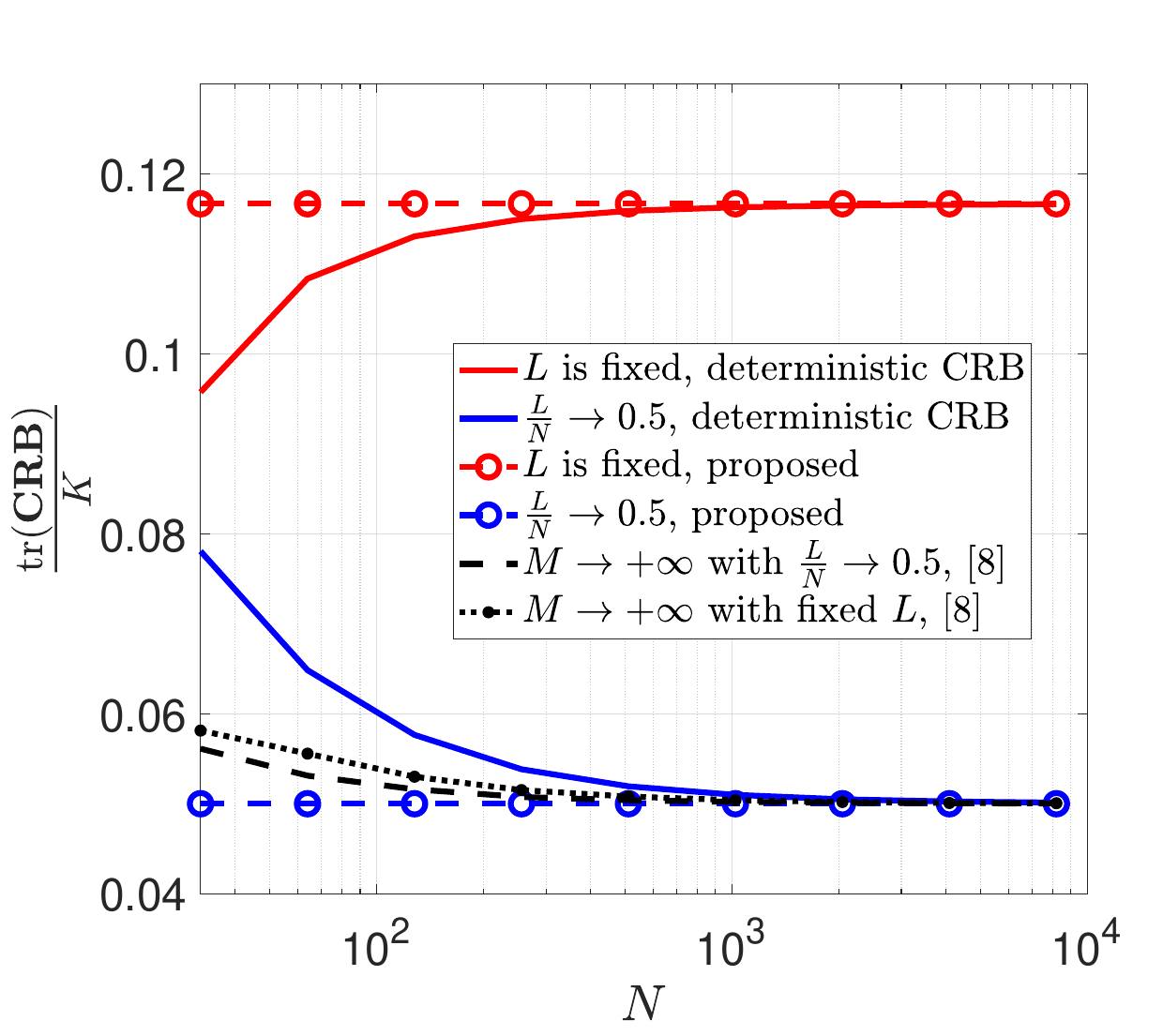}
\caption{Deterministic CRB versus $N$ under Assumption~\ref{assumption_DCRB_2} with $\mathrm{SNR}=10$\,dB, $\alpha=0.5$, and $K=8$.}
\label{DCRB_3}
\end{figure}

\begin{figure}[t]
\centering
\subfloat[Under Assumption~\ref{assumption_DCRB_2}-Case 1]{
\label{SCRB_3_Case_1}
\includegraphics[width=2.7 in]{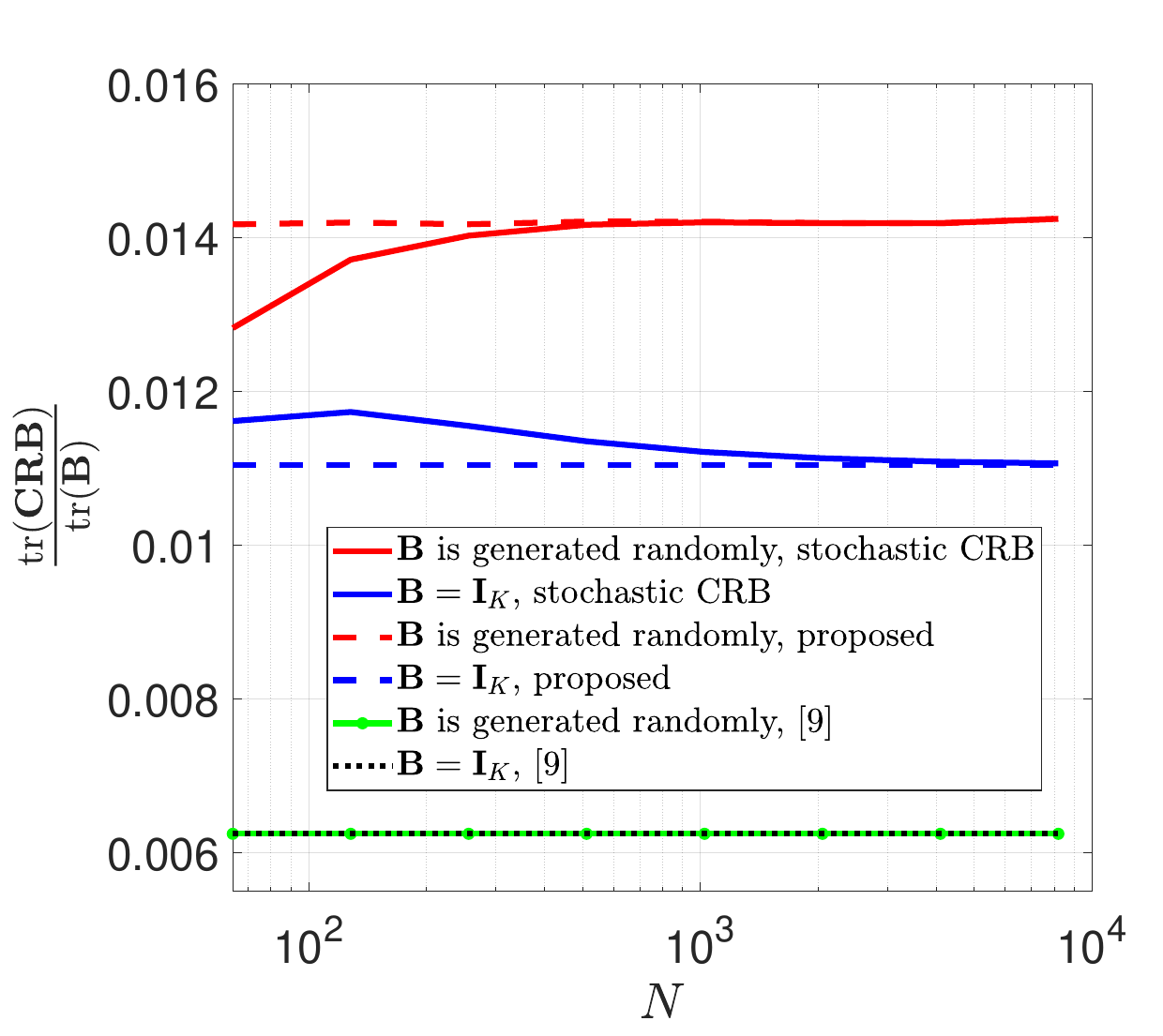}}
\\
\subfloat[Under Assumption~\ref{assumption_DCRB_2}-Case 2]{
\label{SCRB_3_Case_2}
\includegraphics[width=2.7 in]{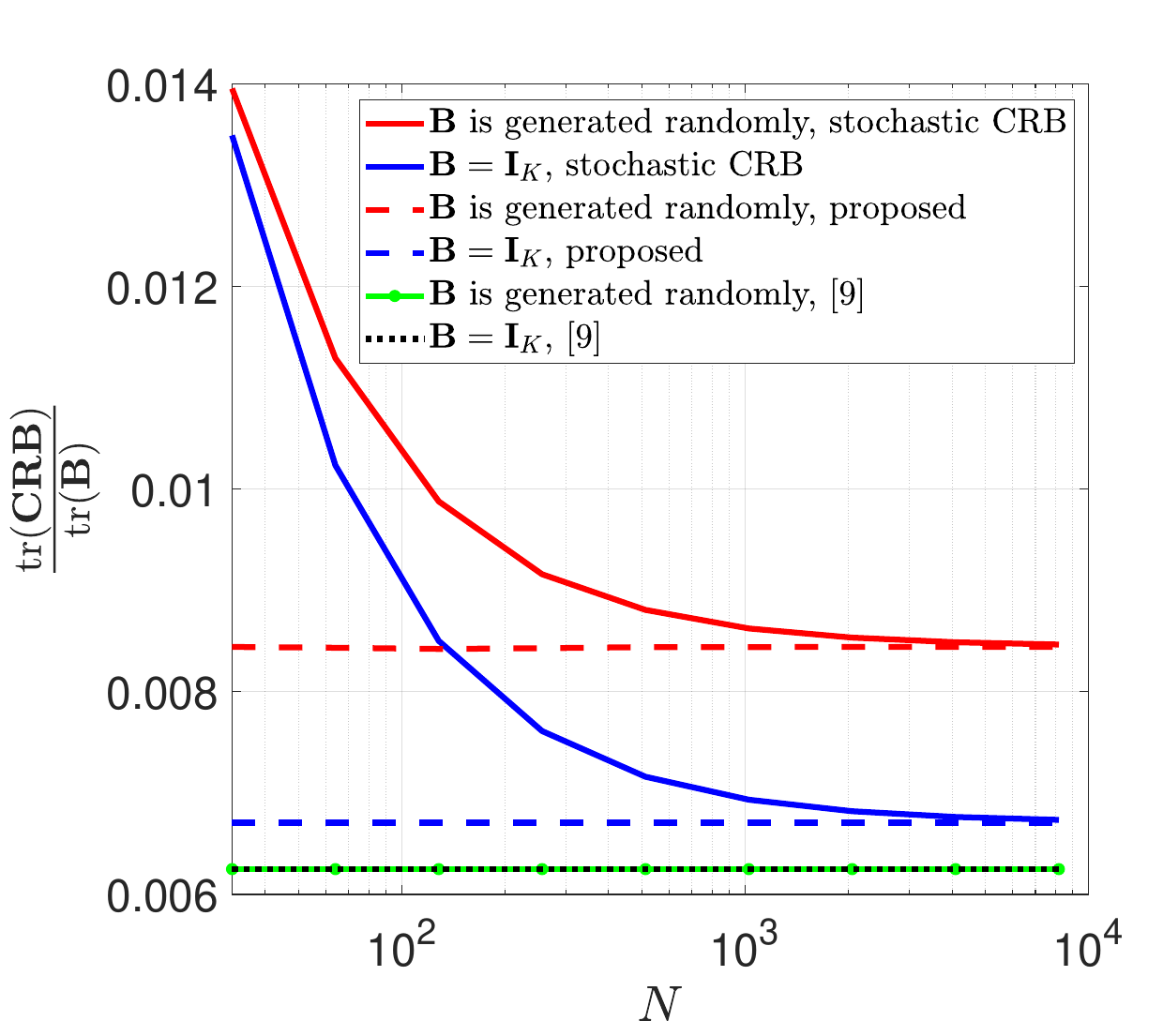}}
\caption{Stochastic CRB for different $\mathbf{B}$.}
\label{CRB}
\end{figure}

\subsection{Comparison Between the CRBs and Channel Estimation Errors of Practical Algorithms}
In the first experiment, we examine the impact of increasing SNR on the performance of selected estimators and the exact CRB. Specifically, we consider both the deterministic CRB (\ref{eq:deterministic_crb}) and the stochastic CRB (\ref{eq:stochastic_crb}), along with three estimators: the training-based ML estimator given by $\hat{G}_{\mathrm{training}}=\frac{1}{P}\mathbf{Y}_p\mathbf{S}_p^H$, the EM algorithm~\cite{nayebi2017semi}, and the subspace-based algorithm~\cite{zhong2024subspace}. We set the parameters to $N=1024$, $M=512$, $L=64$, and $K=32$, respectively. As shown in Fig.~\ref{SNR_MSE}, although the existing semi-blind channel estimation algorithms maybe biased, the CRBs still provide useful reference points for evaluating estimation accuracy. Furthermore, the results indicate that semi-blind channel estimation outperforms training-based methods, owing to its ability to exploit both training sequences and unknown data symbols.

\begin{figure*}[t]
\centering
\subfloat[Deterministic CRB under Assumption~\ref{assumption_dcrb_1}]{
\label{DCRB_Pilots}
\includegraphics[width=2.7 in]{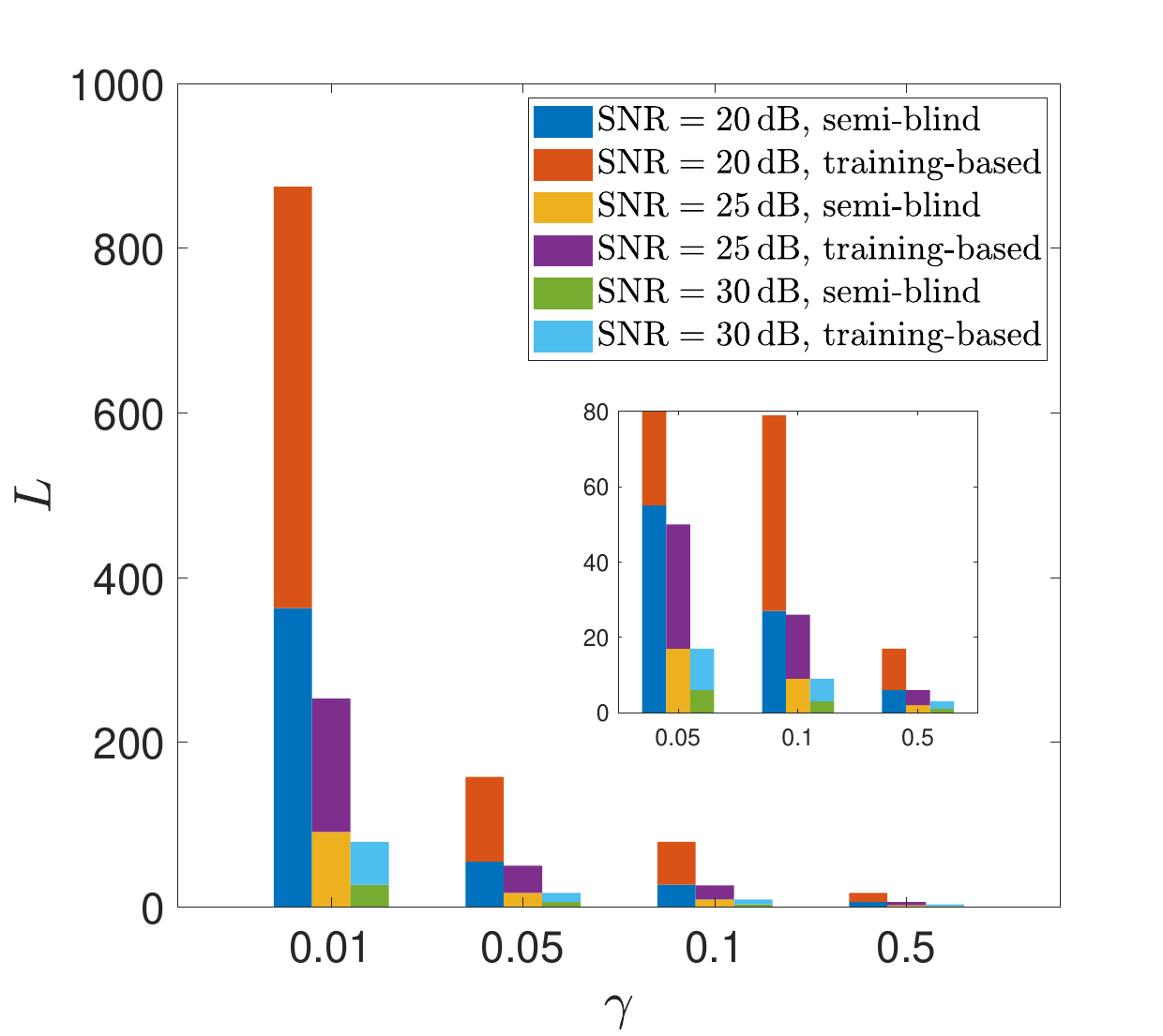}}
\quad
\subfloat[Stochastic CRB under Assumption~\ref{assumption_dcrb_1}]{
\label{SCRB_Pilots}
\includegraphics[width=2.7 in]{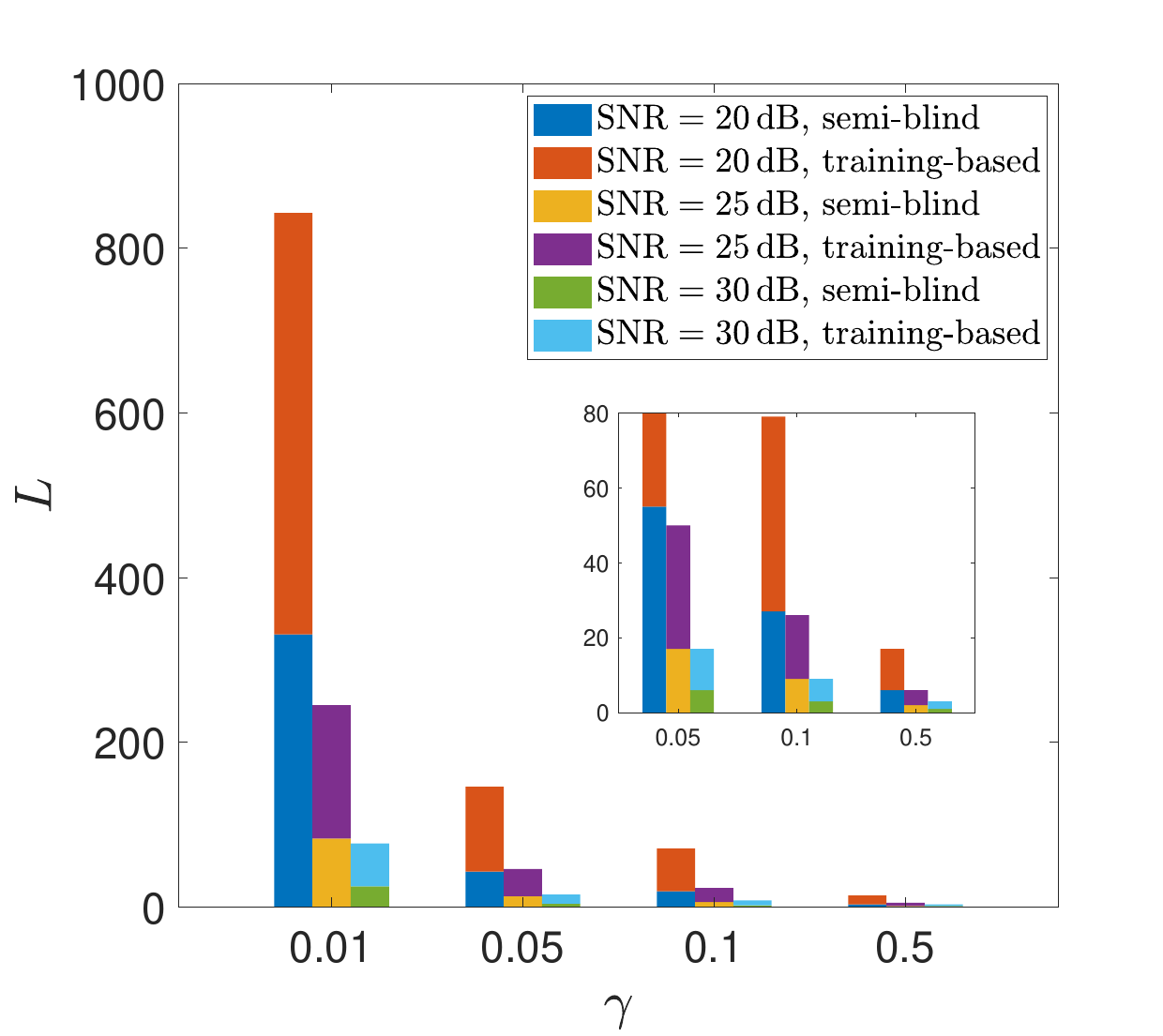}}
\caption{The length $N$ of the entire transmission block versus $\gamma$ with $\alpha=0.5$, and $K=M/2$.}
\label{Pilots_CRB}
\end{figure*}

\subsection{Experiment for Accuracy}
In this subsection, we verify the accuracy of the proposed asymptotic expression of deterministic CRB and stochastic CRB. Specifically, we compare the $\mathrm{NCAE}$ for the proposed asymptotic behavior and the corresponding expression in~\cite{nayebi2017semi} under Assumption~\ref{assumption_dcrb_1} for deterministic and stochastic CRB, respectively. In this setup, the number of users $K$ is half of the number of antennas  $M$, and the SNR is fixed at $10$ dB. Fig.~\ref{Accuracy_CRB} shows that for both the deterministic and stochastic CRBs, the error in our proposed expressions decreases as the system dimensions increase for all considered values of $\alpha$ and $\beta$, as specified in Assumption \ref{assumption_dcrb_1}. In contrast, the error in the expression from \cite{nayebi2017semi} exhibits a floor, highlighting its inaccuracy in approximating the CRBs.

\par
Next, we go on to validate the accuracy of the average asymptotic deterministic CRB when $K$ and $L$ are fixed as specified in Assumption \ref{assumption_DCRB_2}-Case 1. Fig.~\ref{SCRB_SNR} investigates the influence of $\mathrm{SNR}$ on the accuracy quality of the expression in \eqref{assumption_DCRB_2}. It can be seen that there exists an optimal value of $\mathrm{SNR}$ that minimizes the $\mathrm{NCAE}$. In addition, as the length of the whole transmission block increases, the $\mathrm{NCAE}$ becomes smaller.  

\subsection{Experiments for CRB}
In Fig.~\ref{DCRB_3}, we examine the behavior of the deterministic CRB as a function of the total transmission block length \( N \), under the regime defined by Assumption \ref{assumption_DCRB_2}, with \( \alpha = \frac{M}{N} = 0.5 \), \( K = 10 \), and either \( \frac{L}{N} = 0.5 \) or \( L \) fixed at 10. As observed, as \( N \) tends to infinity, the deterministic CRB converges to the asymptotic value from Theorem \ref{th:crb_simple} in both cases, \( L = 10 \) and \( \frac{L}{N} = 0.5 \). Moreover, in line with our expectations, when \( L \) is fixed, increasing \( N \) does not yield further improvement in the CRB. Additionally, while the results in \cite{nayebi2017semi} offer good accuracy when both \( N \) and \( \frac{L}{N} \) grow large, they fail to capture the correct behavior of the CRB when \( L \) is fixed at 10.

Fig.~\ref{CRB} describes the asymptotic behaviors of the stochastic CRB under the correlated channel model in Assumption \ref{ass:channel_model_cl} and the asymptotic regime in Assumption \ref{assumption_DCRB_2} with $\mathrm{SNR}=10$\,dB,  $K=8$, and $\alpha=0.5$ and $L=12$ for Assumption~\ref{assumption_DCRB_2}-Case 1 (Fig.\ref{CRB}-a), and $\beta=0.25$ for Assumption~\ref{assumption_DCRB_2}-Case 2 (Fig.\ref{CRB}-b). 
We observe that the CRB reaches its minimum when the channels across users have the same gains. Furthermore, consistent with our findings in Fig.~\ref{DCRB_3}, the result from \cite{nayebi2017semi} becomes inaccurate when $L$ is fixed. In this scenario, leveraging additional unknown data symbols for channel estimation does not lead to further improvements in estimation quality.

\subsection{Experiments for Required Pilots}
Finally, we demonstrate the practical application of the CRB derivations, which provide valuable guidance on how to reasonably select the number of pilot symbols. Given a target CRB value $\gamma$, the derived CRB expressions (8), (40), (17), and (45) can be used to determine the number of pilot symbols required to achieve this target CRB for both training-based and semi-blind techniques, with all other system parameters $M$, $N$, and $K$ held constant.

Fig.~\ref{Pilots_CRB} shows the required number of pilot symbols for various $\mathrm{SNR}$ levels and target CRB values $\gamma$, considering both stochastic and deterministic CRBs. The parameters are set as $N=2^{10}$, $M=\alpha N$, and $K=M/2$ with $\alpha =0.5$. As expected, with increasing $\mathrm{SNR}$ or higher target CRB values, the number of required pilots decreases for both approaches. Notably, semi-blind techniques require less than half the pilot symbols compared to training-based methods to achieve the same target CRB.

\par
\section{Conclusion}
\label{sec_con}
In this paper, we derived mathematically tractable expressions for the deterministic and stochastic CRBs for semi-blind channel estimation in massive MIMO systems. Using these expressions, we explored the asymptotic behavior of the CRBs under various growth regimes, considering the evolution of the number of known training symbols, data symbols, antennas at the base station, and the number of users. Our analysis revealed that the CRBs can approach zero as the total block length increases, but only if the number of training symbols grows at the same rate. Otherwise, the CRBs remain bounded below by a constant, which depends on the ratio between the number of training symbols and the number of users.

%In this paper, we analyze the behavior of CRBs associated with the number of users, the number of antennas at the BS, the number of known pilot symbols, and the number of unknown data symbols, providing theoretical support for the use of semi-blind algorithms in massive MIMO systems. Using Kronecker products, we derive mathematically tractable formulas for the deterministic and stochastic CRB for semi-blind channel estimation, respectively. Motivated by large random matrix theory, we investigate asymptotic expressions in different asymptotic regimes, implying that the limit of CRBs converges to a constant related to the number of pilot symbols.   
\appendices

\section{Derivation of the deterministic CRB}\label{derivation_DCRB}
The log-likelihood function of the receive signal $\mathbf{y}(n)$ can be expressed as
\begin{align}
    \mathcal{L} = \mathrm{const} - \sum_{n=0}^{N-1}\frac{1}{\sigma_v^2}\left\|\mathbf{y}(n)-\mathbf{Gs}(n)\right\|_2^2. 
\end{align}
We define a complex parameter vector as $\pmb{\xi} = \left[\mathbf{s}^T(L),\ldots,\mathbf{s}^T(N-1),\mathbf{g}_1^T,\ldots,\mathbf{g}_K^T\right]^T$, where $\mathbf{g}_k$ is the $k$th column of $\mathbf{G}$. For $n=L,\ldots,N-1$, we can obtain
\begin{align}
    &\frac{\partial\mathcal{L}}{\partial\mathbf{s}(n)} = \frac{1}{\sigma_v^2}\mathbf{v}(n)^H\mathbf{G}, \quad\frac{\partial\mathcal{L}}{\partial\mathbf{g}_k} = \frac{1}{\sigma_v^2}\sum_{n=0}^{N-1}s_k(n)\mathbf{v}(n)^H,\\
    &\frac{\partial\mathcal{L}}{\partial\mathbf{s}(n)^{\ast}} = \frac{1}{\sigma_v^2}\mathbf{v}(n)^T\mathbf{G}^{\ast},\,\,\frac{\partial\mathcal{L}}{\partial\mathbf{g}_k^{\ast}} = \frac{1}{\sigma_v^2}\sum_{n=0}^{N-1}s_k^{\ast}(n)\mathbf{v}(n)^T. 
\end{align}
Utilizing $\mathbb{E}\{\mathbf{v}(n)\mathbf{v}(p)^H\}=\sigma_v^2\mathbf{I}_M\delta(n-p)$, for $n,m=L,\ldots,N-1$ and $k,l=1,\ldots,K$, we can get
\begin{align}
    \pmb{\mathcal{J}}_{ss} =& \mathbb{E}\left\{\left(\frac{\partial\mathcal{L}}{\partial\mathbf{s}(n)}\right)^H\frac{\partial\mathcal{L}}{\partial\mathbf{s}(m)}\right\} =\mathbf{A}\delta(n-m),  \\
    \pmb{\mathcal{J}}_{gs} =& \mathbb{E}\left\{\left(\frac{\partial\mathcal{L}}{\partial\mathbf{g}_k}\right)^H\frac{\partial\mathcal{L}}{\partial\mathbf{s}(n)}\right\} = \mathbf{W}_{k}(n), \\
    \pmb{\mathcal{J}}_{gg} =& \mathbb{E}\left\{\left(\frac{\partial\mathcal{L}}{\partial\mathbf{g}_k}\right)^H\frac{\partial\mathcal{L}}{\partial\mathbf{g}_l}\right\} = \lambda_{kl}\mathbf{I}_M,\\
    \pmb{\mathcal{J}}_{ss^{\ast}} =& \pmb{\mathcal{J}}_{gs^{\ast}} = \pmb{\mathcal{J}}_{gg^{\ast}} = \pmb{0}, 
\end{align}
where $\mathbf{A}=\frac{1}{\sigma_v^2}\mathbf{G}^H\mathbf{G}$, $\mathbf{W}_{k}(n) = \frac{1}{\sigma_v^2}s_k(n)\mathbf{G}^H$, and $\lambda_{kl} = \frac{1}{\sigma_v^2}\sum_{n=0}^{N-1}s_k^{\ast}(n)s_l(n)$. And the expectation is taken over the distribution of the noise. 
Then we can get
\begin{align}
    \pmb{\mathcal{J}}_{\xi\xi} =& \mathbb{E}\left\{\left(\frac{\partial\mathcal{L}}{\partial\pmb{\xi}}\right)^H\frac{\partial\mathcal{L}}{\partial\pmb{\xi}}\right\} = \left[\begin{array}{cc}
        \pmb{\mathcal{J}}_{ss} & \pmb{\mathcal{J}}_{gs}^H \\
        \pmb{\mathcal{J}}_{gs} & \pmb{\mathcal{J}}_{gg} 
        \end{array}\right], \\
    \pmb{\mathcal{J}}_{\xi\xi^{\ast}} =& \mathbb{E}\left\{\left(\frac{\partial\mathcal{L}}{\partial\pmb{\xi}}\right)^H\frac{\partial\mathcal{L}}{\partial\pmb{\xi}^{\ast}}\right\} = \left[\begin{array}{cc}
        \pmb{\mathcal{J}}_{ss^{\ast}} & \pmb{\mathcal{J}}_{gs^{\ast}}^H \\
        \pmb{\mathcal{J}}_{gs^{\ast}} & \pmb{\mathcal{J}}_{gg^{\ast}} 
        \end{array}\right] =0,
\end{align}
Since $\pmb{\mathcal{J}}_{\xi\xi^{\ast}} =0$, the complex CRB defined as the inverse of the complex Fisher-information matrix $\pmb{\mathcal{J}}_{\xi\xi}$  provides the same lower bound on the covariance of unbiased estimators as the real CRB associated with the parameter \cite{de1997cramer}:\begin{align}
    \pmb{\xi}_R =& \left[\mathrm{Re}\{\mathbf{s}(L)\}^T,\mathrm{Im}\{\mathbf{s}(L)\}^T,\ldots,\mathrm{Re}\{\mathbf{s}(N-1)\}^T,\right.\notag\\
    &\,\,\,\mathrm{Im}\{\mathbf{s}(N-1)\}^T,\mathrm{Re}\{\mathbf{g}_1\}^T,\mathrm{Im}\{\mathbf{g}_1\}^T,\ldots,\notag\\
    &\left.\,\mathrm{Re}\{\mathbf{g}_K\}^T,\mathrm{Im}\{\mathbf{g}_K\}^T\right]^T,
\end{align}
Then the CRB of unknown data symbols and channel coefficients can be given by 
\begin{align}
    \mathbf{CRB}(\mathbf{S}_d,\mathbf{G}) =& \pmb{\mathcal{J}}_{\xi\xi}^{-1} = \left[\begin{array}{ccc:c}
         \mathbf{A} & & \pmb{0} & \mathbf{\Omega}_{L} \\
           & \ddots & & \vdots \\
         \pmb{0} & & \mathbf{A} & \mathbf{\Omega}_{N-1} \\
         \hdashline 
         \mathbf{\Omega}_L^H & \cdots & \mathbf{\Omega}_{N-1}^H & \mathbf{\Lambda}
            \end{array}\right]^{-1}. \notag
\end{align}
where $\mathbf{\Lambda}$ and $\mathbf{\Omega}_n = \left[\mathbf{W}_1(n),\cdots,\mathbf{W}_K(n)\right]$ for $n=L,\ldots,N-1$ are defined in (2) and (3), respectively. Based on the inverse of a block matrix, we can obtain
\begin{align}
       \mathbf{CRB}(\mathbf{G}) =& \left(\mathbf{\Lambda}-\sum_{n=L}^{N-1}\mathbf{\Omega}_n^H\mathbf{A}^{-1}\mathbf{\Omega}_n\right)^{-1}.
\end{align}

\par

\section{Stochastic CRB}
\subsection{Derivation of the stochastic CRB (Proof of Theorem \it{5})}
\label{app:stochastic_crb_derivation}
The log-likelihood function of $\mathbf{y}(n)$ is given by
\begin{align}
    \mathcal{L} =& \mathrm{const} - \sum_{n=0}^{L-1}\underbrace{\frac{1}{\sigma_v^2}\left\|\mathbf{y}(n)-\mathbf{Gs}(n)\right\|_2^2}_{\mathcal{L}_1(n)} \notag \\
     &- \sum_{n=L}^{N-1}\underbrace{\left(\mathbf{y}(n)^H\mathbf{R}^{-1}\mathbf{y}(n) + \mathrm{log}\,\mathrm{det}(\mathbf{R})\right)}_{\mathcal{L}_2(n)}. 
\end{align}
where $\mathcal{L}_1$ and $\mathcal{L}_2$ are related to pilot sequences and data symbols, respectively. We define a parameter vector as $\pmb{\alpha}=[\bar{\mathbf{g}}_1^T,\ldots,\bar{\mathbf{g}}_K^T,\Tilde{\mathbf{g}}_1^T,\ldots,\Tilde{\mathbf{g}}_K^T]$, the $(i,j)$th element of CRB, ($i,j=1,\cdots,2KM$), is given by 
\begin{align}       
    \left[\mathbf{CRB}^{-1}\right]_{ij} =& \sum_{n=0}^{L-1}\mathbb{E}\left\{\frac{\partial\mathcal{L}_1(n)}{\partial[\pmb{\alpha}]_i}\frac{\partial\mathcal{L}_1(n)}{\partial[\pmb{\alpha}]_j}\right\} \notag \\
    &+ \sum_{n=L}^{N-1}\mathbb{E}\left\{\frac{\partial\mathcal{L}_2(n)}{\partial[\pmb{\alpha}]_i}\frac{\partial\mathcal{L}_2(n)}{\partial[\pmb{\alpha}]_j}\right\}, 
\end{align}
where the expectation is taken with respect to the distribution of the noise and the data symbols $s_k(n), k=1,\cdots, K$ and $n=L,\cdots, N-1$, i.e.,
\begin{align}
     \mathbb{E}\left\{\frac{\partial\mathcal{L}_1(n)}{\partial[\bar{\mathbf{g}}_k]_p}\frac{\partial\mathcal{L}_1(n)}{\partial[\bar{\mathbf{g}}_l]_q}\right\} =& \mathbb{E}\left\{\frac{\partial\mathcal{L}_1(n)}{\partial[\tilde{\mathbf{g}}_k]_p}\frac{\partial\mathcal{L}_1(n)}{\partial[\tilde{\mathbf{g}}_l]_q}\right\} \notag \\
     =& \frac{2}{\sigma_v^2}\mathrm{Re}\left\{s_k(n)^{\ast}s_l(n)\right\}\delta(p-q), \notag\\
    \mathbb{E}\left\{\frac{\partial\mathcal{L}_1(n)}{\partial[\bar{\mathbf{g}}_k]_p}\frac{\partial\mathcal{L}_1(n)}{\partial[\tilde{\mathbf{g}}_l]_q}\right\} =& \frac{2}{\sigma_v^2}\mathrm{Im}\left\{s_k(n)^{\ast}s_l(n)\right\}\delta(p-q), \notag \\
     \mathbb{E}\left\{\frac{\partial\mathcal{L}_2(n)}{\partial[\bar{\mathbf{g}}_k]_p}\frac{\partial\mathcal{L}_2(n)}{\partial[\bar{\mathbf{g}}_l]_q}\right\} =& 2P_s^2\mathrm{Re}\left\{\mathrm{tr}\left(\mathbf{R}^{-1}\mathbf{T}_{kp}\mathbf{R}^{-1}\mathbf{T}_{lq}\right)\right\} \notag \\
    &+ 2P_s^2\mathrm{Re}\left\{\mathrm{tr}\left(\mathbf{R}^{-1}\mathbf{T}_{kp}\mathbf{R}^{-1}\mathbf{T}_{lq}^H\right)\right\} \notag \\
    =& 2P_s^2\mathrm{Re}\left\{\mathbf{g}_l^H\mathbf{R}^{-1}\mathbf{e}_p^T\mathbf{g}_k^H\mathbf{R}^{-1}\mathbf{e}_q^T\right\} \notag \\
    &+ 2P_s^2\mathrm{Re}\left\{\mathbf{e}_q\mathbf{R}^{-1}\mathbf{e}_p^T\mathbf{g}_k^H\mathbf{R}^{-1}\mathbf{g}_l\right\},\notag \\
    \mathbb{E}\left\{\frac{\partial\mathcal{L}_2(n)}{\partial[\bar{\mathbf{g}}_k]_p}\frac{\partial\mathcal{L}_2(n)}{\partial[\Tilde{\mathbf{g}}_l]_q}\right\}=& -2P_s^2\mathrm{Im}\left\{\mathrm{tr}\left(\mathbf{R}^{-1}\mathbf{T}_{kp}\mathbf{R}^{-1}\mathbf{T}_{lq}\right)\right\} \notag \\
    &+ 2P_s^2\mathrm{Im}\left\{\mathrm{tr}\left(\mathbf{R}^{-1}\mathbf{T}_{kp}\mathbf{R}^{-1}\mathbf{T}_{lq}^H\right)\right\} \notag\\ 
    =& -2P_s^2\mathrm{Im}\left\{\mathbf{g}_l^H\mathbf{R}^{-1}\mathbf{e}_p^T\mathbf{g}_k^H\mathbf{R}^{-1}\mathbf{e}_q^T\right\} \notag \\
    &+ 2P_s^2\mathrm{Im}\left\{\mathbf{e}_q\mathbf{R}^{-1}\mathbf{e}_p^T\mathbf{g}_k^H\mathbf{R}^{-1}\mathbf{g}_l\right\},\notag \\
     \mathbb{E}\left\{\frac{\partial\mathcal{L}_2(n)}{\partial[\Tilde{\mathbf{g}}_k]_p}\frac{\partial\mathcal{L}_2(n)}{\partial[\bar{\mathbf{g}}_l]_q}\right\}
     =& -2P_s^2\mathrm{Im}\left\{\mathrm{tr}\left(\mathbf{R}^{-1}\mathbf{T}_{kp}\mathbf{R}^{-1}\mathbf{T}_{lq}\right)\right\} \notag \\
     &-2P_s^2\mathrm{Im}\left\{\mathrm{tr}\left(\mathbf{R}^{-1}\mathbf{T}_{kp}\mathbf{R}^{-1}\mathbf{T}_{lq}^H\right)\right\} \notag \\
    =& -2P_s^2\mathrm{Im}\left\{\mathbf{g}_l^H\mathbf{R}^{-1}\mathbf{e}_p^T\mathbf{g}_k^H\mathbf{R}^{-1}\mathbf{e}_q^T\right\} \notag \\
    &- 2P_s^2\mathrm{Im}\left\{\mathbf{e}_q\mathbf{R}^{-1}\mathbf{e}_p^T\mathbf{g}_k^H\mathbf{R}^{-1}\mathbf{g}_l\right\}, \notag\\
    \mathbb{E}\left\{\frac{\partial\mathcal{L}_2(n)}{\partial[\Tilde{\mathbf{g}}_k]_p}\frac{\partial\mathcal{L}_2(n)}{\partial[\Tilde{\mathbf{g}}_l]_q}\right\}
    =& -2P_s^2\mathrm{Re}\left\{\mathrm{tr}\left(\mathbf{R}^{-1}\mathbf{T}_{kp}\mathbf{R}^{-1}\mathbf{T}_{lq}\right)\right\}  \notag \\
    &+ 2P_s^2\mathrm{Re}\left\{\mathrm{tr}\left(\mathbf{R}^{-1}\mathbf{T}_{kp}\mathbf{R}^{-1}\mathbf{T}_{lq}^H\right)\right\} \notag \\
    =& -2P_s^2\mathrm{Re}\left\{\mathbf{g}_l^H\mathbf{R}^{-1}\mathbf{e}_p^T\mathbf{g}_k^H\mathbf{R}^{-1}\mathbf{e}_q^T\right\} \notag \\
    &+ 2P_s^2\mathrm{Re}\left\{\mathbf{e}_q\mathbf{R}^{-1}\mathbf{e}_p^T\mathbf{g}_k^H\mathbf{R}^{-1}\mathbf{g}_l\right\},\notag
\end{align}
with $\mathbf{T}_{kp} =[\pmb{0}_{M\times(p-1)},\mathbf{g}_k,\pmb{0}_{M\times(M-p)}]^H$, $\mathbf{T}_{lq} =[\pmb{0}_{M\times(q-1)},\mathbf{g}_l,\pmb{0}_{M\times(M-q)}]^H$, $\mathbf{e}_p =[\mathbf{0}_{1\times(p-1)},1,\mathbf{0}_{1\times(M-p)}]$, and $\mathbf{e}_q =[\mathbf{0}_{1\times(q-1)},1,\mathbf{0}_{1\times(M-q)}]$. Here, $l=1,\ldots,K$, and $p,q=1,\ldots,M$.

Assume pilot sequences are orthogonal, i.e., $\sum_{n=0}^{L-1}s_k^{\ast}(n)s_l(n)=PL\delta(k-l)$, we derive the stochastic CRB as follows:
\begin{align*}
    \mathbf{CRB} = \left(\frac{2PL}{\sigma_v^2}\mathbf{I}_{2KM}+2P_s^2(N-L)(\mathcal{R}_1(\mathbf{T})+\mathcal{R}_2(\mathbf{C}))\right)^{-1},
\end{align*}
where $\mathcal{R}_1(\mathbf{T})$ and $\mathcal{R}_2(\mathbf{C})$ are defined in (33).

\par

\subsection{Simplification of the stochastic CRB (Proof of Proposition ~\ref{prop:stochastic_Crb})}
\label{app:simplified_crb}
%\section{Proof of Theorem~\ref{tr(CRB)=tr(A1)+tr(A2)}} \label{prove_theorem5}
We first provide the following preparatory results which are used in proving Theorem~\ref{prop:stochastic_Crb}.
\begin{lemma}
    $\mathbf{T}=(\mathbf{V}_G\otimes\mathbf{U}_G^{\ast})\mathbf{T}_G(\mathbf{V}_G^H\otimes\mathbf{U}_G^T)$.
    where ${\bf T}_G$ is given in \eqref{eq:TG}. 
    \begin{proof}
        Based on the  SVD of $\mathbf{G}$, $\mathbf{T}$ can be rewritten as: 
        \begin{align}
        \mathbf{T} =& \mathbf{V}_{G}\mathbf{D}_{G}^H\mathbf{R}_G^{-1}\mathbf{D}_{G}\mathbf{V}_{G}^H \otimes \mathbf{U}_{G}^{\ast}\mathbf{R}_G^{-1}\mathbf{U}_G^T  \notag\\
        =& (\mathbf{V}_G\otimes\mathbf{U}_G^{\ast})\mathbf{T}_G(\mathbf{V}_G^H\otimes\mathbf{U}_G^T).
        \end{align}
    \end{proof}
\end{lemma}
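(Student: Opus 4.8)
The plan is to substitute the singular value decomposition $\mathbf{G}=\mathbf{U}_G\mathbf{D}_G\mathbf{V}_G^H$ into each of the two Kronecker factors of $\mathbf{T}=(\mathbf{G}^H\mathbf{R}^{-1}\mathbf{G})\otimes(\mathbf{R}^T)^{-1}$ separately, rewrite both in terms of the diagonalization $\mathbf{R}=\mathbf{U}_G\mathbf{R}_G\mathbf{U}_G^H$, and then regroup the resulting product using the Kronecker mixed-product identity $(\mathbf{A}\otimes\mathbf{B})(\mathbf{C}\otimes\mathbf{D})=\mathbf{AC}\otimes\mathbf{BD}$ already invoked in the excerpt.

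For the first factor, I would use that $\mathbf{U}_G$ is unitary to invert $\mathbf{R}=\mathbf{U}_G\mathbf{R}_G\mathbf{U}_G^H$ as $\mathbf{R}^{-1}=\mathbf{U}_G\mathbf{R}_G^{-1}\mathbf{U}_G^H$. Substituting the SVD and cancelling $\mathbf{U}_G^H\mathbf{U}_G=\mathbf{I}_M$ then collapses $\mathbf{G}^H\mathbf{R}^{-1}\mathbf{G}$ to $\mathbf{V}_G\mathbf{D}_G^H\mathbf{R}_G^{-1}\mathbf{D}_G\mathbf{V}_G^H$. For the second factor, I would take the transpose $\mathbf{R}^T=\mathbf{U}_G^{\ast}\mathbf{R}_G^T\mathbf{U}_G^T$, observe that $\mathbf{R}_G=P_s\mathbf{D}_G\mathbf{D}_G^H+\sigma_v^2\mathbf{I}_M$ is a real diagonal matrix so that $\mathbf{R}_G^T=\mathbf{R}_G$, and invert using $(\mathbf{U}_G^T)^{-1}=\mathbf{U}_G^{\ast}$ and $(\mathbf{U}_G^{\ast})^{-1}=\mathbf{U}_G^T$ to obtain $(\mathbf{R}^T)^{-1}=\mathbf{U}_G^{\ast}\mathbf{R}_G^{-1}\mathbf{U}_G^T$.

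At this point $\mathbf{T}$ reads $\bigl(\mathbf{V}_G\mathbf{D}_G^H\mathbf{R}_G^{-1}\mathbf{D}_G\mathbf{V}_G^H\bigr)\otimes\bigl(\mathbf{U}_G^{\ast}\mathbf{R}_G^{-1}\mathbf{U}_G^T\bigr)$, and applying the mixed-product rule to the three-term products $(\mathbf{A}\mathbf{B}\mathbf{C})\otimes(\mathbf{D}\mathbf{E}\mathbf{F})=(\mathbf{A}\otimes\mathbf{D})(\mathbf{B}\otimes\mathbf{E})(\mathbf{C}\otimes\mathbf{F})$ peels the outer unitary factors off as $(\mathbf{V}_G\otimes\mathbf{U}_G^{\ast})$ on the left and $(\mathbf{V}_G^H\otimes\mathbf{U}_G^T)$ on the right, leaving the middle block equal to $(\mathbf{D}_G^H\mathbf{R}_G^{-1}\mathbf{D}_G)\otimes\mathbf{R}_G^{-1}=\mathbf{T}_G$, which is exactly \eqref{eq:TG}. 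The computation is essentially mechanical; the only point that requires care is the conjugate/transpose bookkeeping in the second factor, namely verifying $\mathbf{R}_G^T=\mathbf{R}_G$ (using that $\mathbf{R}_G$ is real diagonal) and that the transpose of a unitary matrix inverts to its conjugate, so that the $\mathbf{U}_G^{\ast}$ and $\mathbf{U}_G^T$ land on the correct sides to match the stated factorization.
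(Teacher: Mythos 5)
Your proposal is correct and follows essentially the same route as the paper: substitute the SVD $\mathbf{G}=\mathbf{U}_G\mathbf{D}_G\mathbf{V}_G^H$ and the diagonalization $\mathbf{R}=\mathbf{U}_G\mathbf{R}_G\mathbf{U}_G^H$ into both Kronecker factors, then peel off the unitary factors with the mixed-product identity. The paper's proof merely states the resulting factored form without showing the intermediate steps, so your write-up is the same argument with the conjugate/transpose bookkeeping (in particular $\mathbf{R}_G^T=\mathbf{R}_G$ and $(\mathbf{U}_G^T)^{-1}=\mathbf{U}_G^{\ast}$) made explicit.
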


\begin{lemma}
   Matrix $\mathbf{C}$ can be written as ${\bf C}=(\mathbf{V}_G\otimes\mathbf{U}_G^{\ast})\mathbf{C}_G(\mathbf{V}_G^T\otimes\mathbf{U}_G^H)$.
   where ${\bf C}_G$ is given in \eqref{eq:CG}. 
    \begin{proof}
        Based on SVD of $\mathbf{G}$, we can derive
        \begin{align}
        (\mathbf{V}_G\otimes\mathbf{U}_G^{\ast})\mathbf{C}_G(\mathbf{V}_G^T\otimes\mathbf{U}_G^H) = \left[\begin{array}{ccc}
        \bar{\mathbf{C}}_{11} & \cdots & \bar{\mathbf{C}}_{1K} \\
        \vdots & \ddots & \vdots \\
        \bar{\mathbf{C}}_{K1} & \cdots & \bar{\mathbf{C}}_{KK} \\
        \end{array}\right],\notag
        \end{align}
        where 
        \begin{align}
            \bar{\mathbf{C}}_{ij} =& \mathbf{U}_G^{\ast}\sum_{p=1}^K\sum_{q=1}^Kv_{1q}\mathbf{C}_{G_{pq}}v_{1p}\mathbf{U}_G^H \notag \\
            =& \mathbf{U}_G^{\ast}\mathbf{R}_G^{-1}\sum_{p=1}^Kv_{jp}\mathbf{d}_p^{\ast}\sum_{q=1}^Kv_{iq}\mathbf{d}_q^H\mathbf{R}_G^{-1}\mathbf{U}_G^H 
        \end{align}
        with $v_{jp}=[\mathbf{V}]_{j,p}$ and $v_{iq}=[\mathbf{V}]_{i,q}$. Additionally, we note that:
        \begin{align}
            \mathbf{C}_{ij} =& \mathbf{U}_G^{\ast}\mathbf{R}_G^{-1}\mathbf{U}_G^T\mathbf{g}_j^{\ast}\mathbf{g}_i^H\mathbf{U}_G\mathbf{R}_G^{-1}\mathbf{U}_G^H \notag \\
            =& \mathbf{U}_G^{\ast}\mathbf{R}_G^{-1}\sum_{p=1}^Kv_{jp}\mathbf{d}_p^{\ast}\sum_{q=1}^Kv_{iq}\mathbf{d}_q^H\mathbf{R}_G^{-1}\mathbf{U}_G^H =\bar{\mathbf{C}}_{ij}.
        \end{align}
        Therefore, $\mathbf{C} = (\mathbf{V}_G\otimes\mathbf{U}_G^{\ast})\mathbf{C}_G(\mathbf{V}_G^T\otimes\mathbf{U}_G^H)$. 
    \end{proof}
\end{lemma}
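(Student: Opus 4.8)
The plan is to establish the identity block by block: both $\mathbf{C}$ and the claimed right-hand side are $K\times K$ arrays of $M\times M$ blocks, so it suffices to show that the $(i,j)$ block of $(\mathbf{V}_G\otimes\mathbf{U}_G^{\ast})\mathbf{C}_G(\mathbf{V}_G^T\otimes\mathbf{U}_G^H)$ coincides with $\mathbf{C}_{ij}=(\mathbf{g}_j^H\mathbf{R}^{-1})^T\mathbf{g}_i^H\mathbf{R}^{-1}$. First I would record the structural consequences of the SVD $\mathbf{G}=\mathbf{U}_G\mathbf{D}_G\mathbf{V}_G^H$ that drive the computation: the $i$th column satisfies $\mathbf{g}_i=\mathbf{U}_G\mathbf{D}_G\mathbf{V}_G^H\mathbf{e}_i$, and $\mathbf{R}=\mathbf{U}_G\mathbf{R}_G\mathbf{U}_G^H$ gives $\mathbf{R}^{-1}=\mathbf{U}_G\mathbf{R}_G^{-1}\mathbf{U}_G^H$ together with the transpose identity $(\mathbf{R}^{-1})^T=\mathbf{U}_G^{\ast}\mathbf{R}_G^{-1}\mathbf{U}_G^T$, which holds precisely because $\mathbf{R}_G=P_s\mathbf{D}_G\mathbf{D}_G^H+\sigma_v^2\mathbf{I}_M$ is real and diagonal, hence symmetric.

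Next I would expand $\mathbf{C}_{ij}$ directly. Writing $(\mathbf{g}_j^H\mathbf{R}^{-1})^T=(\mathbf{R}^{-1})^T\mathbf{g}_j^{\ast}$ and inserting the two SVD identities, the unitarity relations $\mathbf{U}_G^T\mathbf{U}_G^{\ast}=\mathbf{I}$ and $\mathbf{U}_G^H\mathbf{U}_G=\mathbf{I}$ cancel the middle factors and leave the sandwiched form
\[
\mathbf{C}_{ij}=\mathbf{U}_G^{\ast}\mathbf{R}_G^{-1}\Big(\sum_{p}v_{jp}\mathbf{d}_p^{\ast}\Big)\Big(\sum_{q}v_{iq}\mathbf{d}_q^H\Big)\mathbf{R}_G^{-1}\mathbf{U}_G^H,
\]
where $v_{ip}=[\mathbf{V}_G]_{i,p}$ and $\mathbf{d}_p$ denotes the $p$th column of $\mathbf{D}_G$; here I use the two auxiliary reductions $\mathbf{U}_G^T\mathbf{g}_j^{\ast}=\mathbf{D}_G^{\ast}\mathbf{V}_G^T\mathbf{e}_j=\sum_p v_{jp}\mathbf{d}_p^{\ast}$ and $\mathbf{g}_i^H\mathbf{U}_G=\mathbf{e}_i^T\mathbf{V}_G\mathbf{D}_G^H=\sum_q v_{iq}\mathbf{d}_q^H$.

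Then I would compute the $(i,j)$ block of the right-hand side from the block structure of the Kronecker factors: the $(i,p)$ block of $\mathbf{V}_G\otimes\mathbf{U}_G^{\ast}$ is $v_{ip}\mathbf{U}_G^{\ast}$ and the $(q,j)$ block of $\mathbf{V}_G^T\otimes\mathbf{U}_G^H$ is $v_{jq}\mathbf{U}_G^H$ (the transpose on $\mathbf{V}_G$ swaps the index roles). Summing over the inner block indices yields $\bar{\mathbf{C}}_{ij}=\mathbf{U}_G^{\ast}\big(\sum_{p,q}v_{ip}v_{jq}\mathbf{C}_{G_{pq}}\big)\mathbf{U}_G^H$, and substituting $\mathbf{C}_{G_{pq}}=\sigma_q\sigma_p\mathbf{R}_G^{-1}\mathbf{e}_q\mathbf{e}_p^T\mathbf{R}_G^{-1}$ from \eqref{eq:CG} together with $\sigma_q\mathbf{e}_q=\mathbf{d}_q$ factors the double sum into exactly the sandwiched form displayed above. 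Matching block by block then gives $\mathbf{C}=(\mathbf{V}_G\otimes\mathbf{U}_G^{\ast})\mathbf{C}_G(\mathbf{V}_G^T\otimes\mathbf{U}_G^H)$.

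I expect the main difficulty to be bookkeeping rather than conceptual, since $\mathbf{C}_{ij}$ mixes an ordinary transpose (from the outer $(\cdot)^T$) with conjugate transposes (from $\mathbf{R}^{-1}$ and $\mathbf{g}^H$), while the Kronecker factors carry $\mathbf{V}_G^T$ on one side and $\mathbf{V}_G$ on the other; keeping the conjugates $\mathbf{d}_p^{\ast}$ aligned with the Hermitian transposes $\mathbf{d}_q^H$ across the two computations is where an error is most likely to creep in. The one genuine, non-clerical ingredient is the symmetry of $\mathbf{R}_G$ underlying $(\mathbf{R}^{-1})^T=\mathbf{U}_G^{\ast}\mathbf{R}_G^{-1}\mathbf{U}_G^T$, without which the transpose in $\mathbf{C}_{ij}$ would not collapse to the clean Kronecker form.
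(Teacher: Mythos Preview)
Your proposal is correct and follows essentially the same approach as the paper: both arguments compute the $(i,j)$ block of each side, reduce each to the common sandwiched form $\mathbf{U}_G^{\ast}\mathbf{R}_G^{-1}\big(\sum_{p}v_{jp}\mathbf{d}_p^{\ast}\big)\big(\sum_{q}v_{iq}\mathbf{d}_q^{H}\big)\mathbf{R}_G^{-1}\mathbf{U}_G^{H}$ via the SVD of $\mathbf{G}$ and the block structure of the Kronecker products, and conclude by matching blocks. Your treatment is in fact slightly more explicit than the paper's about the role of the symmetry of $\mathbf{R}_G$ in the transpose identity $(\mathbf{R}^{-1})^{T}=\mathbf{U}_G^{\ast}\mathbf{R}_G^{-1}\mathbf{U}_G^{T}$, which is the only non-clerical ingredient.
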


\textit{Proof of Position~\ref{tr(CRB)=tr(A1)+tr(A2)}}: 
With the above lemmas at hand, we are now in a position to complete the proof of Position~\ref{tr(CRB)=tr(A1)+tr(A2)}. For that, we exploit the relations in ~\cite[page 71 and page 72]{bhatia2013matrix} to obtain:
%Motivated by~\cite{bhatia2013matrix}, $\mathbf{V}_G\otimes\mathbf{U}_G^{\ast}$ is nonsingular, thus we can derive
    \begin{align}
        \mathcal{R}_1(\mathbf{T}) =&  \mathcal{R}_1((\mathbf{\mathbf{V}_G\otimes\mathbf{U}_G^{\ast}})\mathbf{T}_G(\mathbf{\mathbf{V}_G\otimes\mathbf{U}_G^{\ast}})^{-1}) \notag \\ 
        =& \mathcal{R}_1(\mathbf{\mathbf{V}_G\otimes\mathbf{U}_G^{\ast}})\mathcal{R}_1(\mathbf{T}_G)\mathcal{R}_1(\mathbf{\mathbf{V}_G\otimes\mathbf{U}_G^{\ast}})^{-1}, \\
        \mathcal{R}_2(\mathbf{C}) =&  \mathcal{R}_2((\mathbf{\mathbf{V}_G\otimes\mathbf{U}_G^{\ast}})\mathbf{C}_G(\mathbf{V}_G^{\ast}\otimes\mathbf{U}_G)^{-1}) \notag \\ 
        =& \mathcal{R}_2(\mathbf{\mathbf{V}_G\otimes\mathbf{U}_G^{\ast}})\mathcal{R}_2(\mathbf{I}_{MK})\mathcal{R}_2(\mathbf{C}_G) \notag \\
        &\cdot(\mathcal{R}_2(\mathbf{\mathbf{V}_G\otimes\mathbf{U}_G^{\ast}})\mathcal{R}_2(\mathbf{I}_{MK}))^{-1} \notag \\
        =& \mathcal{R}_1(\mathbf{\mathbf{V}_G\otimes\mathbf{U}_G^{\ast}})\mathcal{R}_2(\mathbf{C}_G)\mathcal{R}_1(\mathbf{\mathbf{V}_G\otimes\mathbf{U}_G^{\ast}})^{-1},
    \end{align}
   We can thus write:
    \begin{align}
        \mathcal{R}_1(\mathbf{T}) + \mathcal{R}_2(\mathbf{C}) =& \mathcal{R}_1(\mathbf{\mathbf{V}_G\otimes\mathbf{U}_G^{\ast}})(\mathcal{R}_1(\mathbf{T}_G)+\mathcal{R}_2(\mathbf{C}_G)) \notag \\
        &\cdot\mathcal{R}_1(\mathbf{\mathbf{V}_G\otimes\mathbf{U}_G^{\ast}})^{-1}.
    \end{align}
    As a result, the eigenvalues   of $\mathcal{R}_1(\mathbf{T}) + \mathcal{R}_2(\mathbf{C})$ coincide with those of  $\begin{bmatrix} {\bf T}_G+{\bf C}_{G} & {\bf 0}\\ {\bf 0 }& 
 {\bf T}_G-{\bf C}_G\end{bmatrix}$. Thus, 
    \begin{align}
    \mathrm{tr}(\mathbf{CRB})
  %  =& \sum_{l=1}^{2KM}\frac{1}{\frac{2L}{\sigma_v^2}+2P_s^2(N-L)\lambda_l\left(\mathcal{R}_1(\mathbf{T}) + \mathcal{R}_2(\mathbf{C})\right)} \notag \\
    =& \sum_{l=1}^{KM}\frac{1}{\frac{2L}{\sigma_v^2}+2P_s^2(N-L)\lambda_l(\mathbf{T}_G+\mathbf{C}_G)} \notag \\
    &+ \sum_{l=1}^{KM}\frac{1}{\frac{2L}{\sigma_v^2}+2P_s^2(N-L)\lambda_l(\mathbf{T}_G-\mathbf{C}_G)} \notag \\
    =& \mathrm{tr}(\mathbf{A}_1)+\mathrm{tr}(\mathbf{A}_2). 
    \end{align}

\par

\subsection{Derivation of the average stochastic CRB (Proof of Theorem \it{6})} 
\label{app:stochastic_crb}
Matrix ${\bf C}_G$ has at most one non-zero element in each row and each column. Consequently, the non-zero elements of matrix ${\bf A}_1$ consist of the $KM$ diagonal elements, along with $K(K-1)$ off-diagonal elements produced by the block matrices ${\bf C}_{G_{ij}}$ for $i\neq j$.

For $i=1,\cdots K$ and $j=1,\cdots, M$ let $a_{ij}$ denote the $(i-1)M+j$ th diagonal element of matrix ${\bf A}_1$. Then, ${a}_{ij}$ is given by:
\begin{align}
    a_{ij}= \left\{\begin{array}{ll}
        \frac{2PL}{\sigma_v^2} + \frac{4P_s^2(N-L)\sigma_i^2}{(P_s\sigma_i^2+\sigma_v^2)^2}, & \text{if } j = i, \\
        \frac{2PL}{\sigma_v^2} + \frac{2P_s^2(N-L)\sigma_i^2}{(P_s\sigma_i^2+\sigma_v^2)(P_s\sigma_j^2+\sigma_v^2)}, & \text{if } j\neq i \ \text{and} \  j \leq K, \\
        \frac{2PL}{\sigma_v^2} + \frac{2P_s^2(N-L)\sigma_i^2}{(P_s\sigma_i^2+\sigma_v^2)\sigma_v^2}, & \text{if } j>K.
    \end{array}\right.\notag
\end{align}
For $i = 1, \cdots, K$ and $j = 1, \cdots, M$ with $j \neq i$, denote by $\tilde{a}_{i,k}$ the element of matrix ${\bf A}_1$ at row $(i-1)M + j$ and column $k$. Then
\[
\tilde{a}_{i,k} = \left\{
\begin{array}{ll}
2P_s^2(N-L)c_{ij} & \text{if } k = (j-1)M + i \ \text{and} \ j \leq K, \\
0 & \text{otherwise.}
\end{array}
\right.
\]
It follows from the particular structure of matrix ${\bf A}_1$ that at each row and column there are at most two non-zero elements: one diagonal element and possibly one off-diagonal element resulting from the contribution of matrix ${\bf C}_G$. For $i = 1, \cdots, K$ and $j = 1, \cdots, M$, let $a_{ij}^{-1}$ denote the $(i-1)M + j$-th diagonal element of ${\bf A}_1^{-1}$, the inverse of matrix ${\bf A}_1$. Due to the particular structure of ${\bf A}_1$, the diagonal elements of its inverse are computed either by directly inverting the corresponding diagonal elements, or by considering the first diagonal element of the inverse of a $2 \times 2$ matrix. Specifically, for $i = 1, \cdots, K$ and $j = 1, \cdots, M$, $a_{ij}^{-1}$ is given by:
\[
a_{ij}^{-1} = \left\{
\begin{array}{ll}
\frac{1}{a_{ij}},  & \text{if } i = j \ \text{or} \ j > K, \\
\frac{a_{ji}}{a_{ji} a_{ij} -4P_s^4(N-L)^2 c_{ij}^2}, & \text{otherwise},
\end{array}
\right.
\]
where, for $j \leq K$ and $j \neq i$, the element $a_{ij}^{-1}$ is computed as the first diagonal element of the inverse of the $2 \times 2$ matrix $\begin{bmatrix} a_{ij} &2P_s^2(N-L) c_{ij} \\ 2P_s^2(N-L)c_{ij} & a_{ji} \end{bmatrix}$.
Similarly, we can derive the diagonal elements of matrix ${\bf A}_2$. For $i=1,\cdots,K$ and $j=1,\cdots,M$, let $b_{ij}$ the $(i-1)M+j$ th diagonal element of matrix ${\bf A}_2$. Then, $b_{ij}$ is given by:
\begin{align}
b_{ij}=\left\{
\begin{array}{ll}
\frac{2PL}{\sigma_v^2}, &\text{ if } j=i,\\
\frac{2PL}{\sigma_v^2}+\frac{2P_s^2(N-L)\sigma_i^2}{(P_s\sigma_i^2+\sigma_v^2)(P_s\sigma_j^2+\sigma_v^2)}, & \text{if } j\neq i  \ \text{and } j\leq K,\\
\frac{2PL}{\sigma_v^2}+\frac{2P_s^2(N-L)\sigma_i^2}{(P_s\sigma_i^2+\sigma_v^2)\sigma_v^2}, & \text{if } j>K,
\end{array}
\right. \notag
\end{align}
Moreover, letting $b_{ij}^{-1}$, the $(i-1)M+j$-th diagonal element of ${\bf A}_2^{-1}$ with $i=1,\cdots,K$ and $j=1,\cdots,M$, then $b_{ij}^{-1}$ writes as:
\[
b_{ij}^{-1} = \left\{
\begin{array}{ll}
\frac{1}{b_{ij}},  & \text{if } i = j \ \text{or} \ j > K, \\
\frac{b_{ji}}{b_{ji} b_{ij} -4P_s^4(N-L)^2 c_{ij}^2}, & \text{otherwise},
\end{array}
\right.
\]
Using these expressions, we expand $\frac{1}{K}\mathrm{tr}({\bf A}_1^{-1})$ and  $\frac{1}{K}\mathrm{tr}({\bf A}_2^{-1})$ to obtain the final expressions in \eqref{eq:A_1} and \eqref{eq:A_2}. 
\begin{figure*}[htbp]
\begin{align}
&\frac{\mathrm{tr}(\mathbf{A}_1^{-1})}{K} = \frac{1}{K}\sum_{i=1}^K \frac{1}{a_{ii}}+\frac{1}{K}\sum_{i=1}^K \sum_{j=K+1}^{M}\frac{1}{a_{ij}}+\frac{1}{K}\sum_{i=1}^K\sum_{\substack{j=1 \\ j\neq i}}^K\frac{a_{ji}}{a_{ij}a_{ji}-4P_s^4(N-L)^2c_{ij}^2} \notag\\
&=\frac{1}{K}\sum_{i=1}^K\sum_{j=1}^M \frac{1}{a_{ij}}+\frac{1}{K}\sum_{i=1}^K \sum_{\substack{j=1 \\ j\neq i}}^K\frac{4P_s^4(N-L)^2c_{ij}^2\frac{1}{a_{ij}}}{a_{ij}a_{ji}-4P_s^4(N-L)^2c_{ij}^2} =\frac{1}{K}\sum_{i=1}^K\left(\frac{1}{\frac{2PL}{\sigma_v^2}+\frac{4P_s^2(N-L)\sigma_i^2}{(P_s\sigma_i^2+\sigma_v^2)^2}} - \frac{1}{\frac{2PL}{\sigma_v^2}+\frac{2P_s^2(N-L)\sigma_i^2}{(P_s\sigma_i^2+\sigma_v^2)^2}}\right)\notag\\
&+ \frac{1}{K}\sum_{i=1}^K\left(\sum_{j=1}^K\frac{1}{\frac{2PL}{\sigma_v^2}+\frac{2P_s^2(N-L)\sigma_i^2}{(P_s\sigma_i^2+\sigma_v^2)(P_s\sigma_j^2+\sigma_v^2)}} + \frac{M-K}{\frac{2PL}{\sigma_v^2}+\frac{2P_s^2(N-L)\sigma_i^2}{(P_s\sigma_i^2+\sigma_v^2)\sigma_v^2}}\right)\notag \\
&+\frac{1}{K}\sum_{i=1}^K\sum_{j=1}^K \frac{1}{\Big(\frac{2PL}{\sigma_v^2}+\frac{2P_s^2(N-L)\sigma_i^2}{(P_s\sigma_i^2+\sigma_v^2)(P_s\sigma_j^2+\sigma_v^2)}\Big)}\frac{4P_s^4(N-L)^2\sigma_i^2\sigma_j^2}{\Big(\frac{4P^2L^2}{\sigma_v^4}+\frac{4PLP_s^2(N-L)}{\sigma_v^2}\frac{\sigma_i^2+\sigma_j^2}{(P_s\sigma_i^2+\sigma_v^2)(P_s\sigma_j^2+\sigma_v^2)}\Big)}\frac{1}{(P_s\sigma_i^2+\sigma_v^2)^2(P_s\sigma_j^2+\sigma_v^2)^2} \notag\\
&-\frac{1}{K}\sum_{i=1}^K\frac{2P_s^4(N-L)^2\sigma_i^4\sigma_v^2}{(P_s\sigma_i^2+\sigma_v^2)^4PL}\frac{1}{(\frac{2PL}{\sigma_v^2}+\frac{2P_s^2(N-L)\sigma_i^2}{(P_s\sigma_i^2+\sigma_v^2)^2})(\frac{2PL}{\sigma_v^2}+\frac{4P_s^2(N-L)\sigma_i^2}{(P_s\sigma_i^2+\sigma_v^2)^2})}, \label{eq:A_1}
\end{align}
\begin{align}
\frac{\mathrm{tr}({\bf A}_2^{-1})}{K}&=\frac{1}{K}\sum_{i=1}^K \frac{1}{b_{ii}}+\frac{1}{K}\sum_{i=1}^K \sum_{j=K+1}^{M}\frac{1}{b_{ij}}+\frac{1}{K}\sum_{i=1}^K\sum_{\substack{j=1 \\ j\neq i}}^K\frac{b_{ji}}{b_{ij}b_{ji}-4P_s^4(N-L)^2c_{ij}^2} \notag\\
&=\frac{1}{K}\sum_{i=1}^K \sum_{j=1}^{M}\frac{1}{b_{ij}}+\frac{1}{K}\sum_{i=1}^K\sum_{\substack{j=1\\ j\neq i}}^{K}\frac{\frac{1}{b_{ij}}c_{ij}^24P_s^4(N-L)^2}{b_{ij}b_{ji}-4P_s^4(N-L)^2c_{ij}^2} =\frac{\sigma_v^2}{2PL}+\frac{1}{K}\sum_{i=1}^K \sum_{j=1}^K \frac{1}{\frac{2PL}{\sigma_v^2}+\frac{2P_s^2(N-L)\sigma_i^2}{(P_s\sigma_i^2+\sigma_v^2)(P_s\sigma_j^2+\sigma_v^2)}} \notag\\
&-\frac{1}{K}\sum_{k=1}^K \frac{1}{\frac{2PL}{\sigma_v^2}+\frac{2P_s^2(N-L)\sigma_i^2}{(P_s\sigma_i^2+\sigma_v^2)^2}}+\frac{M-K}{K}\sum_{i=1}^K \frac{1}{\frac{2PL}{\sigma_v^2}+\frac{2P_s^2(N-L)\sigma_i^2}{(P_s\sigma_i^2+\sigma_v^2)\sigma_v^2}} +\frac{1}{K}\sum_{i=1}^K\sum_{j=1}^K \frac{1}{\Big(\frac{2PL}{\sigma_v^2}+\frac{2P_s^2(N-L)\sigma_i^2}{(P_s\sigma_i^2+\sigma_v^2)(P_s\sigma_j^2+\sigma_v^2)}\Big)}\notag\\
&\cdot\frac{4P_s^4(N-L)^2\sigma_i^2\sigma_j^2}{\Big(\frac{4P^2L^2}{\sigma_v^4}+\frac{4PLP_s^2(N-L)}{\sigma_v^2}\frac{\sigma_i^2+\sigma_j^2}{(P_s\sigma_i^2+\sigma_v^2)(P_s\sigma_j^2+\sigma_v^2)}\Big)}\frac{1}{(P_s\sigma_i^2+\sigma_v^2)^2(P_s\sigma_j^2+\sigma_v^2)^2} \notag\\
&-\frac{1}{K}\sum_{i=1}^K\frac{2P_s^4(N-L)^2\sigma_i^4\sigma_v^2}{(P_s\sigma_i^2+\sigma_v^2)^4PL}\frac{1}{(\frac{2PL}{\sigma_v^2}+\frac{2P_s^2(N-L)\sigma_i^2}{(P_s\sigma_i^2+\sigma_v^2)^2})(\frac{2PL}{\sigma_v^2}+\frac{4P_s^2(N-L)\sigma_i^2}{(P_s\sigma_i^2+\sigma_v^2)^2})}. \label{eq:A_2}
\end{align}
\hrule
\end{figure*}
Using
\begin{align}
&\frac{1}{K}\sum_{i=1}^K \left(\frac{1}{\frac{2PL}{\sigma_v^2}+\frac{4P_s^2(N-L)\sigma_i^2}{(P_s\sigma_i^2+\sigma_v^2)^2}}-\frac{1}{\frac{PL}{\sigma_v^2}+\frac{P_s^2(N-L)\sigma_i^2}{(P_s\sigma_i^2+\sigma_v^2)^2}}-\right.\notag\\
&\left.\frac{P_s^4(N-L)^2\sigma_i^4\sigma_v^2}{PL(P_s\sigma_i^2+\sigma_v^2)^4(\frac{PL}{\sigma_v^2}+\frac{P_s^2(N-L)\sigma_i^2}{(P_s\sigma_i^2+\sigma_v^2)^2})(\frac{PL}{\sigma_v^2}+\frac{2P_s^2(N-L)\sigma_i^2}{(P_s\sigma_i^2+\sigma_v^2)^2})}\right)\notag\\
&+\frac{\sigma_v^2}{2PL}=0, \label{eq:simp}
\end{align}
we simplify $\overline{\rm CRB}_s^{\rm avg}$ as in \eqref{eq:crb_avg}.
\begin{figure*}[htbp]
\begin{align}
\overline{\rm CRB}_s^{\rm avg}&=\frac{1}{K}\sum_{i=1}^K\sum_{j=1}^K \frac{1}{\frac{PL}{\sigma_v^2}+\frac{P_s^2(N-L)\sigma_j^2}{(P_s\sigma_i^2+\sigma_v^2)(P_s\sigma_j^2+\sigma_v^2)}}+\frac{M-K}{K}\sum_{i=1}^K\frac{1}{\frac{PL}{\sigma_v^2}+\frac{P_s^2(N-L)\sigma_i^2}{(P_s\sigma_i^2+\sigma_v^2)\sigma_v^2}}\notag\\
&+\frac{1}{K}\sum_{i=1}^K\sum_{j=1}^K \frac{P_s^4(N-L)^2\sigma_i^2\sigma_j^2\sigma_v^2}{PL(\frac{PL}{\sigma_v^2}+\frac{P_s^2(N-L)\sigma_i^2}{(P_s\sigma_i^2+\sigma_v^2)(P_s\sigma_j^2+\sigma_v^2)})(\frac{PL}{\sigma_v^2}+\frac{P_s^2(N-L)(\sigma_i^2+\sigma_j^2)}{(P_s\sigma_i^2+\sigma_v^2)(P_s\sigma_j^2+\sigma_v^2)}) }\frac{1}{(P_s\sigma_i^2+\sigma_v^2)^2(P_s\sigma_j^2+\sigma_v^2)^2}\label{eq:crb_avg}\\
&=\frac{M-K}{K}\sum_{i=1}^K \frac{1}{\frac{PL}{\sigma_v^2}+\frac{P_s^2(N-L)\sigma_i^2}{(P_s\sigma_i^2+\sigma_v^2)\sigma_v^2}}+\frac{1}{K}\sum_{i=1}^K\sum_{j=1}^K \frac{\frac{PL}{\sigma_v^2}+\frac{P_s^2(N-L)\sigma_j^2}{(P_s\sigma_j^2+\sigma_v^2)(P_s\sigma_i^2+\sigma_v^2)}}{PL(\frac{PL}{\sigma_v^4}+\frac{P_s^2(N-L)(\sigma_i^2+\sigma_j^2)}{\sigma_v^2(P_s\sigma_i^2+\sigma_v^2)(P_s\sigma_j^2+\sigma_v^2)})}.
\end{align}
\hrule
\end{figure*}
To facilitate asymptotic analysis of the stochastic average CRB, we shall express it in terms of the empirical Stieltjes transform of matrix ${\bf G}^{H}{\bf G}$ defined as:
$$
m_{{\bf G}^{H}{\bf G}}(x)=\frac{1}{K}\sum_{j=1}^K \frac{1}{\sigma_i^2-x}.
$$
Introducing this notation, we obtain the expression of the average stochastic CRB in (\ref{eq:stochastic_crb}). 

%\section{Proof of Theorem 

%\label{app:sto_assumption1}
\bibliographystyle{IEEEtran}
\bibliography{references}
\end{document}